\newenvironment{proof}{\begin{IEEEproof}}{\end{IEEEproof}}
\newcommand*{\QEDA}{\hfill\ensuremath{\blacksquare}}
\DeclareMathOperator*{\defeq}{\triangleq}
\newtheorem{lemma}{Lemma}
\newtheorem{proposition}{Proposition}
\newcommand{\bit}{\begin{itemize}}
\newcommand{\eit}{\end{itemize}}
\newcommand{\bc}{\begin{center}}
\newcommand{\ec}{\end{center}}
\newcommand{\ba}{\begin{array}}
\newcommand{\ea}{\end{array}}
\newcommand{\beq}{\begin{equation}}
\newcommand{\eeq}{\end{equation}}
\newcommand{\beqn}{\begin{equation*}}
\newcommand{\eeqn}{\end{equation*}}
\newcommand{\bean}{\begin{eqnarray*}}
\newcommand{\eean}{\end{eqnarray*}}
\newcommand{\bea}{\begin{eqnarray}}
\newcommand{\eea}{\end{eqnarray}}
\def\C{\mathbb{C}}
\def\E{\mathbb{E}}
\def\dv{\boldsymbol{d}}
\def\hv{\boldsymbol{h}}
\def\vv{\boldsymbol{v}}
\def\xv{\boldsymbol{x}}
\newtheorem{remark}{Remark}
\def\blfootnote{\gdef\@thefnmark{}\@footnotetext}
\begin{document}
\sloppy

\title{Fundamental Limits of Cloud and Cache-Aided Interference Management \\ with Multi-Antenna Edge Nodes}
\author{
   \IEEEauthorblockN{Jingjing Zhang and Osvaldo Simeone}
	}
\maketitle

\thispagestyle{empty}

%
%

\begin{abstract}
In fog-aided cellular systems, content delivery latency can be minimized by jointly optimizing edge caching and transmission strategies. \blfootnote{The authors are with the Department of Informatics, King's College London, London, UK (emails: jingjing.1.zhang@kcl.ac.uk, osvaldo.simeone@kcl.ac.uk).}In order to account for the cache capacity limitations at the Edge Nodes (ENs), transmission generally involves both fronthaul transfer from a cloud processor with access to the content library to the ENs, as well as wireless delivery from the ENs to the users. In this paper, the resulting problem is studied from an information-theoretic viewpoint by making the following practically relevant assumptions: 1) the ENs have multiple antennas; 2) only uncoded fractional caching is allowed; 3) the fronthaul links are used to send fractions of contents; and 4) the ENs are constrained to use one-shot linear precoding on the wireless channel. Assuming offline proactive caching and focusing on a high signal-to-noise ratio (SNR) latency metric, the optimal information-theoretic performance is investigated under both serial and pipelined fronthaul-edge transmission modes. The analysis characterizes the minimum high-SNR latency in terms of Normalized Delivery Time (NDT) for worst-case users' demands. The characterization is exact for a subset of system parameters, and is generally optimal within a multiplicative factor of $3/2$ for the serial case and of $2$ for the pipelined case. The results bring insights into the optimal interplay between edge and cloud processing in fog-aided wireless networks as a function of system resources, including the number of antennas at the ENs, the ENs' cache capacity and the fronthaul capacity.
\end{abstract}

\begin{IEEEkeywords}
Fog, cloud, cellular system, edge caching, interference management.
\end{IEEEkeywords}

\section{Introduction}
Content delivery is one of the most important use cases for mobile broadband services in 5G networks. A key technology that promises to help minimize delivery latency and network congestion is edge caching, which relies on the storage of popular contents at the ENs, i.e., at the base stations or access points. Initial works on the subject \cite{SGDFC:13} studied the advantages of edge caching in terms of cache hit probability, hence adopting the standard performance criteria used in the networking literature (see e.g., \cite{SMMPLS:13}). The information-theoretic analysis of edge caching, which has been undertaken in the past few years starting with \cite{MN:15isit}, has instead concentrated on the impact of the cached content distribution across the ENs on the ENs' capability to carry out interference management (see also \cite{LLC:14}).
As a general observation, caching the same content across multiple ENs enables cooperative delivery strategies involving multiple ENs, whereas properly placed distinct contents can yield coordination opportunities \cite{MN:15isit}. The relative effect of interference management via coordination or cooperation on content delivery is best studied in the high-SNR regime, in which the performance is limited by interference, as done in \cite{MN:15isit,NMA:17}.

In most existing works, as further discussed below, the high-SNR analysis of the interference management capabilities of cache-aided systems was performed under the assumption that the overall cache capacity available in the system, including at the users, is sufficient to store the entire library of popular contents. When this assumption is violated, contents need to be retrieved from a content server by leveraging transport links that connect the ENs to the access or core networks. This more general scenario was first studied from an information-theoretic perspective in \cite{TS:16,STS:17}. In these works, a cloud processor is assumed to be connected to the ENs via so called fronthaul links, as seen in Fig.~\ref{fig:model}.

For the model in Fig.~\ref{fig:model}, which is referred to as Fog-Radio Access Network (F-RAN), the key design problem concerns the optimal use of fronthaul and wireless edge resources for caching and delivery. Assuming the standard offline caching scenario with static popular set, reference \cite{STS:17} identified high-SNR optimal caching and delivery strategies within a multiplicative factor of two. 
The approximately optimal scheme in \cite{STS:17} relies on both Zero-Forcing (ZF) one-shot precoding and interference alignment for transmission on the wireless edge channel and on cloud precoding and on quantization \cite{QPSY:17}.
In this work, we revisit the results in \cite{STS:17} by making the following practically relevant assumptions: 1) the ENs have multiple antennas; 2) only uncoded fractional caching is allowed; 3) the fronthaul links can only be used to send uncoded fractions of contents; and 4) the ENs are constrained to use linear precoding on the wireless channel. 

\begin{figure}[t!] 
  \centering
\includegraphics[width=0.35\columnwidth]{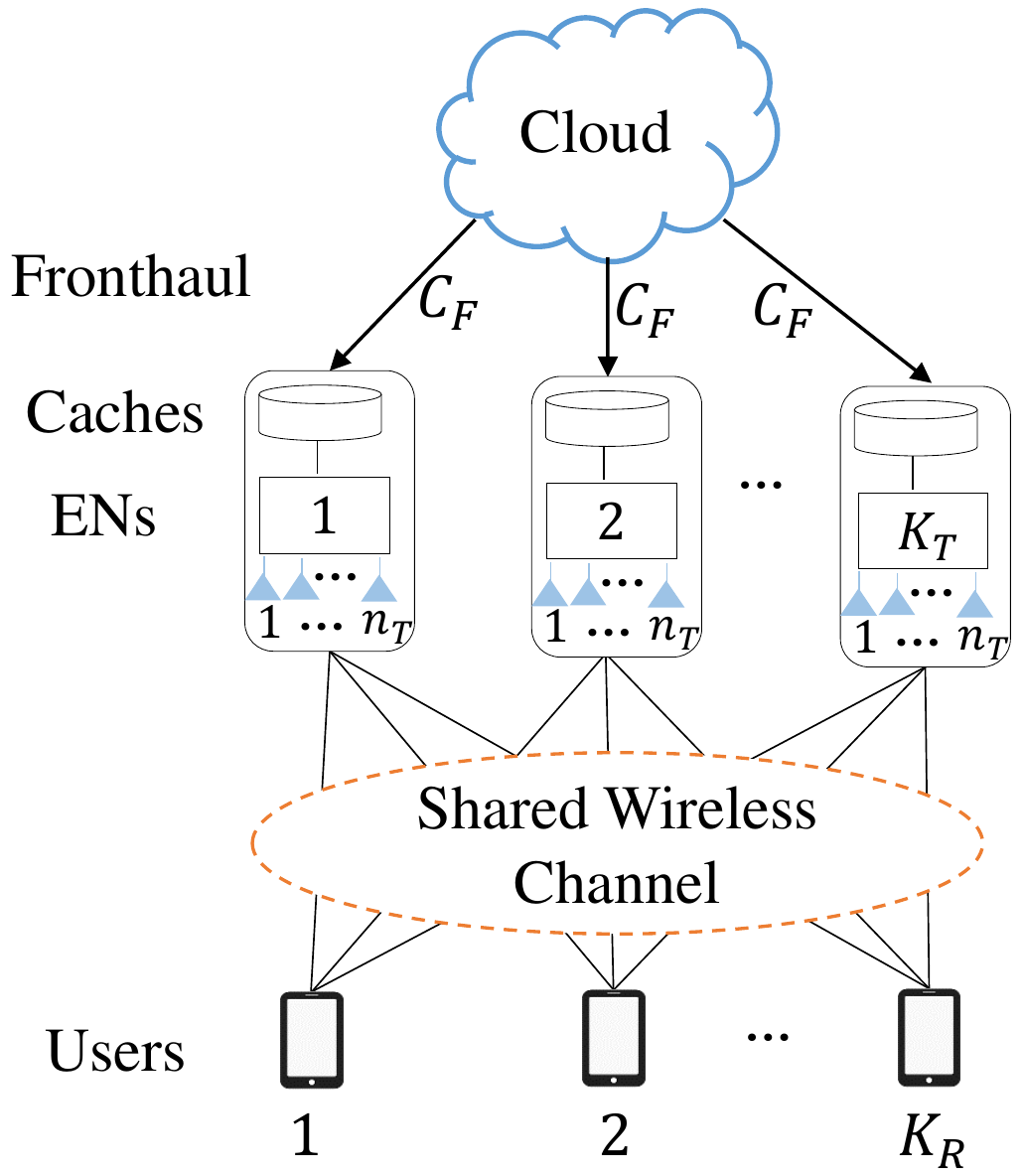}
\caption{Cloud and cache-aided F-RAN system model with multi-antenna ENs.}
\label{fig:model}
\end{figure}

\textbf{Related Work:}
Assuming offline caching, cache-aided interference management was first studied in \cite{MN:15isit}, in which transmitter-side caches are considered, and a delivery strategy is proposed by leveraging interference alignment, ZF precoding and interference cancellation. 
Extensions that account for caching at both transmitter and receiver sides were provided in \cite{CTXL:17,HND:16,RGT:17,NMA:17,NMA:17ICC}. In \cite{HND:16}, a novel strategy based on the separation of physical and network layers is investigated. Under the assumption of one-shot linear precoding, references \cite{NMA:17, NMA:17ICC} reveal that the transmitters' caches and receivers' caches contribute equally to the high-SNR performance. More general lower bounds were derived in \cite{STS:16CISS} (see also \cite{STS:17}), for which matching upper bounds were found in \cite{TS:16,RMTG:17} in special cases. Decentralized caching is studied in \cite{GENE:17}. In \cite{TG:16,MGL:17}, the performance of edge caching with partial connectivity is studied without any channel state information (CSI), while imperfect CSI is considered in \cite{KACSP:18,ZEinterplay:17}. 

The joint design of cloud processing and edge caching for the F-RAN model was studied in references \cite{TS:16,STS:17} and then in \cite{KGAS:17,GSP:17,SOAR:17}, by focusing on the high-SNR latency performance metric known as Normalized Delivery Time (NDT) proposed in \cite{STS:17}. This metric essentially measures the inverse of the number of degrees of freedom (DoF) \cite{TS:16,STS:17}. Reference \cite{STS:17} characterizes the minimum NDT with a multiplicative factor of 2 by considering single-antenna ENs, intra-file coding for caching and general delivery strategies on fronthaul and edge channels. References \cite{SMOR:17,KGAS:17} studied a related scenario with coexisting macro- and small-cell base stations. The work \cite{RTG:18} studied an F-RAN model with decentralized placement algorithms. In contrast to the abovementioned works, references \cite{SOAR:17,ASST:17} investigated online caching in the presence of a time-varying set of popular files.
Overall, the set-up studied in this work extends the model in \cite{NMA:17} by including cloud processing, fronthauling, and multi-antenna ENs, but, unlike \cite{NMA:17}, it excludes caching at the receivers' sides.

\textbf{Main contributions:}
This paper investigates interference management in a cloud and cache-aided F-RAN model, as illustrated in Fig.~1, with multiple antennas at the transmitters, under the assumptions of one-shot linear precoding and transmission of uncoded contents on the fronthaul links. The NDT is adopted as a high-SNR delivery latency metric. The main contributions are summarized as follows.
\begin{itemize}
  \item We first derive upper bounds on the minimum NDT under the assumption of serial fronthaul-edge transmission when only edge caching or only fronthaul resources (and no cache capacity) are available for delivery. In the serial delivery mode, fronthaul transmission is followed by edge transmission. The proposed schemes use clustered ENs' cooperation via ZF beamforming to cancel interference on the wireless edge channel. Cooperation is enabled by contents shared thanks to edge caching during placement phase, or by fronthaul transmissions in the delivery phase. The caching and fronthauling strategies rely on an efficient packetization method that is separately at most linear in the number of transmitters and receivers. 
	\item For the general F-RAN set-up, an upper bound on the minimum NDT is derived as a function of the cache storage capacity, the fronthaul rate, and the number of ENs' antennas. To this end, we propose a caching and delivery scheme that manages interference via ZF by means of the ENs' clustered cooperation as enabled by both fronthaul and edge caching resources. Then, an information-theoretic lower bound on the minimum NDT is derived. As a result, the minimum NDT is characterized exactly for a large subset of system parameters, and approximately within a multiplicative factor of $3/2$ for any value of the parameters. 
  \item We finally study a pipelined delivery mode whereby fronthaul and edge transmissions can take place simultaneously. We show that the NDT under pipelined transmission can be improved as compared to serial delivery, and the minimum NDT is derived within a multiplicative factor of 2.
\end{itemize}
The rest of the paper is organized as follows. Section~\ref{sec:model} describes a general $K_R\times K_T$ F-RAN model and the NDT performance metric for serial fronthaul-edge delivery. Section~\ref{sec:edge} and Section~\ref{sec:fronthaul} study the specific set-ups of edge-only and fronthaul-only F-RAN, respectively, while the general F-RAN set-up is investigated in Section~\ref{sec:edgeft}. Under serial transmission, the upper bounds and the proposed scheme are presented along with a finite optimality gap from the minimum NDT. The pipelined fronthaul-edge transmission is discussed in Section~\ref{sec:pipeline}, where upper and lower bounds on the minimum NDT are presented. Section~\ref{sec:con} concludes the work and also highlights future research directions. 

\textbf{Notation:}
For any integer $K$, we define the set $[K]\defeq\{1,2,\cdots,K\}$. For a set $A$, $|A|$ represents the cardinality. We use the notation $\{f_n\}_{n=1}^{N}\defeq\{f_1,\cdots, f_n, \cdots,f_N\}$. 
For function $g(n)$, the notation $f(n)=o(g(n))$ denotes a function $f(n)$ that satisfies the limit $\lim_{n\rightarrow \infty} (f(n)/g(n))=0$. The ceiling function $\lceil x \rceil $ maps $x$ to the least integer that is greater than or equal to $x$, and the floor function $\lfloor x \rfloor $ maps $x$ to the greatest integer that is less than or equal to $x$. Moreover, the nearest positive integer function $[x]$ returns the nearest positive integer to $x$. We also have $(x)^+\defeq \max\{x,0\}$.

\section{System Model and Performance Metric} \label{sec:model}
In this section, we present the model under study, which consists of an F-RAN system with multi-antenna ENs that performs the hard transfer of uncoded contents on the fronthaul links and one-shot linear precoding on the wireless edge channel. We also adapt the NDT metric \cite{STS:17} to this model. We consider the serial mode of delivering across fronthaul and edge channels, and discuss the pipelined mode in Section~\ref{sec:pipeline}.

\subsection{System Model} \label{model}
We consider the F-RAN model shown in Fig.~\ref{fig:model} where a set $\mathcal{K}_T=\{1,\cdots,K_T\}$ of ENs, each having $n_T$ antennas, are connected to $K_R$ single-antenna receivers $\mathcal{K}_R=\{1,\cdots,K_R\}$ through a shared wireless channel, as well as to a cloud processor (CP) via fronthaul links. The CP has access to a library of $N$ files $\{W_n\}_{n=1}^{N}$, of $L$ bits each. Any file $W_n$ contains $F$ packets $\mathcal{W}_n =\{W_{nf}\}_{f=1}^F$, where each packet $W_{nf}$ is of size $L/F$ bits, and $F$ is an arbitrary parameter. Note that we refer to the set of packets $\{W_{nf}\}_{f=1}^F$ in file $W_{n}$ as $\mathcal{W}_n$. Each fronthaul link has capacity $C_F$ bits per symbol, where a symbol refers to a channel use of the wireless channel, and each EN has a cache with capacity of $\mu NL$ bits, with $\mu \in [0,1]$. Parameter $\mu$ is referred to as the fractional cache size.

In the \emph{pre-fetching phase}, the caches of the ENs are pre-filled with content from the library under the cache capacity constraints. The content of the cache of each EN $i$ is described by the set $\mathcal{C}_i=\{\mathcal{C}_{i1}, \cdots,\mathcal{C}_{in}, \cdots,\mathcal{C}_{iN}\}$, where $\mathcal{C}_{in} \subseteq \mathcal{W}_n$ represents the subset of packets from file $W_n$ that are cached at EN $i$. Due to the cache capacity constraint, its size must satisfy the inequality 
\begin{align} \label{cap:cache}
\frac{|\mathcal{C}_{in}|}{F} \leq \mu.
\end{align} 
Note that as in~\cite{NMA:17}, the model at hand allows for no coding of the cached content either within or across files. 

In the \emph{delivery phase}, each user $k$ requests a file $W_{d_k}$, with $d_k \in [N]$, from the library. Given the request vector $\dv=\{d_1, \cdots, d_{K_R}\}$ and the CSI on the edge channel, to be discuss below, the CP transmits information about the requested files $\{W_{d_1},\cdots,W_{d_{K_R}}\}$ to the ENs via the fronthaul links. Specifically, on each fronthaul $i$, the set $\mathcal{F}_i=\{\mathcal{F}_{id_1}, \cdots, \mathcal{F}_{id_{K_R}}\}$ of packets is sent, where $\mathcal{F}_{id_k}\subseteq \mathcal{W}_{d_k}$ is a subset of packets from file $W_{d_k}$. Note that, as mentioned, the described model assumes hard-transfer fronthauling of uncoded packets. 
After the fronthaul transmission, any EN $i$ has access to the fronthaul information $\mathcal{F}_i$, as well as to the cached content $\mathcal{C}_i$. This information is used by the ENs to deliver the users' requests $\{W_{d_1},\cdots,W_{d_{K_R}}\}$  through the wireless channel. 

To this end, we constrain the wireless transmission strategy to one-shot linear precoding by following \cite{NMA:17}. Accordingly, wireless transmission takes place over $B$ blocks to deliver the $K_R F$ desired packets. In any block $b\in[B]$, the ENs send a subset of the requested packets, denoted by $\mathcal{D}(b) \subseteq \{W_{d_1f},\cdots, W_{d_Kf}\}_{f=1}^{F}$, to a subset $\mathcal{R}(b)$ of $K_T$ users, such that each user in $\mathcal{R}(b)$ can decode exactly one packet without interference by the end of the block. To this purpose, in any block $b$, each EN $i$ sends a linear combination of the subset of the packets in $\mathcal{D}(b)$ that it has access to in its cache, i.e., in $\mathcal{C}_i$, or that it has received on the fronthaul link, i.e., in $\mathcal{F}_i$. For any given symbol within the block, the transmitted signal of EN $i$ is hence given as 
\begin{align}
\xv_i(b) = \sum_{\substack {{(n,f):} \\ {W_{nf}\in  \mathcal{D}(b) \cap\{  \mathcal{C}_i\cup\mathcal{F}_i\}}}}  \vv_{inf}(b) s_{nf}(b),
\end{align}
where $s_{nf}(b)$ is a coded symbol for packet $W_{nf}$, and $\vv_{inf}(b) \in \C^{n_T\times 1}$ is the precoding vector for the same file. As we have described, each packet $W_{nf}\in  \mathcal{D}(b) \cap\{  \mathcal{C}_i\cup \mathcal{F}_i\}$ is intended for a single user in $\mathcal{R}(b)$. We impose the power constraint  $\E[||\xv_i(b)||^2] \leq P$.

The received signals of each user $k\in \mathcal{R}(b)$ in block $b$ is given as
\begin{align}
y_k(b)= \sum_{i=1}^{K_T} \hv_{ki}^T(b) \xv_i(b)+ z_k(b), 
\end{align}
where $\hv_{ki}(b) \in \C^{n_T\times 1}$ is the channel vector between EN $i$ and user $k$, and $z_k(b)$ is the zero-mean complex Gaussian noise with normalized unitary power. The channels $\{\hv_{ki}(b)\}_{k\in[K_R], i\in[K_T], b\in[B]}$ are arbitrary as long as the set $\{\hv_{ki}(b)\}_{k\in[K_R], i\in[K_T]}$ is linearly independent for each block $b$. In each block $b$, we assume that all the ENs and users have access to the full CSI $\{\hv_{ki}(b)\}_{k\in[K_T], i\in[K_R]}$ as necessary. The delivery of the packets in the set $\mathcal{D}(b)$ is said to be achievable if there exist precoding vectors $\{\vv_{inf}(b)\}$, such that, with full CSI, each user $k\in \mathcal{R}(b)$ can decode without interference its intended packet. Given that the users have a single antenna, this happens if the received signal $y_k(b)$ is directly proportional to the desired symbol $s_{nf}(b)$ plus additive Gaussian noise with constant power, i.e., not scaling with the signal power $P$. The resulting point-to-point interference-free channel from the ENs to each served user $k$ supports transmission at rate $\log(P)+o(\log(P))$.

\subsection{Performance Metric: NDT}
Consider a given policy defined by the parameters $\{\mathcal{C}_i, \mathcal{F}_i, \{\vv_{inf}(b)\}_{n\in[N], f\in[F],b\in[B]}\}_{i=1}^{K_T}$, where the fronthaul messages $\{\mathcal{F}_i\}_{i=1}^{K_T}$ and the beamforming vectors $\{\vv_{inf}(b)\}_{n\in[N], f\in[F],b\in[B], i\in[K_T]}$ are defined on request vector $\dv$ and CSI $\{\hv_{ki}(b)\}_{k\in[K_R], i\in[K_T], b\in[B]}$. Given the fronthaul messages defined by the subsets $\{\mathcal{F}_i\}_{i=1}^{K_T}$, the time required for fronthaul transmission can be computed as  
\begin{align}
T_F = \max_{i\in[K_T]}\frac{|\mathcal{F}_i|L}{F}\frac{1}{C_F}, \label{def:TF}
\end{align}
since $|\mathcal{F}_i|$ packets with $|\mathcal{F}_i|L/F$ bits need to be delivered to EN $i$ over a fronthaul link of capacity $C_F$ and $T_F$ is the maximum among the $K_T$ fronthaul latencies. Furthermore, given the delivered packet set $\{\mathcal{D}(b)\}_{b=1}^{B}$, the total time needed for wireless edge transmission over $B$ blocks is  
\begin{align}
T_E=\frac{BL}{F}\frac{1}{(\log(P) + o(\log(P))}. \label{def:TE}
\end{align}
This is because, in each of the $B$ blocks, one packet with $L/F$ bits is sent to each user in $\mathcal{R}(b)$ at rate $\log(P) + o(\log(P))$.

As in~\cite{STS:17}, we normalize the latency by the term $L/\log(P)$. This corresponds to the transmission latency, neglecting  $o(\log(P))$ terms, for a reference system that transmits interference-free to all users at the maximum rate $\log(P)$. Moreover, as in~\cite{STS:17}, we evaluate the impact of the fronthaul capacity $C_F$ in the high-SNR regime by using the scaling $C_F = r\log(P)$, so that the parameter $r$ measures the ratio between the fronthaul capacity and the interference-free wireless channel capacity to any user. Accordingly, we define the fronthaul NDT of the given policy as 
\begin{align}
\delta_F = \lim_{P\rightarrow\infty} \lim_{L\rightarrow\infty} \frac{T_F}{L/\log(P)}= \max_{i\in[K_T]} \frac{|\mathcal{F}_i|}{F r}, \label{def:DF}
\end{align}
and the edge NDT as
\begin{align}
\delta_E =\lim_{P\rightarrow\infty} \lim_{L\rightarrow\infty}  \frac{T_E}{L/\log(P)} =  \frac{B}{F}. \label{def:DE}
\end{align}
Assuming serial fronthaul and edge transmission, the overall NDT is given as 
\begin{align} \label{def:deltaach}
\delta = \delta_E+\delta_E.
\end{align}
For any pair $(\mu,r)$, the minimal NDT across all achievable policies $\{\mathcal{C}_i, \mathcal{F}_i, \{\vv_{inf}(b)\}_{n\in[N], f\in[F],b\in[B]}\}_{i=1}^{K_T}$ is defined as 
\begin{align}
\bar{\delta}(\mu,r) = \inf\{\delta(\mu,r): \delta(\mu,r)~ \text{is achievable for some}~F \geq 1\}. \label{def:deltabar}
\end{align}
Note that in the definition \eqref{def:deltabar}, we allow for a partition of the files in an arbitrary number of $F$ packets. By construction, we have the inequality $\bar{\delta}(\mu,r) \geq 1$, where the lower bound is achieved in the mentioned ideal system. By allowing for time sharing among different policies, we finally define the minimum NDT as 
\begin{align}
\delta^*(\mu,r) = \text{l.c.e.}(\bar{\delta}(\mu,r)). \label{def:delta}
\end{align}
where the lower convex envelope (l.c.e.)\footnote{The l.c.e, is the supremum of all convex functions that lie under the given function.} is computed throughout this paper by considering $\bar{\delta}(\mu,r)$ as a function of $\mu$. The achievability of $\delta^*(\mu,r)$ given the achievable NDT $\bar{\delta}(\mu,r)$ follows by a standard cache and time-sharing argument, which is detailed in [7, Lemma 1].

\section{Achievable NDT for Edge-only Caching} \label{sec:edge}
As a preliminary result to be leveraged in Section~\ref{sec:edgeft}, here we describe an achievable NDT (Proposition 1) and the corresponding caching and delivery scheme for the described F-RAN model with edge-caching only, i.e., with zero fronthaul rate $(r=0)$. Note that in this regime, the condition $\mu \geq 1/K_T$ needs to be satisfied in order to ensure a finite NDT. In fact, otherwise, contents could not be fully cached across the ENs. The proposed scheme uses clustered ZF cooperation as in \cite{NMA:17} but via a more efficient packetization method.

\subsection{Achievable NDT} \label{subsec:NDTedge}
 
In the proposed scheme, in each block $b$, a cluster of ENs serve a given number $u$ of users by using cooperative ZF precoding on the wireless channel. Cooperation at the ENs via ZF is enabled by the availability of shared contents across the caches of the ENs. To quantify the content availability at the ENs, we define the multiplicity $m$ of any file as the number of times that the file appears across all the ENs. Via edge caching, a multiplicity $m(\mu)=\lfloor\mu K_T\rfloor$ can be ensured for all contents by edge caching since $\mu K_T$ is the per-file cache capacity across all the ENs. Given the multiplicity $m$, it will be shown that contents can be allocated so that clusters of $m$ ENs can transmit cooperatively in each block to serve up to $mn_T$ users on interference-free channels via ZF beamforming. Note that $mn_T$ is in fact the total number of transmit antennas available at a cluster of $m$ ENs. Hence, the number of users that can be served via ZF in each block is given as
\begin{align}\label{uedge}
u(m) = \min \{mn_T, K_R\}. 
\end{align}
Note that, by \eqref{uedge}, the multiplicity $m$ of a content can be upper bounded without loss of optimality by 
\begin{align} \label{mmax}
m_{max}= \min\left\{K_T,\left\lceil   \frac{K_R}{n_T} \right\rceil  \right\}.
\end{align}
This is because, when the multiplicity reaches $m_{max}$, the ENs can cooperate in each block via ZF beamforming to completely eliminate inter-user interference for the maximum number of users, which is given by $\min\{K_T n_T, K_R\}$. 
The resulting achievable NDT is presented in the following proposition. 
\begin{proposition} \label{pro:NDTedge}
For an F-RAN system with any cache capacity $\mu \in [1/K_T, 1]$ and fronthaul rate $r=0$, we have the upper bound on the minimum NDT $\delta^*(\mu,r=0)\leq \delta_{E}(m(\mu))$, where we have defined the edge NDT as a function of the multiplicity $m$ as
\begin{align} \label{ach:edge}
\delta_{E}(m) = \frac{K_R}{u(m)},
\end{align}
with function $u(m)$ in \eqref{uedge}, and the multiplicity
\begin{align} \label{mmu}
m(\mu)=\min\{ \lfloor \mu K_T \rfloor, m_{max}\}.
\end{align}
\end{proposition}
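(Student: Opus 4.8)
The plan is to exhibit, for every $\mu\in[1/K_T,1]$, an explicit edge-caching-and-delivery policy that uses no fronthaul and attains NDT exactly $K_R/u(m(\mu))$. Since the fronthaul is then idle, $\delta_F=0$ and $\delta=\delta_E$, and since by definition $\delta^*(\mu,0)$ is the lower convex envelope of $\bar\delta(\mu,0)$ it lies pointwise below $\bar\delta(\mu,0)$; hence a single achievable policy with $\delta=K_R/u(m(\mu))$ already gives $\delta^*(\mu,0)\le\delta_E(m(\mu))$, which is the claim. Throughout I write $m\defeq m(\mu)=\min\{\lfloor\mu K_T\rfloor,m_{max}\}$; the hypothesis $\mu\ge 1/K_T$ ensures $m\ge 1$, and $u(m)=\min\{mn_T,K_R\}$ as in \eqref{uedge}.

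\textbf{Construction.} For the \emph{placement}, I would index the ENs cyclically by $\ints/K_T\ints$, split each file $W_n$ into $F=u(m)\,K_T$ equal packets $W_{n,(f,j)}$ with $f\in[K_T]$ and $j\in[u(m)]$, and cache $W_{n,(f,j)}$ at the $m$ consecutive ENs forming the window $\mathcal S_f\defeq\{f,f+1,\dots,f+m-1\}\bmod K_T$. Each EN then stores $m$ of the $K_T$ label-blocks of every file, a fraction $m/K_T\le\lfloor\mu K_T\rfloor/K_T\le\mu$, so \eqref{cap:cache} holds; moreover each window cluster $\mathcal S_f$ has $mn_T$ transmit antennas and jointly stores the packet $W_{n,(f,j)}$ of \emph{every} file. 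For the \emph{delivery} I would run $K_T$ phases, one per window $f$: in phase $f$ only the ENs of $\mathcal S_f$ transmit and they cooperatively deliver to every user $k$ the $u(m)$ packets $\{W_{d_k,(f,j)}\}_{j\in[u(m)]}$ that they all hold. A phase must send $K_R u(m)$ packets, $u(m)$ at a time (one packet to each of $u(m)$ distinct users), so it occupies $K_R$ blocks once the user schedule is balanced --- e.g.\ block $b$ of the phase serving users $\{b,\dots,b+u(m)-1\}\bmod K_R$ makes the user--block incidence $u(m)$-regular, and each user's $u(m)$ packets are then assigned one per block among the $u(m)$ blocks serving it. This yields $B=K_T K_R$ blocks in all, so $\delta_E=B/F=K_T K_R/(u(m)K_T)=K_R/u(m)$, and every user recovers all $F$ packets of its demand.

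\textbf{ZF feasibility and conclusion.} In each block of phase $f$, the cluster $\mathcal S_f$ acts as one array of $mn_T$ antennas serving $u(m)\le mn_T$ single-antenna users; stacking these users' channels to the $\mathcal S_f$-antennas gives a $u(m)\times mn_T$ matrix of full row rank under the channel assumption of Section~\ref{model}, so each served user admits a one-shot ZF precoder in the nonempty null space of the other $u(m)-1$ served users' channels. Each transmitting EN $i\in\mathcal S_f$ needs only the packets $\{W_{d_k,(f,j)}\}$, all of which lie in $\mathcal C_i$, so the transmit-signal model is respected. Hence every served user sees its intended symbol interference-free at rate $\log P+o(\log P)$; the policy is achievable with $\delta=\delta_F+\delta_E=0+K_R/u(m)=\delta_E(m(\mu))$, which by the reduction above proves the proposition.

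\textbf{Main obstacle.} The one genuinely non-routine step is the joint placement/schedule design: it must simultaneously (i) keep the per-EN cached fraction $\le\mu$, (ii) guarantee that every cluster asked to cooperate in a block shares all the packets transmitted in that block, and (iii) keep the number of packets $F$ small --- linear in $K_T$ (and in $u(m)\le K_R$) rather than the $\binom{K_T}{m}$ of a naive subset-indexed placement --- while \emph{still} making each phase's user schedule perfectly balanced, so that $B/F$ equals $K_R/u(m)$ with no ceiling loss. The cyclic-window placement together with the regular round-robin schedule is exactly what reconciles all three requirements; once it is in place, the one-shot ZF feasibility follows routinely from the genericity of the channels.
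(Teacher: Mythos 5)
Your construction is correct and in essence identical to the paper's own proof: cyclic clusters of $m$ contiguous ENs that share cached parts of every file, one-shot cooperative ZF serving $u(m)=\min\{mn_T,K_R\}$ users per block, and a balanced cyclic user schedule giving $\delta_E=B/F=K_R/u(m)$ with $\delta_F=0$, hence $\delta^*(\mu,0)\leq K_R/u(m(\mu))$. The only difference is bookkeeping: you use all $K_T$ cyclic windows with $F=u(m)K_T$ packets and a shift-by-one user schedule over $K_R$ blocks per phase, whereas the paper uses $\mathrm{l.c.m.}$-based counts ($F_C=\mathrm{l.c.m.}(m,K_T)/m$ clusters spaced by $m$ and $B_D=\mathrm{l.c.m.}(u(m),K_R)/u(m)$ user groups spaced by $u(m)$), which can give a smaller $F$ but yields the same NDT.
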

\begin{proof}
The proof is reported in Section~\ref{sketchedge}.
\end{proof}

\subsection{Examples} \label{ex:edge}

Before proving a sketch of proof, we discuss two examples that illustrate the achievable cache-aided delivery scheme. We consider an F-RAN model with $r=0, K_T=4$ ENs and $n_T=2$ per-EN antennas (see Fig.~\ref{fig:schemee}). 

\emph{Example 1}. Consider $K_R=4$ and $\mu=0.5$, so that, by \eqref{mmu}, we have the multiplicity $m(\mu)=2$ in \eqref{mmu}. Each library file $W_n$ is divided into $F=2$ equal and disjoint packets $\{W_{n1}, W_{n2}\}$. As illustrated in Fig.~\ref{fig:schemee}(a), in the caching phase, the ENs are divided into two clusters of $m(\mu)=2$ ENs each: the first cluster, consisting of EN 1 and 2, caches the first packets $\{W_{n1}\}_{n=1}^{N}$, while the second cluster, consisting of EN 3 and 4, caches the second packets $\{W_{n2}\}_{n=1}^{N}$. As a result, we have the subsets $\mathcal{C}_1=\mathcal{C}_2=\{W_{n1}\}_{n=1}^{N}$, and $\mathcal{C}_3=\mathcal{C}_4=\{W_{n2}\}_{n=1}^{N}$. In the delivery phase, for any demand vector $\dv$, the ENs in the first cluster can send packets $\{W_{d_k1}\}_{k=1}^{4}$ cooperatively via ZF to all the $u(m)=K_R=4$ users at a time in one block, since the cluster collectively has four antennas. In a similar manner, packets $\{W_{d_k2}\}_{k=1}^{4}$ can be delivered by EN 3 and 4 to all the users in one block. The resulting NDT in \eqref{def:DE} is $\delta_E=B/F=2/2=1$.

\emph{Example 2}. Consider now a cache capacity $\mu=3/4$. In this case, the multiplicity \eqref{mmu} equals $m(\mu)=3$, which is not a divisor of $K_T$. This requires a more complex placement strategy that accounts for the need to define clusters of $m(\mu)=3$ cooperative ENs. To this end, in the proposed scheme, for caching, each file $W_{n}$ is split into $F_C=4$ disjoint parts of equal size, i.e., $W_n=\{W_{ni}\}_{i=1}^{4}$. As seen in Fig.~\ref{fig:schemee}(b), the caching policy places each part $W_{ni}$ at a contiguous cluster of $m(\mu)=3$ ENs, where contiguity is defined in a circular manner with respect to the EN index in set $\mathcal{K}_T$. More concretely, the ENs are clustered into four subsets, defined as $\mathcal{K}_{T1}=\{1,2,3\}$, $\mathcal{K}_{T2}=\{4,1,2\}$, $\mathcal{K}_{T3}=\{3,4,1\}$, and $\mathcal{K}_{T4}=\{2,3,4\}$. For any popular file $W_n$, part $W_{ni}$ is placed at all ENs in $\mathcal{K}_{Ti}$ for $i=1,2,3,4$.

Consider the worst-case request of $K_R$ distinct files. The clusters $\{\mathcal{K}_{Ti}\}_{i=1}^{4}$ of ENs are activated in turn to transmit all the requested parts $\{W_{d_ki}\}_{i=1}^{4}$ to all $k\in \mathcal{K}_R$ user. Since each EN has $n_T=2$ antennas, with EN cooperation among the three ENs in each cluster, up to $mn_T=6$ users can be served simultaneously in each block. If $K_R$ is smaller than $m n_T=6$ or a multiple thereof, it is hence possible to serve groups of $u(m)=\min\{m n_T,K_R\}$ distinct users in each block. 

Suppose now instead that we have $K_R=8$, which does not satisfy this condition. In a manner similar to the definition of the clusters of ENs, we define $B_D=4$ groups of six users $\mathcal{K}_{R1}=\{1,2,3,4,5,6\}$, $\mathcal{K}_{R2}=\{7,8,1,2,3,4\}$, $\mathcal{K}_{R3}=\{5,6,7,8,1,2\}$, and $\mathcal{K}_{R4}=\{3,4,5,6,7,8\}$. In order to serve $u(m)=6$ users simultaneously in each block, each part $W_{d_ki}$ of a requested file $W_{d_k}$ is further split into $F_D=3$ equal packets as $W_{d_ki}=\{W_{d_kij}\}_{j=1}^{3}$. For any EN cluster $\mathcal{K}_{Ti}, i\in[F_C]$, the ENs can cooperatively send a subset of $u(m)=6$ packets from $\{W_{d_ki}\}_{k=1}^{K_R}$ to all users in group $\mathcal{K}_{Rj}$ when $j=1,\cdots,B_D$, requiring $B_D$ blocks. As a result, we have $B=F_CB_D=16$ blocks and $F=F_CF_D=12$ packets, yielding the NDT $\delta_E=B/F=4/3$.


\begin{figure}[t!] 
  \centering
\includegraphics[width=0.65\columnwidth]{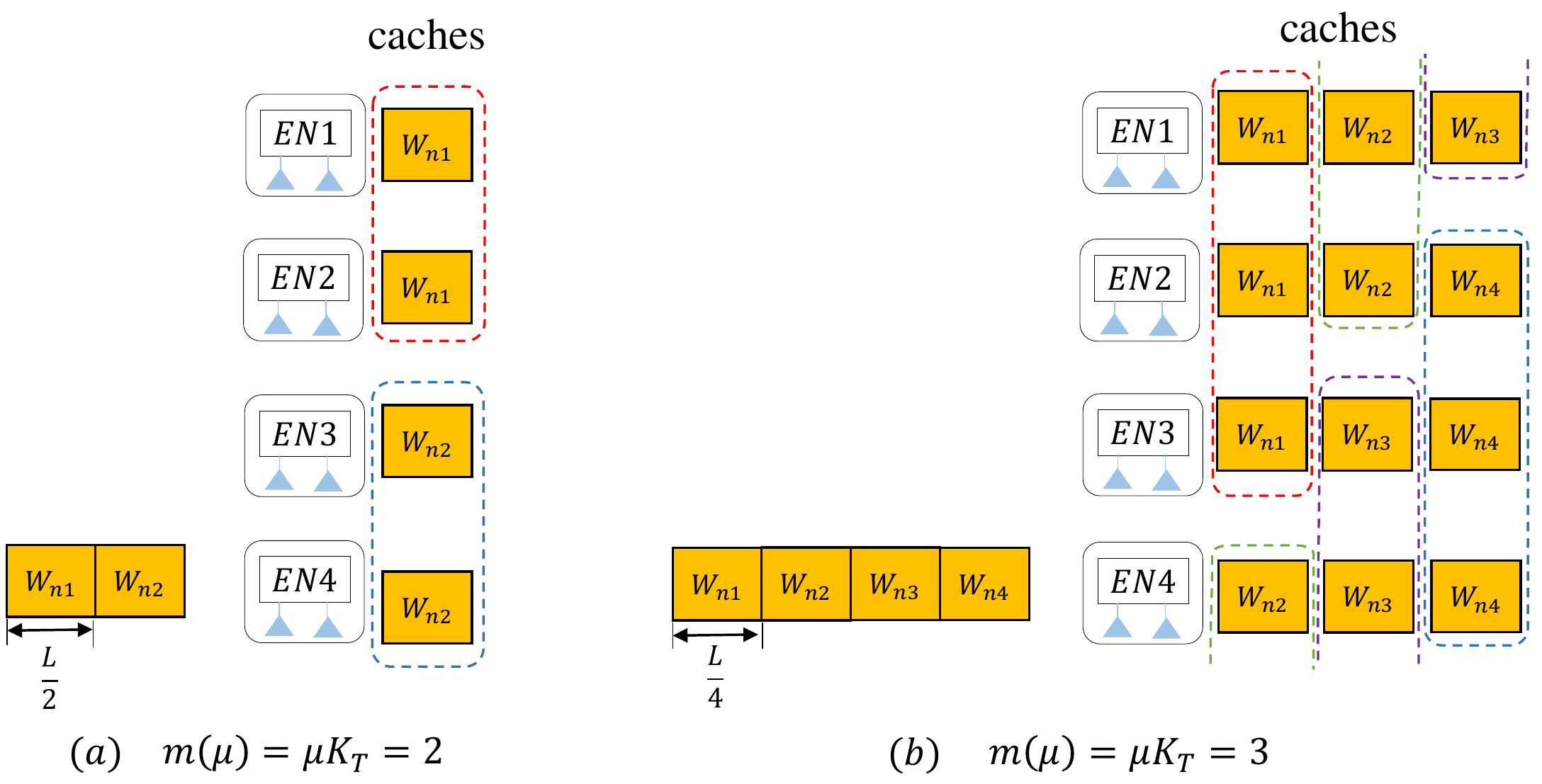}
\caption{Examples of caching scheme under edge-only transmission for the achievable NDT in Proposition~\ref{pro:NDTedge}. The dashed lines identify the clusters of cooperative ENs in \eqref{def:clusteredge}.}
\label{fig:schemee}
\end{figure}

\subsection{Proof of Proposition \ref{pro:NDTedge}} \label{sketchedge}
We now generalize the proposed scheme. For any cache capacity $\mu$, by \eqref{mmu}, we have the multiplicity $m=m(\mu)=\min\{\lfloor\mu K_T\rfloor, m_{max}\}$. To start, we define the number of parts used during the caching phase as 
\begin{align} \label{def:subedgec} 
F_C=\frac{\text{l.c.m.} (m, K_T)}{m},
\end{align}
where $\text{l.c.m.}(a,b)$ is the least common multiple of integers $a$ and $b$. As we will see, this choice guarantees that clusters of $m$ ENs can store the same part of each file, enabling cooperative transmission. We also define the number of packets created out of each cached part as
\begin{align} \label{def:subedged} 
F_D= \frac{\text{l.c.m.} (u(m), K_R)}{K_R}.
\end{align}
This ensures that in each block, subsets of $u(m)$ users can be served. Overall, the number of packets is 
\begin{align} \label{def:subedge} 
F=F_C F_D.
\end{align}
By \eqref{def:subedge}, we have the inequality $K_T/m\leq F\leq K_T K_R$, where the lower bound is attained when $K_T$ and $K_R$ are divisible by $m$ and $u(m)$, respectively. The upper bound demonstrates that the proposed scheme requires a packetization into a number of packets no larger than the product $K_T K_R$. This stands in contrast to the method of \cite{NMA:17}, which, when specialized to the case of no caching at the receivers, requires a number of packets that is exponential in the number of transmitters.

In the caching phase, each file $W_n$ is equally split into $F_C$ parts $\{W_{ni}\}_{i=1}^{F_C}$. Correspondingly, the ENs are clustered into $F_C$ clusters, defined as $\{\mathcal{K}_{Ti}\}_{i=1}^{F_C}$\footnotemark , where each cluster is defined as
\begin{align} \label{def:clusteredge}
\mathcal{K}_{Ti}=\{[(i-1)m+1]_{K_T}, [(i-1)m+2]_{K_T}, \cdots, [i m]_{K_T}\},
\end{align}
where $[a]_b=1+\bmod(a-1,b)$ for integers $a$ and $b$. Then, part $W_{ni}$ is stored in the caches of all $m$ ENs in subset $\mathcal{K}_{Ti}$.

In the delivery phase, consider a demand vector $\dv$. With cooperative ZF precoding, each cluster can serve $u(m)$ users at a time. Based on this, the $K_R$ users are grouped into 
\begin{align} \label{block}
B_D= \frac{\text{l.c.m.} (u(m), K_R)}{u(m)} 
\end{align}
groups, defined as $\{\mathcal{K}_{Rj}\}_{j=1}^{B_D}$\footnotemark[\value{footnote}]\footnotetext{The set $\{\mathcal{K}_{Ti}\}_{i=1}^{F_C}$ is a 1-($K_T,m,\text{l.c.m.} (u(m), K_R)/K_T$) design, the set $\{\mathcal{K}_{Rj}\}_{j=1}^{B_D}$ is a 1-($K_R,u(m),F_D$) design (see~\cite{DS:04}).}, with
\begin{align} \label{def:groupedge}
\mathcal{K}_{Rj} =\{[(j-1)u(m)+1]_{K_R}, [(j-1)u(m)+2]_{K_R}, \cdots,[j u(m)]_{K_R}\}.  
\end{align}
Each cluster $\mathcal{K}_{Ti} $, $i\in[F_C]$ transmits the parts $\{W_{d_ki}\}_{k=1}^{K_R}$ of the requested files by serving each of the $B_D$ groups of users in turn. To communicate to all $u(m)$ users in each group, each part $W_{d_ki}$ is further split equally into $F_D$ packets as $W_{d_ki}=\{W_{d_kij}\}_{j=1}^{F_D}$. With this split, each cluster of ENs shares $K_R F_D$ packets, which can be sent to $B_D$ groups of users sequentially within $K_R F_D/u(m)=B_D$ blocks since $u(m)$ packets are sent in each block. 
The resulting number of blocks is $B=F_C B_D$, yielding the NDT in \eqref{def:DE} $\delta_E=B/F=B_D/F_D$, as indicated in Proposition~\ref{pro:NDTedge}.  \QEDA



\section{Achievable NDT for Fronthaul-only caching} \label{sec:fronthaul}
As a second preliminary result of interest, in this section, we present an achievable NDT (Proposition~\ref{pro:NDTft}) for the case of no caching, i.e., with $\mu=0$, as well as and the corresponding cloud-aided delivery scheme. We focus on serial fronthaul-edge transmission, while pipelined delivery will be considered in Section~\ref{sec:pipeline}.

\subsection{Achievable NDT} \label{subsec:NDTft}

In the absence of caching, any desired multiplicity level $m\leq m_{max}$ can be realized for all the contents in the requested vector $\dv$ thanks to fronthaul transmission. To this end, the fronthaul links are used to convey each packet of a requested file to a subset of $m$ ENs. Choosing the subsets of ENs as described in the previous section (see \eqref{def:clusteredge}) allows the edge NDT \eqref{ach:edge} to be achieved thanks to cooperative EN transmission. Increasing $m$ requires a larger fronthaul NDT since it requires to transfer more information on the fronthaul links, but it generally yields a lower edge NDT \eqref{ach:edge}. The next proposition presents an achievable NDT obtained by optimizing over the values of $m$.

To elaborate, define the desired multiplicity for a given fronthaul rate $r$ as 
\begin{align} \label{mmin}
m(r) = \left\{
\begin{array}{ll}
     \Big [\sqrt{\frac{K_Tr}{n_T}}~\Big], & \text{for}~r<r_{th} \\
       m_{max}, & \text{for}~ r \geq r_{th},
\end{array} 
\right.
\end{align}
where $[x]$ represents the nearest positive integer function, and
\begin{align} \label{rth}
r_{th}=\frac{n_T}{K_T}m_{max}^2. 
\end{align}
The multiplicity \eqref{mmin} is illustrated in Fig~\ref{fig:minr}. As seen, the selected multiplicity $m(r)$ is piece-wise constant and non-decreasing with respect to the fronthaul rate $r$. It is also respectively a non-decreasing an non-increasing function of $K_T$ and $n_T$. 

\begin{figure}[t!] 
  \centering
\includegraphics[width=0.4\columnwidth]{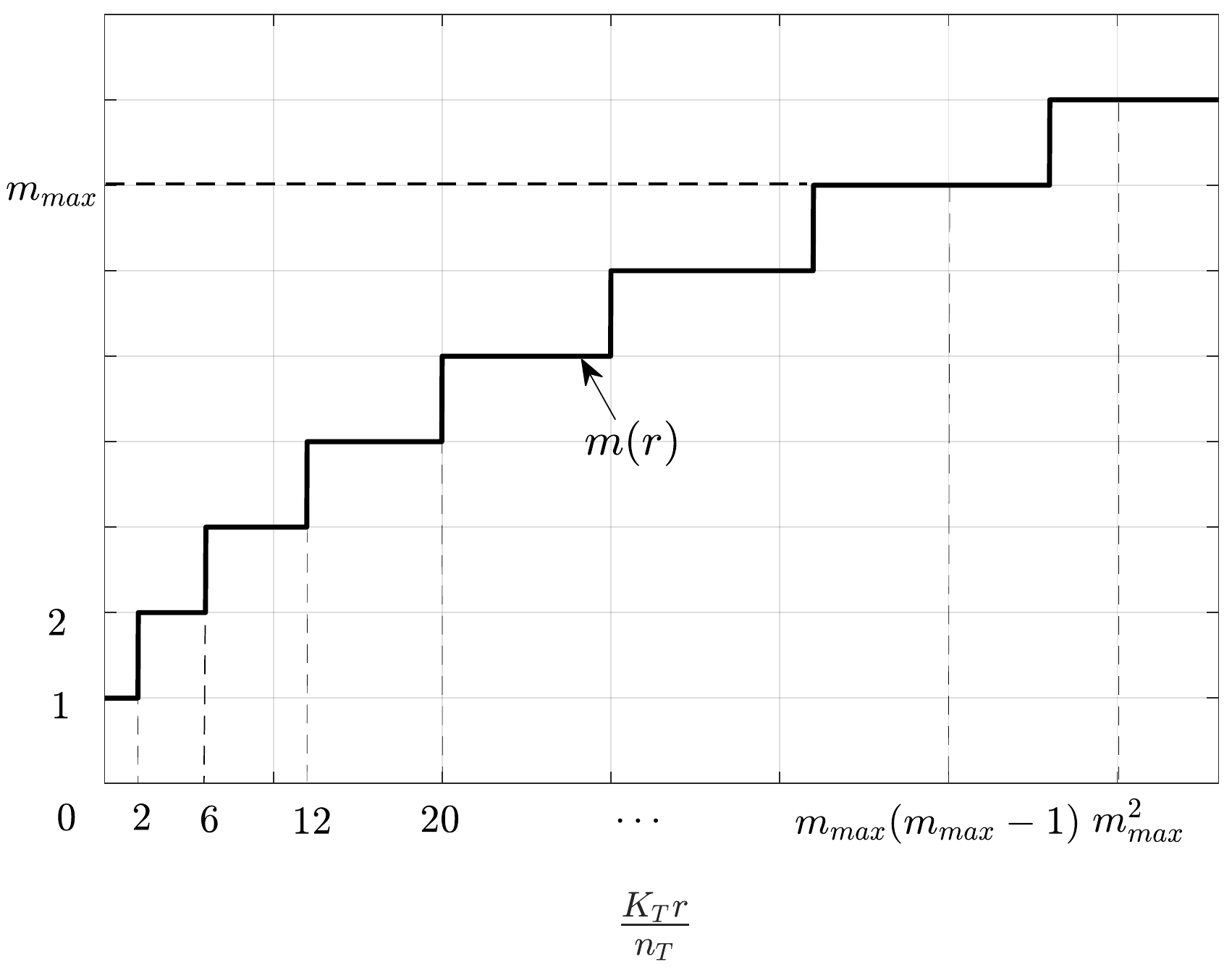}
\caption{Multiplicity $m(r)$ in \eqref{mmin} of the requested files obtained as a result of fronthaul transmission as a function of $K_Tr/n_T$.}
\label{fig:minr}
\end{figure}

\begin{proposition} \label{pro:NDTft}
For an F-RAN system with any fronthaul rate $r> 0$ and cache capacity $\mu =0$, we have the upper bound on the minimum NDT 
\begin{align} \label{ach:NDTft}
\delta^*(\mu=0,r)\leq\delta_F(m(r))+\delta_E(m(r)), 
\end{align}
with the fronthaul NDT as a function of the multiplicity $m$ given as  
\begin{align} \label{eq:deltaF}
\delta_F(m)=\frac{K_Rm}{K_T r},
\end{align}
the multiplicity $m(r)$ in \eqref{mmin}, and the edge NDT $\delta_E(m)$ defined in \eqref{ach:edge}.
\end{proposition}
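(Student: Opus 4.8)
The plan is to mirror the structure of the proof of Proposition~\ref{pro:NDTedge}, replacing edge-caching with fronthaul transfer as the mechanism that realizes the target file multiplicity $m$. First I would fix $m=m(r)$ as in \eqref{mmin} and reuse the combinatorial packetization of Section~\ref{sec:edge}: split each requested file $W_{d_k}$ into $F=F_C F_D$ packets, with $F_C=\text{l.c.m.}(m,K_T)/m$ and $F_D=\text{l.c.m.}(u(m),K_R)/K_R$, and form the $F_C$ contiguous EN-clusters $\{\mathcal{K}_{Ti}\}_{i=1}^{F_C}$ of \eqref{def:clusteredge} and the $B_D$ user-groups $\{\mathcal{K}_{Rj}\}_{j=1}^{B_D}$ of \eqref{def:groupedge}. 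The only change from Proposition~\ref{pro:NDTedge} is that, because $\mu=0$, the cluster $\mathcal{K}_{Ti}$ does not have part $W_{d_ki}$ in cache; instead the CP sends $W_{d_ki}$ (equivalently, its $F_D$ sub-packets $W_{d_kij}$) over the fronthaul links to all $m$ ENs in $\mathcal{K}_{Ti}$. After this fronthaul phase, every EN cluster holds exactly the same content it would have held via edge caching in Proposition~\ref{pro:NDTedge}, so the wireless delivery phase is verbatim the same and yields $B=F_C B_D$ blocks, i.e. the edge NDT $\delta_E=B/F=B_D/F_D = K_R/u(m)$, matching \eqref{ach:edge}.

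Second I would compute the fronthaul NDT. Each of the $K_R$ requested files is split into $F_C$ parts, and part $i$ of each file is delivered to the $m$ ENs of $\mathcal{K}_{Ti}$; since the clusters $\{\mathcal{K}_{Ti}\}$ form a $1$-design in which every EN index appears in exactly $F_C m / K_T$ of the clusters, each EN receives $F_C m/K_T$ parts out of the $F_C$ parts of each requested file. Hence $|\mathcal{F}_i| = K_R \cdot F_D \cdot F_C m / K_T$ packets per EN (counting at the $F_D$-refined packet granularity, so the packet count is out of $F = F_C F_D$). Plugging into \eqref{def:DF} gives $\delta_F = \max_i |\mathcal{F}_i|/(Fr) = (K_R F_D F_C m / K_T)/(F_C F_D\, r) = K_R m/(K_T r)$, which is exactly \eqref{eq:deltaF}. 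I should also check symmetry of the load across ENs — this is where the contiguous-cluster construction pays off, since each EN index lies in the same number of clusters, so the $\max_i$ is attained uniformly and no EN is a bottleneck.

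Third, with $\delta_E(m)$ and $\delta_F(m)$ in hand for arbitrary admissible $m\le m_{max}$, the achievable serial NDT is $\delta_F(m)+\delta_E(m) = K_R m/(K_T r) + K_R/u(m)$. I would then observe that, treating $m$ as a continuous variable and using $u(m)=mn_T$ in the regime $m n_T \le K_R$, the sum $K_R m/(K_T r) + K_R/(m n_T)$ is convex in $m$ and minimized at $m=\sqrt{K_T r/n_T}$; rounding to the nearest positive integer and capping at $m_{max}$ (beyond which $u(m)=K_R$ is saturated and only $\delta_F$ keeps growing) gives precisely the choice $m(r)$ in \eqref{mmin} with threshold $r_{th}$ in \eqref{rth}. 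Finally I would invoke the cache/time-sharing argument (as in [7, Lemma~1], already cited after \eqref{def:delta}) to pass from this achievable $\bar\delta$ to the l.c.e.\ bound $\delta^*(\mu=0,r)\le \delta_F(m(r))+\delta_E(m(r))$, which is \eqref{ach:NDTft}.

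The main obstacle is the bookkeeping in the fronthaul-load computation: one must be careful that the two-level packetization ($F_C$ parts for cluster assignment, then $F_D$ sub-packets for serving user groups) is counted consistently, and that the $1$-design property of $\{\mathcal{K}_{Ti}\}_{i=1}^{F_C}$ indeed makes every EN receive the same number of packets so that \eqref{def:TF}'s maximum is not loose. A secondary point worth stating carefully is that any integer $m$ with $1\le m\le m_{max}$ is realizable on the fronthaul for \emph{every} demand vector $\dv$ — including worst-case demands of $K_R$ distinct files — since the CP has the whole library; this is what lets us optimize over $m$ freely, in contrast to the edge-only case where $m$ was constrained by $\mu K_T$.
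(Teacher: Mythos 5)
Your proposal matches the paper's own proof in all essentials: the same two-level packetization ($F_C$ parts for the cyclic EN clusters, $F_D$ sub-packets for the user groups), fronthaul delivery of part $W_{d_ki}$ to the $m$ ENs of $\mathcal{K}_{Ti}$ followed by the verbatim edge scheme of Proposition~\ref{pro:NDTedge}, the per-EN load count $|\mathcal{F}_i|=K_R F m/K_T$ yielding $\delta_F(m)=K_Rm/(K_Tr)$, and the continuous relaxation with stationary point $\sqrt{K_Tr/n_T}$ rounded to the nearest positive integer and capped at $m_{max}$ to justify the choice $m(r)$. Your added remarks (the $1$-design symmetry making the $\max_i$ in \eqref{def:TF} tight, and the observation that any $m\le m_{max}$ is realizable for every demand since the CP holds the whole library) are correct and only make explicit what the paper leaves implicit, so there is nothing to fix.
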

\begin{proof}
The proof is presented in Section~\ref{sketchft}.
\end{proof}

\subsection{Example}

We now discuss an example that illustrates the proposed achievable cloud-aided delivery scheme. As in Example 1, we consider an F-RAN model with $K_T=4$ ENs, $n_T=2$ per-EN antennas, and $K_R=4$ users, but with no caching, i.e., with $\mu=0$ and $r>0$. 

\emph{Example 3}. Consider $r=2$, so we have the multiplicity $m(r)=2$ from \eqref{mmin}. Note that the multiplicity is the same as in Example 1 (see Fig.~\ref{fig:schemee}(a)). We hence use the same packetization and the same division of ENs into clusters of $m(r)=2$ ENs discussed in Example 1, with the caveat that here only the packets of the requested files in $\dv$ are made available to the ENs via fronthaul transmission. To elaborate, for any demand vector $\dv$, each requested file $W_{d_k}$ is divided into $F=2$ equal and disjoint packets $\{W_{d_k1}, W_{d_k2}\}$ and the ENs are clustered into two groups, namely $\mathcal{K}_{T1}=\{1,2\}$ and $\mathcal{K}_{T2}=\{3,4\}$. With fronthaul transmission, packets $\{W_{d_k1}\}_{k=1}^{K_R}$ are sent to the ENs in cluster 1, and packets $\{W_{d_k2}\}_{k=1}^{K_R}$ to the ENs in cluster 2, as illustrated in Fig.~\ref{fig:schemefonly}. Hence, we have $\mathcal{F}_1=\mathcal{F}_2=\{W_{d_k1}\}_{k=1}^{K_R}$ and $\mathcal{F}_3=\mathcal{F}_4=\{W_{d_k2}\}_{k=1}^{K_R}$, and the resulting fronthaul NDT in \eqref{def:DF} is $\delta_F=|\mathcal{F}_i|/(Fr)=4/(2\times2)=1$. For edge transmission, the cooperative delivery strategy in Example 1 can be applied by sending packets $\{W_{d_k1}\}_{k=1}^{4}$ and $\{W_{d_k2}\}_{k=1}^{4}$ sequentially in two blocks by the two clusters of ENs. Hence, the resulting edge NDT is $\delta_E=B/F=2/2=1$, yielding the overall NDT $\delta_{ach}(\mu=0,r)=\delta_F+\delta_E=2$, as in \eqref{ach:NDTft}.

\begin{figure}[t!] 
  \centering
\includegraphics[width=0.3\columnwidth]{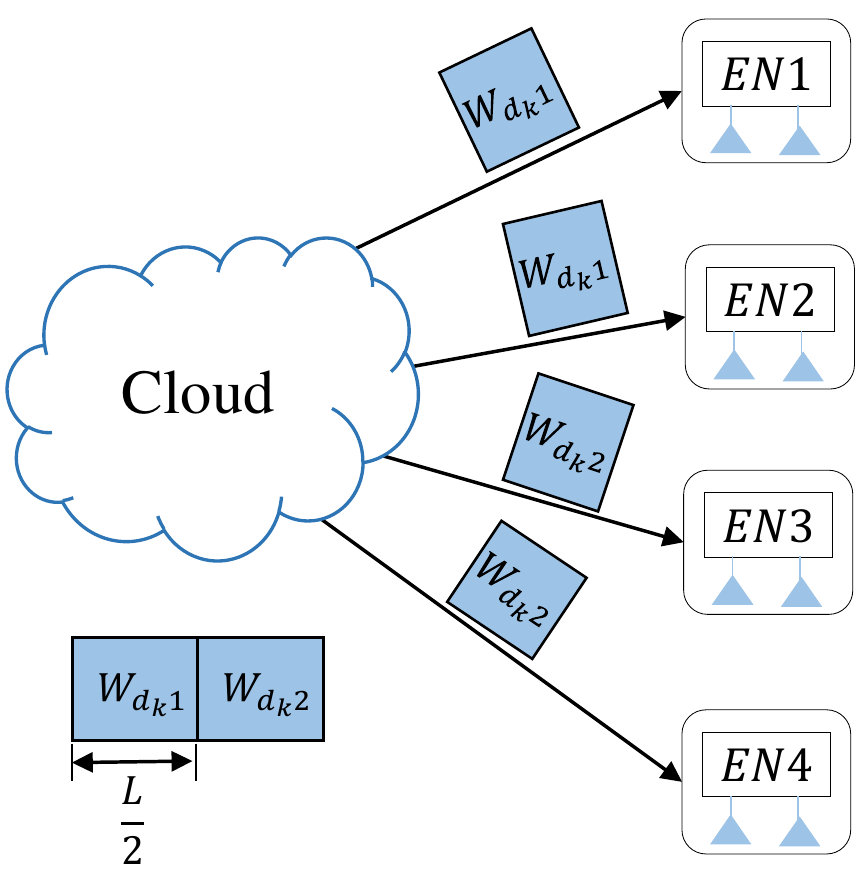}
\caption{Fronthaul transmission for the achievable NDT in Proposition~\ref{pro:NDTft} for $m(r)=2$.}
\label{fig:schemefonly}
\end{figure}

\subsection{Proof of Proposition~\ref{pro:NDTft}} \label{sketchft}

Fix a desired multiplicity $m$ for the requested contents. Each requested file $W_{d_k}$ is divided into $F_C$ parts $W_{d_k}=\{W_{d_ki}\}_{i=1}^{F_C}$ with $F_C$ in \eqref{def:subedgec}. Part $W_{d_ki}$ is sent to all the ENs in group $\mathcal{K}_{Ti}$ defined in \eqref{def:clusteredge} by using fronthaul transmission. Each part $W_{d_ki}$ is split into $F_D$ equally sized packets packets $\{W_{d_kij}\}_{j=1}^{F_D}$ with $F_D$ in \eqref{def:subedged}, and each EN $i$ receives $|\mathcal{F}_i|=K_RFm/K_T$ packets with $F=F_CF_D$ in \eqref{def:subedge}, so that the fronthaul NDT is $\delta_F(m)=|\mathcal{F}_i|/(Fr)=K_Rm/(K_Tr)$ as in \eqref{eq:deltaF}. Transmission on the edge channels takes place as described in Section~\ref{sketchedge}, entailing the NDT $\delta_E(m)$ in \eqref{ach:edge}. The overall NDT for a given multiplicity $m$ is hence given as $\delta(m)=\delta_F(m)+\delta_E(m)$, which can be minimized over $m$. To this end, we define the function 
\begin{align} \label{fun:min}
\delta(x)=\frac{K_R x}{K_T r}+\frac{K_R}{x n_T},
\end{align}
where $x\in[0,m_{max}]$ is a variable obtained by relaxing the integer constraints over $m$. The function $\delta(x)$ is convex within the range $[0,m_{max}]$, and the only stationary point is $x_0 =\sqrt{K_T r/n_T}$. Therefore, function $\delta(x)$ reaches its minimum at $x=x_0$. Based on this, the optimal multiplicity $m$ is either $\lfloor x_0 \rfloor$ or $\lceil x_0 \rceil$ depends on whether $\delta(\lfloor x_0 \rfloor)<\delta(\lceil x_0 \rceil)$ or $\delta(\lfloor x_0 \rfloor)>\delta(\lceil x_0 \rceil)$, respectively. Hence, to simplify the analysis, the desired multiplicity $m$ is chosen as the nearest positive integer of $x_0$, although this may not be optimal. This completes the proof. \QEDA

\section{Normalized delivery time Analysis on F-RAN} \label{sec:edgeft}

In Section~\ref{sec:edge} and~\ref{sec:fronthaul}, we have studied the special cases with edge caching only and no caching, respectively. In this section, we proceed to consider a general F-RAN model with any fronthaul rate $r\geq 0$ and cache capacity $\mu \geq 0$. We present an upper bound (Proposition~\ref{pro:NDT}) and a lower bound (Proposition~\ref{pro:lower}) on the minimum NDT under serial delivery. These bounds provide a characterization of the minimum NDT that is conclusive for a wide range of values of the system parameters (Proposition~\ref{pro3}) and is generally within a multiplicative factor of $3/2$ from optimality (Proposition~\ref{pro:gap}). The main results offer insight into the optimal use of cloud and edge resources as a function of the fronthaul capacity, cache resources and number of ENs' transmit antennas. 

\subsection{Upper Bound on the Minimum NDT} \label{sec:sketch}

In the presence of both cloud and edge resources, the multiplicity $m$ of the requested files depends both on the pre-stored caching contents and on the information received at the ENs via fronthaul transmission.  As a result, in an F-RAN,  the optimal multiplicity is generally larger than the multiplicities $m(r)$ and $m(\mu)$ in \eqref{mmu} and \eqref{mmin}, respectively. 
The multiplicity selected in the proposed scheme for any values of $\mu$ and $r$ is given as 
\begin{align} \label{m}
m(\mu,r) = \left\{
\begin{array}{ll}
       m(r), & \text{if}~ \mu K_T  < m(r) \\
			 \lfloor \mu K_T \rfloor, & \text{if}~ m(r) \leq  \mu K_T\leq   m_{max} \\
       m_{max}, & \text{if}~ \mu K_T > m_{max}. 
\end{array} 
\right.
\end{align}
The formula \eqref{m} has a graphical interpretation that will be discussed after the next proposition (see Fig.~\ref{fig:m}). Note that we have the equalities $m(\mu,0)=m(\mu)$ and $m(0,r)=m(r)$.

\begin{figure}[t!] 
  \centering
\includegraphics[width=0.45\columnwidth]{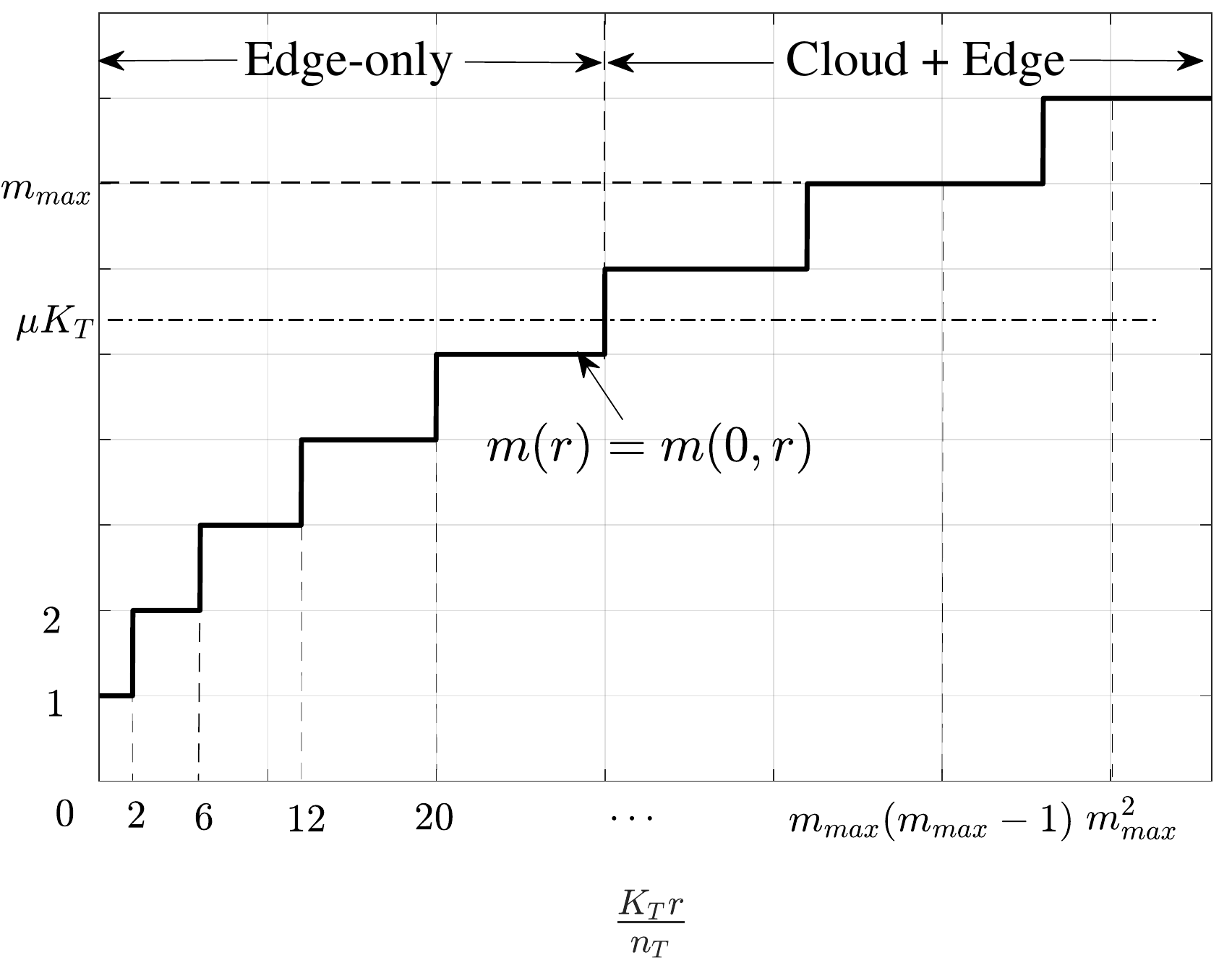}
\caption{Illustration of the multiplicity $m(\mu,r)$ \eqref{m} of the requested files, as a result of caching and fronthaul transmission, as selected by the proposed scheme: for $\mu K_T \leq m_{max}$, this is obtained by taking the maximum between $\lfloor \mu K_T \rfloor$ and $m(r)$ for the given value of $r$; while $ \mu K_T> m_{max}$, we have $m(\mu,r)=m_{max}$.}
\label{fig:m}
\end{figure}

\begin{proposition} \label{pro:NDT}
For an F-RAN system with any fronthaul rate $r\geq 0$ and cache capacity $\mu\geq 0$, we have the upper bound on the minimum NDT 
\begin{align} \label{ach:smallr}
\delta^*(\mu,r)\leq \delta_{ach}(\mu,r)= \text{l.c.e.}(\delta_F(\mu,r)+\delta_E(\mu,r)),
\end{align}
with the fronthaul NDT 
\begin{align} \label{eq:deltaEF}
\delta_F(\mu,r)=\delta_F(m(\mu,r)-\lfloor \mu K_T\rfloor),
\end{align}
with $\delta_F(m)$ in \eqref{eq:deltaF}, and the edge NDT
\begin{align} \label{eq:deltaEE1}
 \delta_E(\mu,r)=\delta_E(m(\mu,r)),
\end{align}
with $\delta_E(m)$ in \eqref{ach:edge}.
\end{proposition}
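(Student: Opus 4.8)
The plan is to establish Proposition~\ref{pro:NDT} constructively, by merging the edge‑caching scheme underlying Proposition~\ref{pro:NDTedge} with the fronthaul scheme underlying Proposition~\ref{pro:NDTft}. The guiding idea is that the caches furnish a base multiplicity $m_c\defeq\lfloor\mu K_T\rfloor$ for every file, while the fronthaul links are used, during delivery, only to raise the multiplicity of the $K_R$ \emph{requested} files up to the target value $m\defeq m(\mu,r)$ from \eqref{m}. Since $m_c/K_T\le\mu$, the per‑EN cache used is within budget \eqref{cap:cache}; and since \eqref{m} guarantees $m\le m_{max}$ (cf.~\eqref{mmax}) and, whenever $\mu K_T<m_{max}$, also $m_c\le m$, the fronthaul must supply the complementary $(m-m_c)^+$ copies of each requested packet. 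This will be seen to cost the fronthaul NDT $\delta_F(m-m_c)$ in \eqref{eq:deltaEF}, while the wireless phase is then the cooperative ZF delivery of Section~\ref{sketchedge} executed at multiplicity $m$, whose cost is $\delta_E(m)$ in \eqref{eq:deltaEE1}. The final step is to pass from this per‑$(\mu,r)$ achievability to the lower convex envelope in $\mu$ by memory‑sharing, which is \eqref{ach:smallr}.

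In more detail, I would fix $m=m(\mu,r)$ and keep the circular, contiguous partition of the $K_T$ ENs into the $F_C$ delivery clusters $\{\mathcal{K}_{Ti}\}_{i=1}^{F_C}$ of size $m$ from \eqref{def:clusteredge}, with $F_C=\text{l.c.m.}(m,K_T)/m$ as in \eqref{def:subedgec}; each file $W_n$ is split into $F_C$ parts $\{W_{ni}\}$, and each part into $F_D=\text{l.c.m.}(u(m),K_R)/K_R$ packets as in \eqref{def:subedged}, so that $F=F_CF_D$ as in \eqref{def:subedge}. During placement, part $W_{ni}$ is pre‑stored at a subset of $m_c$ of the $m$ ENs of $\mathcal{K}_{Ti}$, chosen round‑robin both within each cluster and across the $F_C$ clusters, so that every EN caches the same fraction $m_c/K_T$ of the library. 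In the delivery phase, for each requested file $W_{d_k}$ the cloud sends part $W_{d_k i}$ on fronthaul link $i'$ to each EN $i'$ that belongs to $\mathcal{K}_{Ti}$ but is not one of the $m_c$ ENs that cached part $i$; after this, all $m$ ENs of every $\mathcal{K}_{Ti}$ hold $W_{d_k i}$. The wireless transmission is then exactly as in Section~\ref{sketchedge}: each cluster $\mathcal{K}_{Ti}$, with its $mn_T$ antennas, serves $u(m)=\min\{mn_T,K_R\}$ users interference‑free per block via cooperative ZF, sweeping the $B_D$ user groups $\{\mathcal{K}_{Rj}\}$ of \eqref{def:groupedge}, for a total of $B=F_CB_D$ blocks.

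It then remains to evaluate the two NDTs. By the balanced placement, each EN fails to have cached exactly an $(m-m_c)/K_T$ fraction of the parts of any requested file, hence receives $|\mathcal{F}_i|=K_RF(m-m_c)/K_T$ packets, giving, from \eqref{def:DF}, $\delta_F=\max_i|\mathcal{F}_i|/(Fr)=K_R(m-m_c)/(K_Tr)=\delta_F(m-m_c)$, which is \eqref{eq:deltaEF}. The wireless count is identical to that in Section~\ref{sketchedge}, i.e., $\delta_E=B/F=B_D/F_D=K_R/u(m)=\delta_E(m)$, which is \eqref{eq:deltaEE1}. Adding them gives $\bar{\delta}(\mu,r)\le\delta_F(\mu,r)+\delta_E(\mu,r)$; splitting each file into sub‑files and running the corner‑point constructions on each (the cache‑ and time‑sharing argument invoked for \eqref{def:delta}, detailed in [7, Lemma~1]) then shows the lower convex envelope in $\mu$ is achievable, establishing \eqref{ach:smallr}. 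In the remaining regime $\mu K_T\ge m_{max}$, caching alone already realizes the maximal useful multiplicity $m_{max}$, so no fronthaul is used, $\delta_F=0$, and \eqref{eq:deltaEF} is read with the nonnegative part $(m(\mu,r)-\lfloor\mu K_T\rfloor)^+$.

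The step I expect to be the real obstacle is making a \emph{single} packetization of each file simultaneously support all three requirements: a perfectly balanced cache with each part at exactly $m_c$ ENs and each EN loaded at $m_c/K_T$; a perfectly balanced fronthaul delivering the complementary $m-m_c$ copies at load $K_R(m-m_c)/K_T$ per EN; and an equal partition of the users into groups of $u(m)$ for the ZF phase. Because $m_c$, $m$, $K_T$, $K_R$ and $u(m)$ generally share no divisibility relations, the round‑robin assignments above must be realized by refining each file into a number of packets equal to an appropriate iterated least common multiple of these quantities — generalizing \eqref{def:subedgec}, \eqref{def:subedged} and \eqref{def:subedge} — and one must then verify that this refinement makes every per‑EN cache fraction, every per‑EN fronthaul fraction and every user‑group size an integer, and that the resulting $F$ remains finite (indeed, at most polynomial in $K_T$ and $K_R$), so that a legitimate policy attains $\delta_F(\mu,r)+\delta_E(\mu,r)$. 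Everything else — the cache‑budget check, the high‑SNR ZF argument, and the convexification by memory‑sharing — is routine given Sections~\ref{sketchedge} and~\ref{sketchft}.
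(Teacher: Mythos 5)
Your overall architecture is the paper's: fix $m=m(\mu,r)$, keep the clusters \eqref{def:clusteredge} and the packetization machinery of Sections~\ref{sketchedge}--\ref{sketchft}, use the caches to supply multiplicity $\lfloor\mu K_T\rfloor$ and the fronthaul to top it up to $m$, run cooperative ZF at multiplicity $m$, and convexify by memory/time-sharing. The counts you target, $\delta_F=K_R(m-\lfloor\mu K_T\rfloor)/(K_T r)$ and $\delta_E=K_R/u(m)$, are exactly \eqref{eq:deltaEF} and \eqref{eq:deltaEE1}. However, the one step you yourself flag as ``the real obstacle'' is precisely the step your write-up does not carry out, and as formulated it does not obviously go through: you place \emph{whole parts} $W_{ni}$ at round-robin $\lfloor\mu K_T\rfloor$-subsets of each cluster and fronthaul whole parts to the non-caching members. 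For the fronthaul NDT \eqref{def:DF}, which is a \emph{maximum} over ENs and must hold for the \emph{worst-case} demand, you need every EN, for every requested file, to be a non-cacher for exactly a fraction $(m-\lfloor\mu K_T\rfloor)/m$ of the (file, cluster) pairs it participates in; with whole-part placement this is an integrality/divisibility condition (an EN belongs to $\mathrm{l.c.m.}(m,K_T)/K_T$ clusters, and that number times $(m-\lfloor\mu K_T\rfloor)/m$ is generally not an integer), and a round-robin over file indices only balances the load on average over demands, not for an adversarial $\dv$. So the claimed $|\mathcal{F}_i|=K_RF(m-\lfloor\mu K_T\rfloor)/K_T$ for every $i$ is asserted, not established.

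The paper closes this gap with a simpler device than the iterated-l.c.m. refinement you gesture at (Appendix~\ref{sec:proof}): each part is split into $F_S=\mathrm{l.c.m.}(F_D,m)$ packets \eqref{FDP}, an identical subset of $F_S\lfloor\mu K_T\rfloor/m$ packets (an integer by construction) is cached at \emph{all} $m$ ENs of the cluster, and the complementary $F_S(1-\lfloor\mu K_T\rfloor/m)$ packets are fronthauled to \emph{all} $m$ cluster members. Because every EN of a cluster holds exactly the same content for that cluster, the per-EN cache fraction is $\lfloor\mu K_T\rfloor/K_T\le\mu$ and the per-EN fronthaul load equals $K_RF(m-\lfloor\mu K_T\rfloor)/K_T$ for every demand vector, with no combinatorial balancing to verify; ZF cooperation within the cluster is automatic. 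If you want to keep your ``different subsets cache different parts'' placement, the fix is to sub-split each part into $m$ sub-parts and assign each to a cyclic window of $\lfloor\mu K_T\rfloor$ ENs inside the cluster, which restores exact per-file, per-EN balance -- but this has to be stated and checked, since it is the only non-routine ingredient of the proof. Everything else in your proposal (cache budget, edge NDT count $B/F=B_D/F_D$, the $(\cdot)^+$ reading when $\mu K_T\ge m_{max}$, and the l.c.e. step) matches the paper.
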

\begin{proof}
The proof is presented in Appendix~\ref{sec:proof}. 
\end{proof}

\begin{figure}[t!] 
  \centering
\includegraphics[width=0.45\columnwidth]{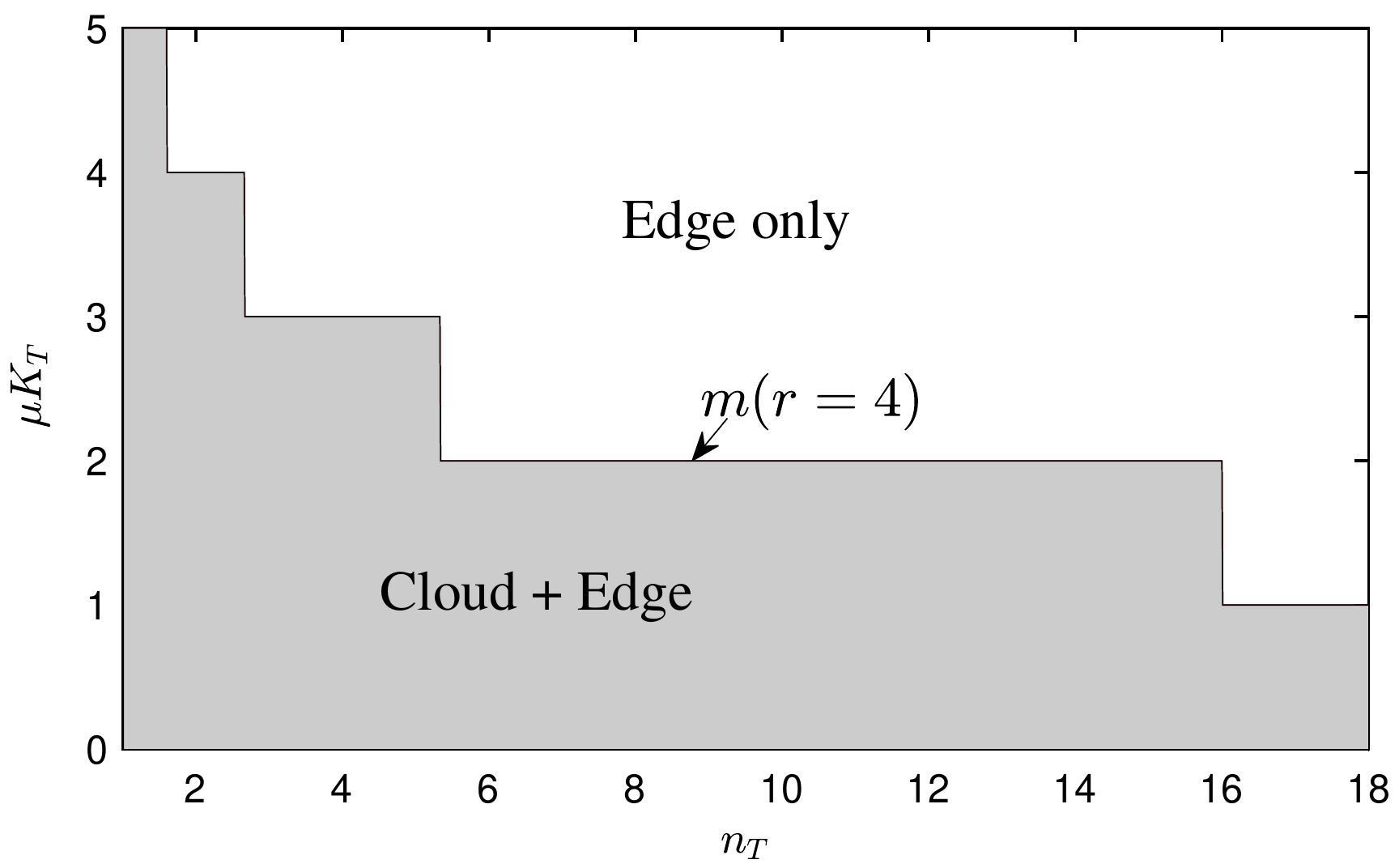}
\caption{The proposed caching-delivery scheme, which is approximately optimal by Proposition 3, leverages the cloud resources only for values of $\mu$ and $n_T$ in the shaded area, while edge transmission is exclusively used otherwise ($K_T=8,K_R=40$ and $r=4$).}
\label{fig:munt}
\end{figure}

According to \eqref{m}, as illustrated in Fig.~\ref{fig:m}, the multiplicity $m(\mu,r)$ of each requested file is obtained by comparing $ \mu K_T $, i.e., the multiplicity allowed by caching only, with the upper and lower bound $m_{max}$ and the optimal multiplicity $m(r)$ selected when $\mu=0$. We distinguish two cases.
\begin{itemize}
	\item \emph{Edge-only transmission:} For a sufficiently large cache capacity, i.e., when $\mu K_T \geq m(r)$, by \eqref{mmu} and \eqref{m}, we have $m(\mu,r)=m(\mu)$, and hence the edge caches can support the selected multiplicity without the need for fronthaul transmission. In this case, we have $\delta_F(\mu,r)=0$, and the NDT $\delta_{ach}(\mu,r)$ in \eqref{ach:smallr} includes only the edge contribution $\delta_E(\mu,r)$;
 \item \emph{Cloud and edge-aided transmission:} When $\mu K_T < m(r)$, we have the multiplicity $m(\mu,r)= m(r) > \mu K_T$, and hence fronthaul transmission is needed in order to support the multiplicity $m(\mu,r)$. In this case, the NDT $\delta_{ach}(\mu,r)$ in \eqref{ach:smallr} includes both the contributions of fronthaul and edge NDTs. 
\end{itemize}

\begin{remark} \label{remark}
From the discussion above, whenever the cache capacity is small enough to satisfy the inequality $\mu K_T < m(r)$, the proposed policy uses both cloud and edge resources. Accordingly, even when the edge alone would be sufficient to deliver all requested contents, that is, even when we have $\mu K_T\geq 1$, the policy uses cloud-to-edge communications if $r$ is sufficiently large. This is because, in this regime, the cloud can send the uncached information to ENs in order to increase the multiplicity and hence to foster EN cooperation, at the cost of a fronthaul delay that does not offset the cooperation gains. However, when $\mu K_T \geq m(r)$, the scheme only uses edge resources. In fact, under this condition, the gains due to enhanced EN cooperation do not overcome the latency associated with fronthaul transmission. Fig.~\ref{fig:munt} illustrates the discussed conditions by depicting as shaded region of values of the pair $(\mu,n_T)$ for which inequality $\mu K_T < m(r)$ is satisfied and hence both cloud and edge transmission is used by the proposed scheme. An interesting observation is that, as $n_T$ increases, edge processing becomes more effective, make cloud processing unnecessary for small values of the cache capacity $\mu$. Note also that an increased $r$ enlarges the region of values $(\mu,n_T)$ for which fronthaul transmission is used (not shown).   \QEDA
\end{remark}

\begin{figure}[t!] 
  \centering
\includegraphics[width=0.38\columnwidth]{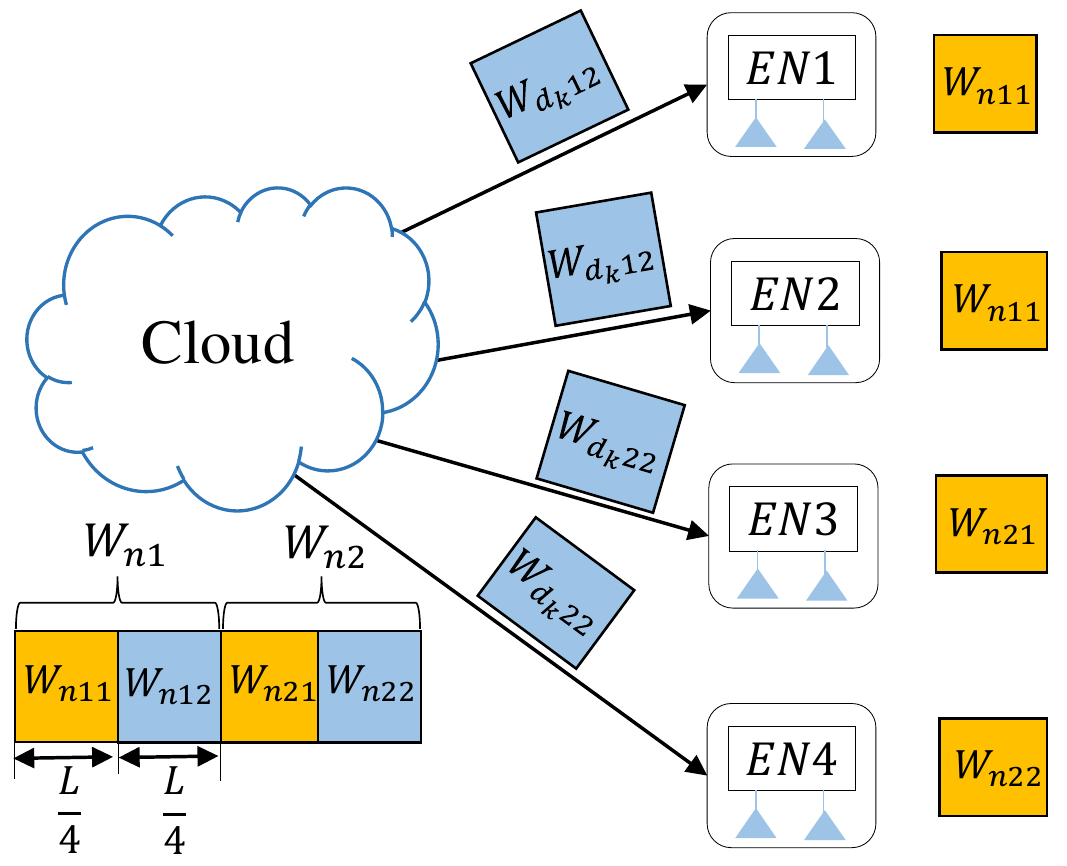}
\caption{Caching and delivery scheme under cloud and cache-aided transmission for the achievable NDT in Proposition~\ref{pro:NDT} for $m(\mu,r)=m(r)=3$.}
\label{fig:schemef}
\end{figure}

\subsection{Example} 
Here we continue Example 1 and Example 3 by considering again an F-RAN with the example with $K_T=4$ ENs, $n_T=2$ per-EN antennas and $K_R=4$ users, but in the general case with $\mu\geq0$ and $r\geq0$.

\emph{Example 4}. Consider $\mu=0.25$ and $r=2$, so we have the multiplicity $m(\mu,r)=m(r)=2$ by \eqref{m}, which is the same as in both Example 1 and Example 3. To realize this multiplicity, we carry out first the same partition $\{W_{n1}, W_{n2}\}$ of each library file $W_n$ into two $F_C=2$ parts as in Example 1 and Example 3. Moreover, here, each part is further split into $F_S=2$ disjoint packets $\{W_{ni1}, W_{ni2}\}$ of equal size. In the placement phase, only packets $\{W_{ni1}\}_{n=1}^{N}$ for all the contents $\{W_n\}_{n=1}^{N}$ are cached at the ENs in cluster $\mathcal{K}_{Ti}=\{1,2\}$ for $i=1,2$, as seen in Fig.~\ref{fig:schemef}. In the delivery phase, for any demand vector $\dv$, the uncached packets $\{W_{d_ki2}\}_{k=1}^{K_R}$ of requested files are sent to the ENs in cluster $\mathcal{K}_{Ti}$. Therefore, each EN receives four packets on the fronthaul, yielding the fronthaul NDT $\delta_F=|\mathcal{F}_i|/(Fr)=4/(4\times2)=1/2$. Using clustered EN cooperation, packets $\{W_{d_k 1i}\}_{k=1}^{K_R}$ for $i=1,2$ can be sent by the ENs in cluster $\mathcal{K}_{T1}$ in two blocks, while packets $\{W_{d_k2i}\}_{k=1}^{K_R}$ can be similarly delivered by the ENs in cluster $\mathcal{K}_{T2}$. As a result, the edge NDT is $\delta_E=B/F=4/4=1$. Hence, the overall NDT is $\delta_{ach}(\mu,r)=\delta_F+\delta_E=3/2$, as in \eqref{ach:smallr}.

\subsection{Lower Bound on the Minimum NDT}
A lower bound on the minimum $\delta^*(\mu, r)$ is presented in the following 
proposition, where we define the function $m^*(r)$ as 
\begin{align} \label{mmins}
m^*(r) = \left\{
\begin{array}{ll}
       \max\Big\{\sqrt{\frac{K_T r}{n_T}},1\Big\}, & \text{for}~ r<r_{th} \\ 
       m_{max}, & \text{for}~ r \geq r_{th},
\end{array} 
\right.
\end{align}
with $r_{th}$ as in \eqref{rth}.

\begin{proposition} \label{pro:lower}
In an F-RAN with $n_T$ antennas at each transmitter, the minimum NDT $\delta^*(\mu, r)$ is lower bounded as
\begin{subequations} \label{eq:pro2}
  \begin{align}[left ={\delta^*(\mu, r) \geq \delta_{lb}(\mu, r)= \empheqlbrace}]
\max \mbox{\small\( \Big\{\frac{K_R(m^*(r)-\mu K_T)}{K_T r}+\frac{K_R}{m^*(r) n_T}, 1\Big\} \)},  	\normalsize  ~& \text{for}~\mu K_T < m^*(r) \label{eq:pro21}\\
 \max \mbox{\small\(  \Big\{\frac{K_R}{\mu K_T n_T},1\Big\}\)},~~~~~~~~~~~~~~~~~~~~~~~  	\normalsize  &  \text{for}~ \mu K_T \geq m^*(r).  \label{eq:pro22}
   \end{align}
  \end{subequations}
\end{proposition}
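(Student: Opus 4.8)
The plan is to establish the lower bound $\delta^*(\mu,r)\geq\delta_{lb}(\mu,r)$ by a cut-set-type argument on a judiciously chosen subset of users, converting the one-shot linear precoding constraint and the bounded fronthaul into an inequality on the number of blocks $B$ and the fronthaul loads $|\Fset_i|$. First I would observe that the trivial bound $\delta^*(\mu,r)\geq 1$ always holds by the discussion following \eqref{def:deltabar}, so the work is to prove the other terms inside the maxima. Fix a worst-case demand vector $\dv$ of $K_R$ distinct files. For any subset $\Sset\subseteq\Kset_R$ of users, consider the information that must reach these users: the total bits delivered to them is $|\Sset|L$, and this information can only originate from the ENs' caches (at most $\mu NL$ bits each, but more usefully, at most $\mu K_T L$ bits of any single requested file are cached in total across all ENs) and from the fronthaul transfers during delivery. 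Because the wireless transmission is one-shot linear and each user has a single antenna, in each block at most $K_T n_T$ users can be served (the total number of transmit antennas), and more refined counting — the key device — bounds how many of the $|\Sset|$ users can be served per block in terms of the multiplicity actually available, i.e., in terms of how much of their files sits in the caches versus arrives on the fronthaul.

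The central step is a genie-aided / DoF-counting argument analogous to that in \cite{STS:17,NMA:17}: I would give a subset of ENs their cached content for free and ask how fast the remaining information can be pushed through. Concretely, pick a value $m$ and argue that serving a user requires its file to be present (via cache or fronthaul) at enough ENs to zero-force interference; summing a per-user "cost" that is the cache-miss fraction times $1/r$ (fronthaul cost) plus $1/(m n_T)$ (edge cost, since a cluster of effective size $m$ serves $mn_T$ users per block) over all $K_R$ users, and then optimizing the bound over the adversary's best $m$, yields precisely the expression $\frac{K_R(m^*(r)-\mu K_T)}{K_T r}+\frac{K_R}{m^*(r) n_T}$ when $\mu K_T<m^*(r)$, and $\frac{K_R}{\mu K_T n_T}$ when the caches alone already furnish multiplicity $\geq m^*(r)$. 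The threshold $r_{th}=\frac{n_T}{K_T}m_{max}^2$ and the clipped quantity $m^*(r)=\max\{\sqrt{K_Tr/n_T},1\}$ arise exactly as the minimizer of the convex function $x\mapsto \frac{K_R(x-\mu K_T)}{K_T r}+\frac{K_R}{x n_T}$ over $x\in[1,m_{max}]$ — the same relaxation used in the proof of Proposition~\ref{pro:NDTft} (see \eqref{fun:min}), except that here the integer relaxation is in our favor since we are lower-bounding. Finally, since $\delta^*(\mu,r)$ is defined via a lower convex envelope in $\mu$ and the derived bound $\delta_{lb}(\mu,r)$ is itself convex in $\mu$ (piecewise, being a max of a convex decreasing branch and the constant $1$, with the two branches meeting continuously at $\mu K_T=m^*(r)$), the bound passes through the l.c.e.\ operation unchanged.

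The main obstacle I anticipate is making the per-block served-user count rigorous under the \emph{uncoded}-fronthaul and \emph{uncoded}-caching restrictions while still getting a bound that is tight enough to match Proposition~\ref{pro:NDT} up to the $3/2$ factor. In the general information-theoretic setting of \cite{STS:17} one uses entropy inequalities and the DoF region of the linear-precoded channel; here the combinatorial structure (each packet is an uncoded subset of a file, each block serves a set $\Rset(b)$ of at most $K_T n_T$ users interference-free) should actually make the counting cleaner, but one must be careful that a user may collect its file from a mix of cached and fronthaul-delivered packets across \emph{different} ENs and \emph{different} blocks, so the "multiplicity" is really a per-packet notion that must be averaged. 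I would handle this by writing, for each served user in each block, the constraint that its decoded packet must be available (cache $\cup$ fronthaul) at a set of ENs whose antennas span the required beamforming directions — forcing at least $\lceil (\text{number served that block})/n_T\rceil$ distinct ENs to hold that packet — then summing over blocks and users and invoking $\sum_i |\Fset_i| \cdot (L/F) \le K_T r \log(P) \cdot T_F$ together with the cache-capacity bound $\sum_i |\Cset_{i d_k}| \le \mu K_T F$. The remaining steps — substituting the NDT normalizations \eqref{def:DF}–\eqref{def:DE}, taking $P\to\infty$ and $L\to\infty$, and optimizing over the free parameter $m$ — are routine. One should also separately verify the edge-only regime $\mu K_T\ge m^*(r)$ reduces to the known single-phase lower bound $\delta^*\ge \max\{K_R/(\mu K_T n_T),1\}$, which is the $r=0$ specialization and follows from the same counting with $T_F=0$.
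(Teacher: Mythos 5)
Your plan follows essentially the same route as the paper's proof: a per-block counting argument showing that every packet served in a block must be available (via cache or fronthaul) at at least $1/n_T$ times the number of simultaneously served users' worth of ENs (this is Lemma~\ref{lem:number}), a per-packet fronthaul-plus-edge cost accounting under the cache budget, and minimization of the resulting convex function $\frac{K_R(x-\mu K_T)}{K_Tr}+\frac{K_R}{xn_T}$ over $x\in[\max\{1,\mu K_T\},m_{max}]$ together with the trivial bound of $1$, which is exactly how $m^*(r)$ and the two regimes arise in the paper. The only ingredients you leave implicit are those the paper supplies explicitly in Lemma~\ref{lem:fmin}: an averaging over all demand vectors (so the bound holds under the aggregate cache budget \eqref{con:2} rather than only the per-file constraint \eqref{cap:cache} you invoke) and the Cauchy--Schwarz/Jensen step that collapses the heterogeneous per-packet multiplicities into the single average-multiplicity variable $x$ before the convex minimization.
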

\begin{proof}
The proof is presented in Appendix~\ref{converse}. 
\end{proof}

\subsection{Minimum NDT}
The following proposition characterizes the minimum NDT $\delta^*(\mu, r) $ for the regime of low cache and fronthaul capacities, namely, when $\mu K_T \in [0,1]$ and $r\in [0,n_T/K_R]$, as well as for any set-up with $\mu K_T$ integer, or with sufficiently large caches such that $\mu K_T \geq m_{max}$.
\begin{proposition}  \label{pro3}
For an F-RAN system with $n_T$ antennas at each EN, the minimum NDT $\delta^*(\mu, r) $ is given as
\begin{align}  \label{eq:pro3}
\delta^*(\mu, r) = \left\{
\begin{array}{ll}
       \max\big\{\frac{K_R(1-\mu K_T)}{K_T r}+\frac{K_R}{ n_T}, 1\big\},  & \text{for}~\mu K_T \in [0,1]~\text{and}~r\in [0,\frac{n_T}{K_T}] \\ 
      \max\big\{\frac{K_R}{\mu K_T n_T},1\big\}, &  \text{for}~ \mu K_T\in \{m(r)+1,\cdots,m_{max}\} \cup (m_{max},K_T]. 
\end{array} 
\right.
\end{align}
\end{proposition}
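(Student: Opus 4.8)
The plan is to obtain the exact characterization by matching the achievable NDT $\delta_{ach}(\mu,r)$ of Proposition~\ref{pro:NDT} with the lower bound $\delta_{lb}(\mu,r)$ of Proposition~\ref{pro:lower} on the two stated parameter ranges; since Proposition~\ref{pro:lower} already gives $\delta^*\ge\delta_{lb}$, only the reverse inequality $\delta_{ach}(\mu,r)\le\delta_{lb}(\mu,r)$ needs to be checked in each regime. I will use throughout that $\delta_E(m)=\max\{K_R/(m n_T),1\}$ by \eqref{ach:edge}--\eqref{uedge} (so in the regime of interest $K_R\ge n_T$ one has $\delta_E(1)=K_R/n_T$), and that $\bar\delta\ge1$, hence $\delta^*\ge1$, always.

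For the range $\mu K_T\in[0,1]$, $r\in[0,n_T/K_T]$, I would first evaluate the relevant multiplicities. Since $m_{max}\ge1$ in \eqref{mmax} we have $r\le n_T/K_T\le r_{th}$ and $K_Tr/n_T\le1$, so \eqref{mmin} and \eqref{mmins} force $m(r)=m^*(r)=1$, and then \eqref{m} gives $m(\mu,r)=1$ for every $\mu\le1/K_T$ (at the endpoint $\mu K_T=1$ this falls in the middle branch, with $\lfloor\mu K_T\rfloor=1$). Feeding this into \eqref{eq:deltaEF}--\eqref{eq:deltaEE1}, the sum $\delta_F(\mu,r)+\delta_E(\mu,r)$, viewed as a function of $\mu$, equals the constant $K_R/(K_Tr)+\delta_E(1)$ on $[0,1/K_T)$ (there $\lfloor\mu K_T\rfloor=0$), drops to $\delta_E(1)$ at $\mu=1/K_T$ (there $\delta_F$ has zero argument), and is a non-increasing staircase thereafter. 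The decisive step is the lower convex envelope in \eqref{ach:smallr}: I would show the vertex $(1/K_T,\delta_E(1))$ lies on the lower hull, so that on $[0,1/K_T]$ the l.c.e.\ is the chord from $(0,K_R/(K_Tr)+\delta_E(1))$ to $(1/K_T,\delta_E(1))$, giving $\delta_{ach}(\mu,r)=K_R(1-\mu K_T)/(K_Tr)+K_R/n_T$; since this is $\ge\delta_E(1)\ge1$, it already equals the $\max\{\cdot,1\}$ in \eqref{eq:pro21} with $m^*(r)=1$, which is the first line of \eqref{eq:pro3}.

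For the range $\mu K_T\in\{m(r)+1,\dots,m_{max}\}\cup(m_{max},K_T]$ I would argue with pure edge transmission. If $\mu K_T=j$ is an integer with $m(r)<j\le m_{max}$ (a nonempty set only for $r<r_{th}$, since $m(r)=m_{max}$ for $r\ge r_{th}$ by \eqref{mmin}), then by \eqref{m} we have $m(\mu,r)=j$, so $\delta_F(\mu,r)=\delta_F(0)=0$ and $\delta_{ach}(\mu,r)\le\delta_E(j)=\max\{K_R/(jn_T),1\}$; the matching lower bound \eqref{eq:pro22} applies because $j\ge m(r)+1>\sqrt{K_Tr/n_T}$ (using $[x]\ge x-\tfrac12$ in \eqref{mmin}) and $j\ge2>1$, so $j>m^*(r)$, whence $\delta_{lb}(\mu,r)=\max\{K_R/(\mu K_Tn_T),1\}$. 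If instead $\mu K_T\in(m_{max},K_T]$, then $m_{max}<K_T$, so by \eqref{mmax} $m_{max}=\lceil K_R/n_T\rceil$ and $u(m_{max})=K_R$; the edge-only scheme of Proposition~\ref{pro:NDTedge} (feasible for any $r$) with any $\mu$ satisfying $\mu K_T\ge m_{max}$ attains $\delta_E(m_{max})=1$, which together with $\delta^*\ge1$ forces $\delta^*(\mu,r)=1$, and since $\mu K_T>m_{max}\ge K_R/n_T$ the right-hand side of the second line of \eqref{eq:pro3} is $1$ as well.

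The main obstacle is the lower-convex-envelope step in the first regime: one must be sure that the l.c.e.\ of the staircase does not dip below the claimed chord on $[0,1/K_T]$ by connecting $(0,\cdot)$ to a later vertex $(\ell/K_T,\delta_E(\ell))$, $\ell\ge2$. By monotonicity and convexity of $\delta_E(\cdot)$ this reduces to the single inequality $\delta_E(1)-\delta_E(2)\le K_R/(K_Tr)$, and the point is that $r\le n_T/K_T$ makes it hold, since $K_R/(K_Tr)\ge K_R/n_T\ge K_R/(2n_T)\ge\delta_E(1)-\delta_E(2)$, whereas it can fail for larger $r$; this is precisely what fixes the threshold in the hypothesis. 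Everything else (evaluating $m(r)$, $m^*(r)$, $m(\mu,r)$ and $u(m_{max})$ via \eqref{mmax}, \eqref{rth}, \eqref{mmin}, \eqref{m}, \eqref{mmins}) is routine bookkeeping.
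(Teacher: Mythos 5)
Your proposal is correct and follows essentially the same route as the paper, whose proof of this proposition is literally the one-line ``direct comparison of the bounds'' in Propositions~\ref{pro:NDT} and~\ref{pro:lower}; you simply carry out that comparison explicitly (evaluating $m(r)$, $m^*(r)$, $m(\mu,r)$, the l.c.e.\ chord on $[0,1/K_T]$, and the edge-only matching for integer $\mu K_T>m(r)$ and for $\mu K_T>m_{max}$), all of which checks out. Your explicit caveat that $\delta_E(1)=K_R/n_T$ requires $K_R\geq n_T$ is an assumption the paper makes implicitly as well, so it does not constitute a deviation.
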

\begin{proof}
The result follows by the direct comparison of the bounds in Proposition 1 and Proposition 2.
\end{proof}

\begin{figure}[t!] 
  \centering
\includegraphics[width=0.45\columnwidth]{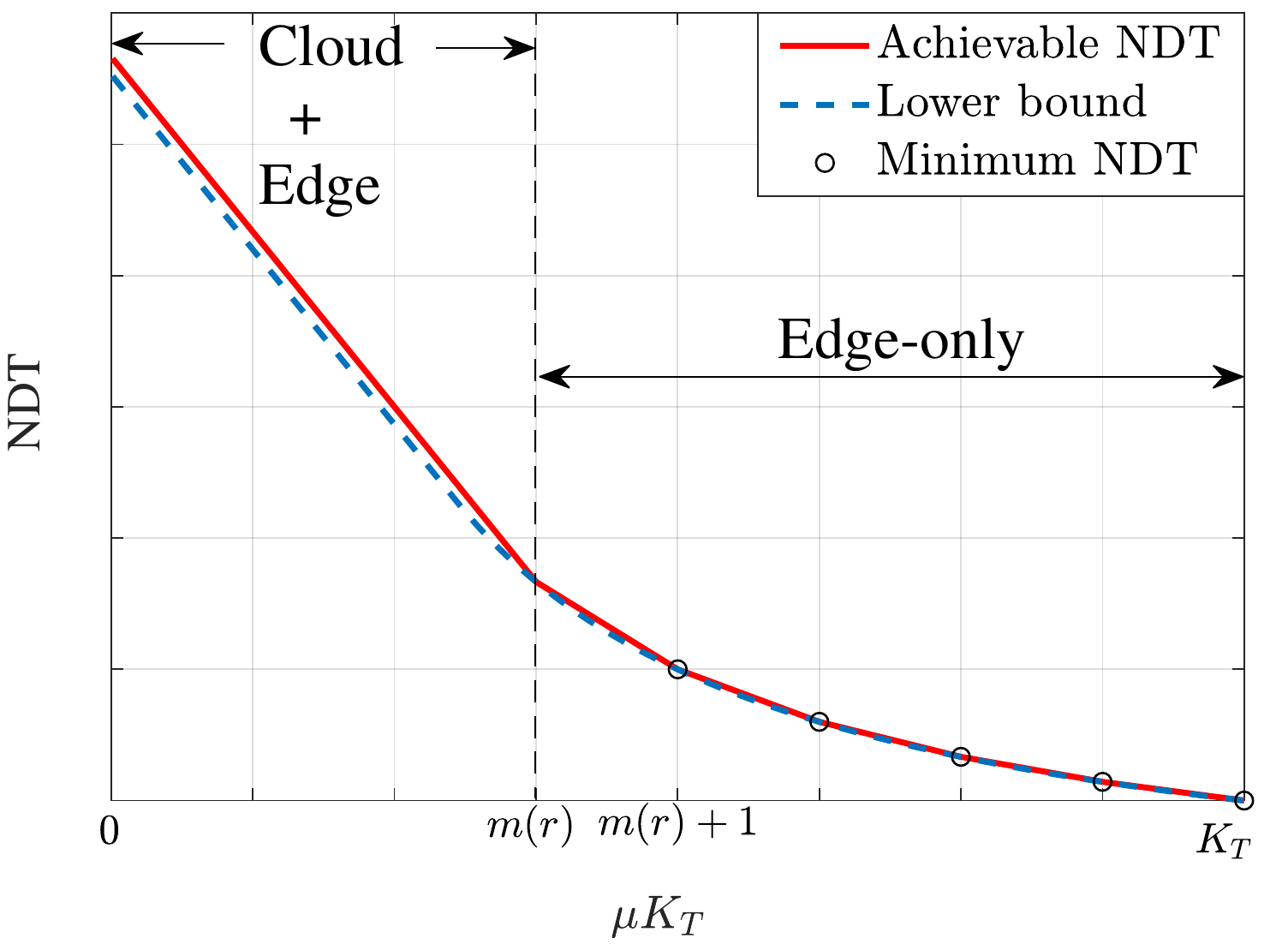}
\caption{Achievable NDT $\delta_{ach}(\mu, r)$ in Proposition~\ref{pro:NDT} (solid curve), and lower bound on the minimum NDT $\delta^*(\mu, r)$ in Proposition 2 (dashed line) versus $\mu$ for a given value of $r$. The figure highlights the two regimes of values of the cache capacity $\mu$ with which the achievable schemes use edge-only or both cloud and edge transmission. }
\label{fig:NDT}
\end{figure}

More generally, the achievable NDT in Proposition \ref{pro:NDT} is within a factor of $3/2$ from minimum NDT for any fractional caching size $\mu$ and fronthaul rate $r$.
\begin{proposition} \label{pro:gap}
For an F-RAN system with $n_T$ antennas at each EN, and any value of $\mu\geq 0$ and $r \geq 0$, we have the inequality
\begin{align} 
\frac{\delta_{ach}(\mu, r) }{\delta^*(\mu, r) } \leq \frac{3}{2} .
\end{align}
\end{proposition}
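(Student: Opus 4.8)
The plan is to compare the upper bound of Proposition~\ref{pro:NDT} with the lower bound of Proposition~\ref{pro:lower}: since $\delta^*(\mu,r)\ge\delta_{lb}(\mu,r)$, it suffices to prove $\delta_{ach}(\mu,r)\le\tfrac{3}{2}\,\delta_{lb}(\mu,r)$ for all $\mu\ge 0$, $r\ge 0$. By Proposition~\ref{pro3} the ratio is already $1$ for $\mu K_T$ integer in $\{m(r)+1,\dots,m_{max}\}\cup(m_{max},K_T]$ and in the corner $\mu K_T\in[0,1]$, $r\le n_T/K_T$, so only the remaining $(\mu,r)$ need attention. I will repeatedly use that, for fixed $r$, $\delta_{ach}(\cdot,r)$ is convex in $\mu$ (being a lower convex envelope), hence lies below the chord through any two points of its own graph, and that $m(r)=[m^*(r)]$, so $m(r)/m^*(r)\in[\tfrac23,\tfrac43]$ because $m^*(r)\ge 1$.

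\emph{Cloud-and-edge regime $\mu K_T<m(r)$.} Since $\delta_{ach}(\cdot,r)$ is convex with $\delta_{ach}(0,r)\le\delta_F(m(r))+\delta_E(m(r))$ (the fronthaul-only figure of Proposition~\ref{pro:NDTft}) and $\delta_{ach}(m(r)/K_T,r)\le\delta_E(m(r))$, it lies below the corresponding chord on $[0,m(r)/K_T]$:
\[
\delta_{ach}(\mu,r)\ \le\ \frac{K_R\,(m(r)-\mu K_T)}{K_T\,r}\ +\ \frac{K_R}{u(m(r))} .
\]
Put $m=m(r)$, $m^*=m^*(r)$, $c=K_R/(K_T r)$, $t=\mu K_T$; for $r<r_{th}$ one has $m^*=\sqrt{K_T r/n_T}$, hence $K_R/(m^*n_T)=c\,m^*$ and $K_R/(mn_T)=c\,(m^*)^2/m$ (the case $r\ge r_{th}$ is the rounding-free case $m=m^*=m_{max}$ and is subsumed below). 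In the main sub-case $m\,n_T\le K_R$, using $\delta_{lb}(\mu,r)\ge c(2m^*-t)$ from \eqref{eq:pro21} — valid for $t<m^*$, the residual interval $t\in[m^*,m)$ (nonempty only when $m(r)$ rounds up) being covered by \eqref{eq:pro22} — the claim reduces to $m-t+(m^*)^2/m\le\tfrac32(2m^*-t)$; the left side increases in $t$, so it is enough to check $t=m$, where it becomes $(m/m^*-1)^2\le\tfrac13$, true because $m/m^*\in[\tfrac23,\tfrac43]$. In the complementary sub-case $m\,n_T>K_R$, which forces $m=m_{max}=\lceil K_R/n_T\rceil$, $n_T>K_R$ and $\delta_E(m)=1$, I use $\delta_{ach}(\mu,r)\le c(m-\mu K_T)+1$ and $\delta_{lb}(\mu,r)=\max\{c(m^*-\mu K_T)+K_R/(m^*n_T),\,1\}$, together with $K_R/n_T<1$ and $c\le K_R/\bigl(n_T(m-\tfrac12)^2\bigr)$; an elementary computation gives the ratio $\le\tfrac32$, attained as $n_T\to 2K_R$ and $\mu K_T\to 0$ — this configuration is what pins the constant $\tfrac32$, and the bound is strictly slack everywhere else.

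\emph{Edge-only regime $\mu K_T\ge m(r)$.} By Proposition~\ref{pro3} only the slab $m(r)\le\mu K_T<m(r)+1$ is left; there $\delta_{ach}(\cdot,r)$ is the l.c.e. chord between $K_R/u(m(r))$ and $K_R/u(m(r)+1)$, affine in $\mu$, while $\delta_{lb}(\mu,r)\ge\max\{K_R/(\mu K_T n_T),1\}$ (the thin sub-slab with $\mu K_T<m^*(r)$, if present, handled via $\delta_{lb}\ge K_R/(m^*(r)n_T)$). A ratio of affine-in-$\mu$ expressions is monotone, so it is maximized at a slab endpoint; a direct calculation then bounds it by $\tfrac98<\tfrac32$, with value $1$ at the integer endpoint $\mu K_T=m(r)$.

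The main obstacle is the interplay between the rounding $m(r)=[m^*(r)]$ and the saturation of the achievable edge NDT at $1$, while the ``genie'' edge term $K_R/(m^*(r)n_T)$ in Proposition~\ref{pro:lower} need not saturate: one must carve out the sub-case $m(r)n_T>K_R$ and verify that the slack there is exactly $\tfrac32$ and no worse. The remainder is a routine, if slightly tedious, substitution of the closed-form expressions of Propositions~\ref{pro:NDT} and~\ref{pro:lower} into the two regimes.
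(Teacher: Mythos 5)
Your route is genuinely different from the paper's: you compare $\delta_{ach}$ directly against the converse of Proposition~\ref{pro:lower}, using convexity of the l.c.e.\ to replace $\delta_{ach}$ by chords, whereas the paper never does this comparison directly; it first manufactures a weaker surrogate $\delta'_{lb}\le\delta_{lb}$ from the convexity of $\delta^*(\cdot,r)$ itself, anchored at the integer points $\mu K_T\in\{m(r)+1,\dots,m_{max}\}$ where Proposition~\ref{pro3} gives exact optimality, and then bounds the ratio piecewise. Your unsaturated cloud-and-edge computation is correct (the reduction to $\tfrac32 q^2-3q+1\le 0$ with $q=m(r)/m^*(r)\in[2/3,4/3]$ checks out), but the proposal has a genuine gap exactly in the saturated sub-case you single out. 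The condition $m(r)n_T>K_R$ does \emph{not} force $n_T>K_R$ or $K_R/n_T<1$; it only forces $m(r)=m_{max}=\lceil K_R/n_T\rceil$ with $K_R/n_T$ non-integer, which happens with $m_{max}\ge 2$ as well (your analysis effectively covers only $m_{max}=1$). In that overlooked region the inequality your whole strategy needs, $\delta_{ach}(\mu,r)\le\tfrac32\,\delta_{lb}(\mu,r)$, is simply false. Concretely, take $K_T=100$, $n_T=20$, $K_R=21$ and $r$ with $K_Tr/n_T=2.2801$, so $r<r_{th}$, $m^*(r)=1.51$, $m(r)=2$, $u(2)=K_R$ and $c\defeq K_R/(K_Tr)\approx 0.4605$. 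The scheme of Proposition~\ref{pro:NDT} then has $\delta_E=1$ and its l.c.e.\ equals $1+c\,(2-\mu K_T)$ on $\mu K_T\in[0,2]$; at $\mu K_T\approx 0.8485$ (where the first branch of \eqref{eq:pro21} equals one) this gives $\delta_{ach}\approx 1.53$ while $\delta_{lb}=1$, i.e.\ a ratio above $3/2$. So no argument that lower-bounds $\delta^*$ only through Proposition~\ref{pro:lower} can close this case; the culprit is that $m(r)=[\sqrt{K_Tr/n_T}]$ overshoots the saturation level $K_R/n_T$, paying fronthaul for edge gains that never materialize, and \eqref{eq:pro21} does not charge for this. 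Your claim that ``an elementary computation gives the ratio $\le 3/2$'' there, with the worst case at $n_T\to 2K_R$, is only correct in the $m_{max}=1$ instance of saturation.

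Two secondary points. First, in the edge-only regime your monotonicity argument is off: the ratio of the chord to $\max\{K_R/(\mu K_T n_T),1\}$ is not a ratio of affine functions (the denominator behaves like $1/\mu$), and its maximum sits in the interior of the slab, not at an endpoint; without saturation it equals $1+1/(4i(i+1))\le 9/8$, but with partial saturation inside a slab (e.g.\ $K_R=1.5\,n_T$, $\mu K_T=1.5$) it reaches $5/4$. This does not threaten the $3/2$ conclusion, but the stated reasoning and the $9/8$ constant are not right as written. Second, the residual interval $\mu K_T\in[m^*(r),m(r))$ that you wave at ``via \eqref{eq:pro22}'' does work out (a short computation bounds it by $4/3$), but it is asserted, not proved. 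The essential repair, however, is the saturated corner above: you would need either a strengthened converse at such $(\mu,r)$ or the paper's device of exploiting convexity of $\delta^*$ together with the exactly-optimal points of Proposition~\ref{pro3}, neither of which appears in the proposal.
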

\begin{proof}
The proof is presented in Appendix \ref{gap}.
\end{proof}

A plot of the achievable NDT $\delta_{ach}(\mu, r)$ and of the lower bound $\delta_{lb}(\mu, r)$ as a function of the fractional cache capacity $\mu$ is shown in Fig.~\ref{fig:NDT} for a given value of $r$. As discussed, the achievable scheme uses both cloud and edge resources when $\mu K_T<m(r)$, while it uses only edge transmission when $\mu K_T \geq m(r)$. In the first regime, the fronthaul NDT decreases linearly with $\mu K_T$, which leads to a linear decrease in the overall NDT $\delta_{ach}(\mu, r)$. Instead, in the second regime, the achievable NDT $\delta_{ach}(\mu, r)$ is piece-wise linear and decreasing. For this range of values of $\mu$, time-sharing between two successive multiplicities is carried out for delivery, unless $\mu K_T$ is an integer. 
By comparison with the lower bound, the figure also highlights the regimes, identified in Proposition~\ref{pro3}, in which the scheme is optimal. 

\begin{figure}[t!] 
  \centering
\includegraphics[width=0.45\columnwidth]{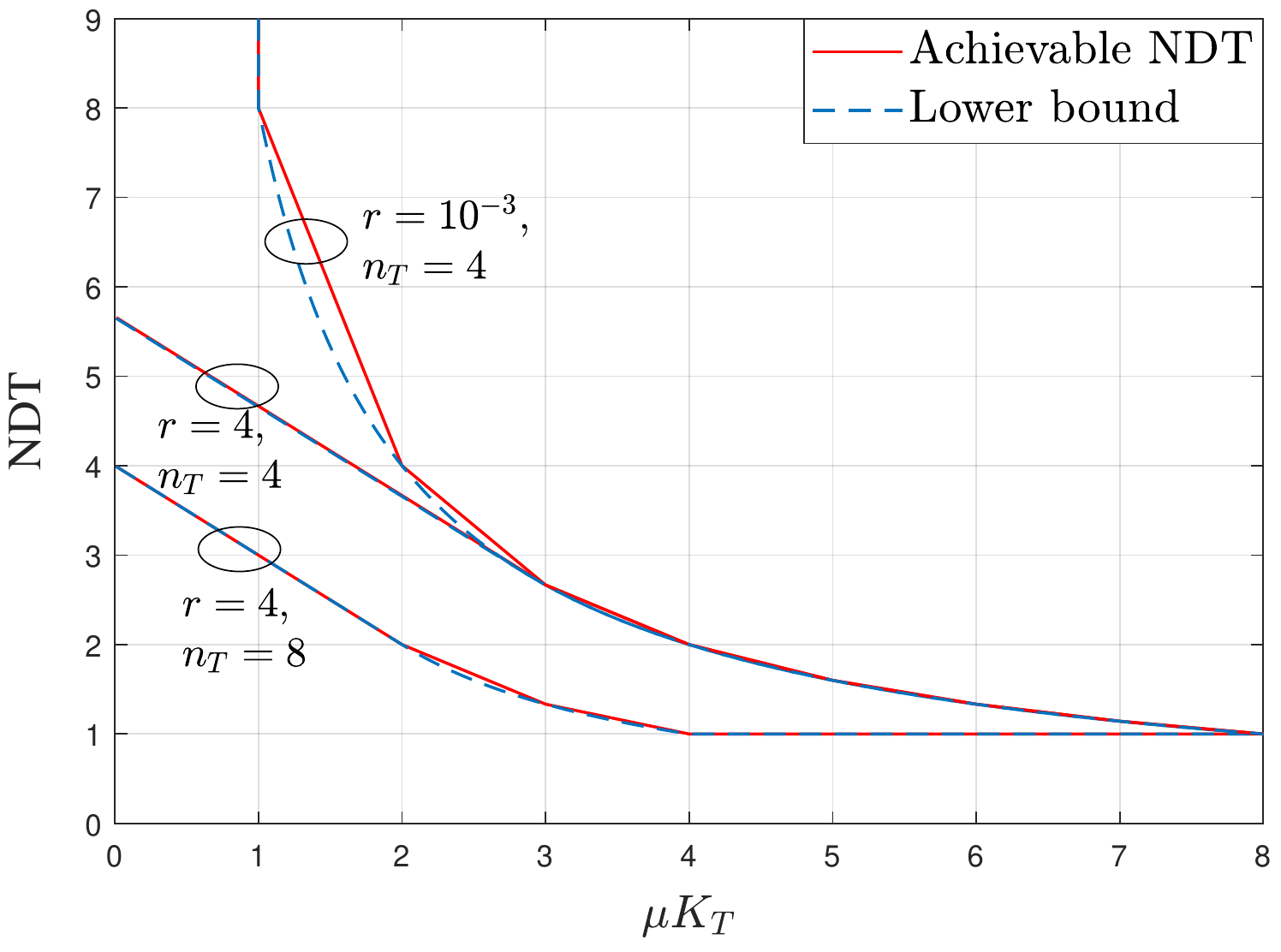}
\caption{Achievable NDT $\delta_{ach}(\mu, r)$ and lower bound $\delta_{lb}(\mu, r)$ versus $\mu$ for different values of $r$ and $n_T$, with $K_T=8$ and $K_R=32$.}
\label{fig:NDTrs}
\end{figure}

The achievable NDT in Proposition 1 and lower bound in Proposition 2 are plotted in Fig. \ref{fig:NDTrs} as a function of $\mu$ for $K_T=8$ and $K_R=32$ and for different values of $r$ and $n_T$. As stated in Proposition 3, the achievable NDT is optimal when $\mu$ and $r$ are small enough, as well as when $\mu$ equals a multiple of $1/K_T=1/8$ or is large enough. For values of $r$ close to zero, the NDT diverges as $\mu$ tends to $1/K_T=1/8$, since requests cannot be supported based solely on edge transmission. For larger values of $r$ and/or $n_T$, the NDT decreases. In particular, when $\mu K_T\geq m_{max}=4$ and $n_T=8$, as discussed, we have the ideal NDT of one, since the maximum possible number $32$ of users can be served.

\begin{figure}[t!] 
  \centering
\includegraphics[width=0.45\columnwidth]{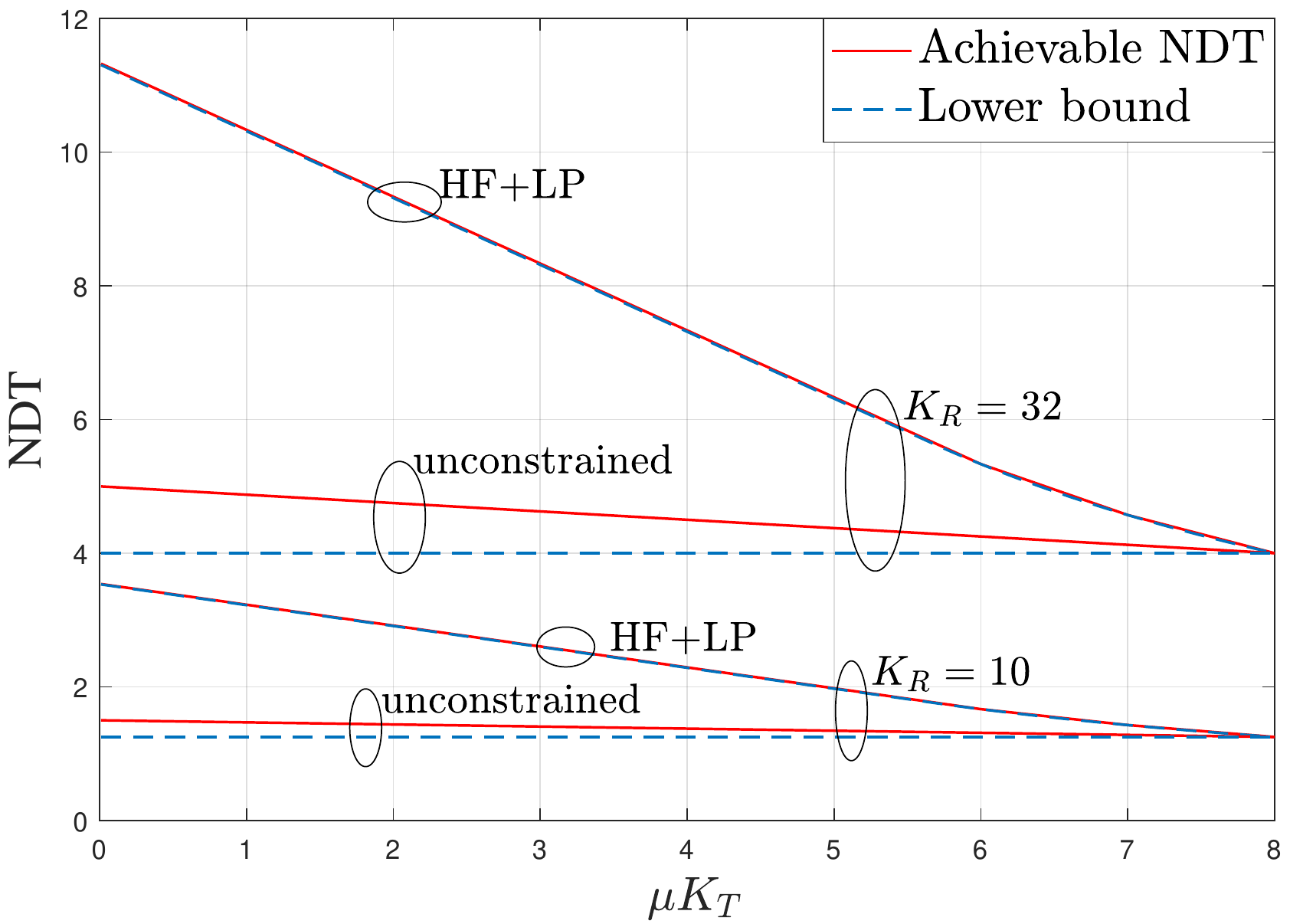}
\caption{Comparison of the bounds derived in this paper under the constraints of hard-transfer fronthauling (HF) and one-shot linear precoding (LP) with the bounds derived in \cite{STS:17} without such constraints, with $K_T = 8, n_T = 1$ and $r = 4$.}
\label{fig:NDTrsj}
\end{figure}

Finally, for reference, a comparison of the bounds derived here under the assumptions of hard-transfer fronthauling (HF), i.e., the transmission of uncoded files on the fronthaul links, and of one-shot linear precoding (LP) with those derived in [7, Corollary 1 and Proposition 4] without such constraints, is illustrated in Fig.~\ref{fig:NDTrsj}. We recall that the achievable scheme in \cite{STS:17} is based on fronthaul quantization and on delivery via interference alignment and ZF precoding. The figure is obtained for $K_T=8, n_T=1, r=4$, and for different values of $K_R$. It is observed that the loss in performance caused by the practical constraints considered in this work is significant, and that it increases with the number $K_R$ of users. This conclusion confirms the discussion in [7, Sec. IV-B].

\section{Pipelined fronthaul-edge transmission} \label{sec:pipeline}
In this section, we consider pipelined fronthaul-edge transmission, whereby fronthaul transmission from the cloud to the ENs and edge transmission from the ENs to the users can take place simultaneously. Note that this is possible due to the orthogonality of the two channels. We first describe how the system model and performance metric are modified as compared to the serial model of Section~\ref{sec:model}, and then we present upper and lower bounds on the minimum NDT. As for the serial case, the bounds will reveal that the proposed cloud and cache-aided transmission policy is optimal for a large range of system parameters, and is generally within a multiplicative factor, here of two, from the information-theoretic optimal performance. Importantly, in contrast to the approximately optimal serial strategy, the proposed scheme for pipelined transmission leverages fronthaul transmission for any non-zero value of the fronthaul rate $r$ and for any value of the fractional cache capacity $\mu$ less than one. 

\subsection{System Model and Performance Metric}
The system model is as in Section~\ref{sec:model}, with the main caveat that fronthaul and wireless transmissions can occur simultaneously. Specifically, caching is defined as in Section~\ref{sec:model} by sets $\{\mathcal{F}_{i}\}_{i=1}^{K_T}$. As shown in Fig.~\ref{fig:pipeline}, the overall delivery time for a given request vector is organized into $B$ blocks. In any block $b \in [B]$, the cloud transmits the packets in set $\mathcal{F}_i(b)=\{\mathcal{F}_{id_1}(b),~\cdots,~\mathcal{F}_{id_K}(b)\}$ to EN $i$, where $\mathcal{F}_{id_k}(b)\in \mathcal{F}_{id_k}$ is a subset of the packets requested by user $k$. In the same block, each EN $i$ sends the subset $\mathcal{D}(b)$ of requested packets to a subset $\mathcal{R}(b)$ of users by utilizing the cached contents and fronthaul information $\{\mathcal{F}_i(b')\}_{i=1}^{K_T}$ received in the previous blocks $b'=1,~\cdots,~b-1$. Note that the ENs can use the information received on the fronthaul in a causal way along the blocks. As in \eqref{def:TF}, the duration of the fronthaul transmission in each block $b$ is given as
\begin{align}
T_F(b) = \max_{i\in[K_T]}\frac{|\mathcal{F}_i(b)|L}{F}\frac{1}{C_F}, \label{def:TFP}
\end{align}
and, following \eqref{def:TE}, the edge transmission time is given as the sum over the blocks
\begin{align}
T_E(b)=\frac{L}{F}\frac{1}{(\log(P) + o(\log(P))}. \label{def:TEP}
\end{align}
Since each block needs to accommodate both fronthaul and edge transmissions, the duration of a block is the maximum of the above two times, i.e., $T_P(b)=\max\{T_F(b),T_E(b)\}$. The total delivery time is hence given as 
\begin{align}\label{pt}
T_P=\sum_{b=1}^{B} T_P(b)=\sum_{b=1}^{B} \max\{T_E(b), T_F(b)\}.
\end{align}
Finally, following \eqref{def:DF}-\eqref{def:DE}, the pipelined NDT $\delta_{P}$ is computed as the limit
\begin{align}\label{def:deltap}
\delta_P =\lim_{P\rightarrow\infty} \lim_{L\rightarrow\infty}  \frac{T_P}{L/\log(P)}.
\end{align}
The minimum NDT $\delta_P^*(\mu,r)$ is defined as in \eqref{def:deltabar}-\eqref{def:delta}. Following the same argument in [7, Lemma 4], the minimum NDT $\delta_P^*(\mu,r)$ pipelined delivery satisfies the inequalities $\delta^*(\mu,r)/2 \leq \delta_P^*(\mu,r)\leq \delta^*(\mu,r)$, and hence the improvement in NDT under pipelined transmission is upper bounded by a factor of two.

\begin{figure}[t!] 
  \centering
\includegraphics[width=0.55\columnwidth]{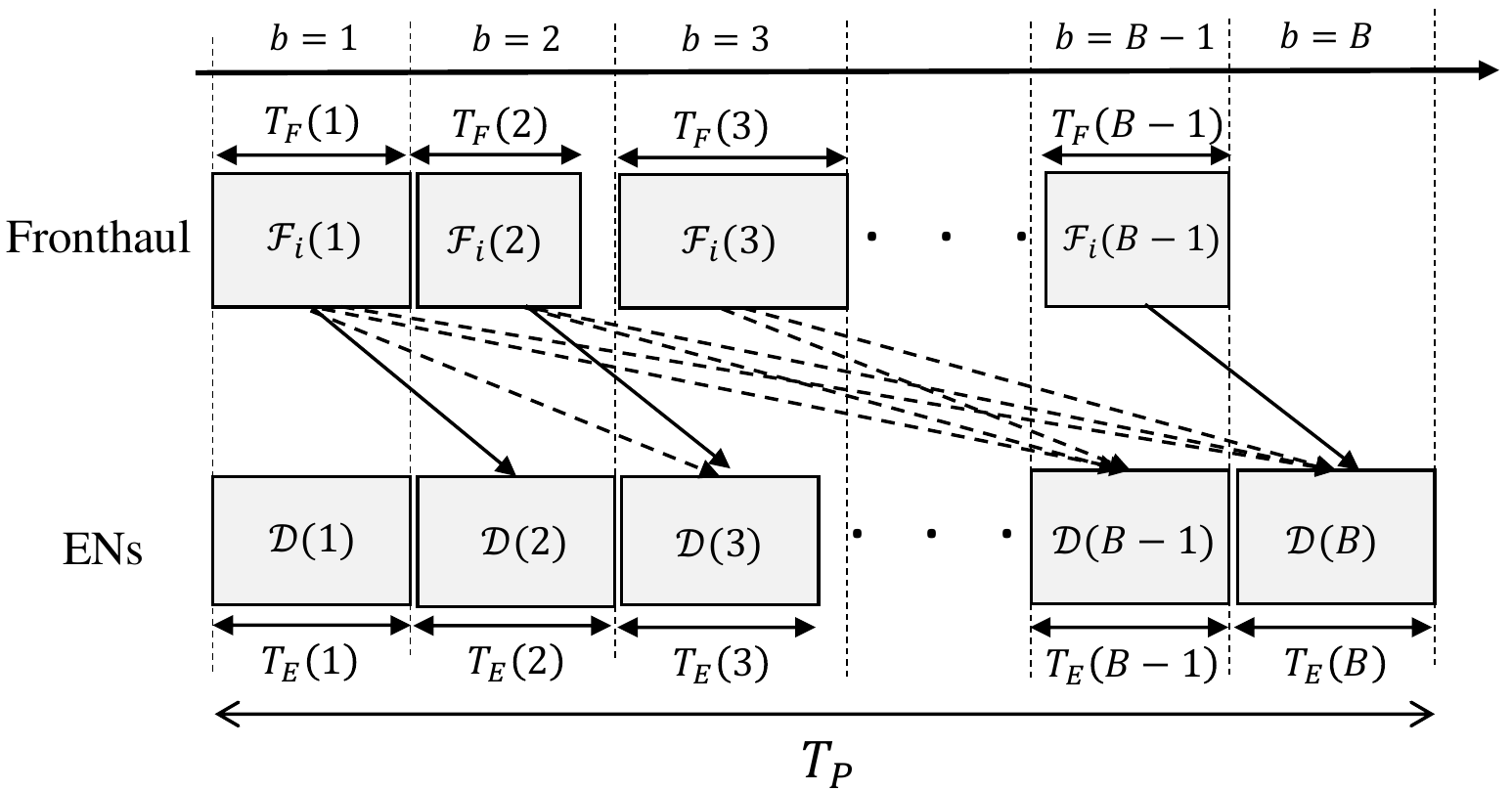}
\caption{Illustration of pipelined F-RAN operation for general strategies (both dashed and solid lines) and for block-Markov transmission (solid lines only).}
\label{fig:pipeline}
\end{figure}


\subsection{Achievable Scheme and Upper Bound on the Minimum NDT}
In this section, we derive an achievable NDT by proposing a caching and delivery strategy that leverage simultaneous fronthaul and edge transmission. We start by observing that any serial strategy defined by the tuple $\{\mathcal{C}_i, \mathcal{F}_i, \{\vv_{inf}(b)\}_{n\in[N], f\in[F],b\in[B]}\}_{i=1}^{K_T}$, as described in Section~\ref{sec:model}, can be converted into a pipelined transmission strategy $\{\mathcal{C}_i, \{\mathcal{F}_i(b)\}_{b\in[B]}, \{\vv_{inf}(b)\}_{n\in[N], f\in[F],b\in[B]}\}_{i=1}^{K_T}$. This is done by means of block-Markov transmission, as illustrated in Fig.~\ref{fig:pipeline}. To this end, caching is carried out in the same way as in the serial strategy. For delivery, any packet in the fronthaul message $\mathcal{F}_i$ is split into $B-1$ equal subpackets, and each $b$th subpacket is placed in the set $\mathcal{F}_i(b)$ communicated in block $b=1,2,\cdots,B-1$. The edge delivery scheme for the serial strategy is applied in each following block $b=2,\cdots,B$ to the subpackets received in the previous block $b-1$.

Suppose that the original serial scheme has fronthaul and edge NDTs $\delta_F$ and $\delta_E$, respectively. Then, by the definitions \eqref{def:TFP} and \eqref{def:TEP}, the contribution to the NDT for each block $b$ is given as $\max\{\delta_F,\delta_E\}/(B-1)$, since each block delivers a fraction $1/(B-1)$ of each packet. Hence, the overall NDT is given as $\delta_P=(B/B-1)\max\{\delta_F, \delta_E\}$, which yields for $B \rightarrow \infty$ the NDT 
\begin{align} \label{def:deltap}
\delta_P = \max\{\delta_F,\delta_E\}.
\end{align}

Based on the described block-Markov approach, as a first solution, one could convert the approximately optimal serial policy derived in the previous section to obtain an achievable NDT for the pipelined model. However, in the proposed serial strategy, it can be proved that the edge NDT $\delta_E(\mu,r)$ in \eqref{eq:deltaEE1} is generally larger than the fronthaul NDT $\delta_F(\mu,r)$ in \eqref{eq:deltaEF}. By \eqref{def:deltap}, under pipelined transmission, the latency is determined by the maximum of fronthaul and edge latencies. Therefor, this first solution would leave open the possibility to increase the content multiplicity at the edge by sending more information through the fronthaul links to the ENs without increasing the system latency. This observation is leveraged by the proposed scheme. 

\begin{figure}[t!] 
  \centering
\includegraphics[width=0.48\columnwidth]{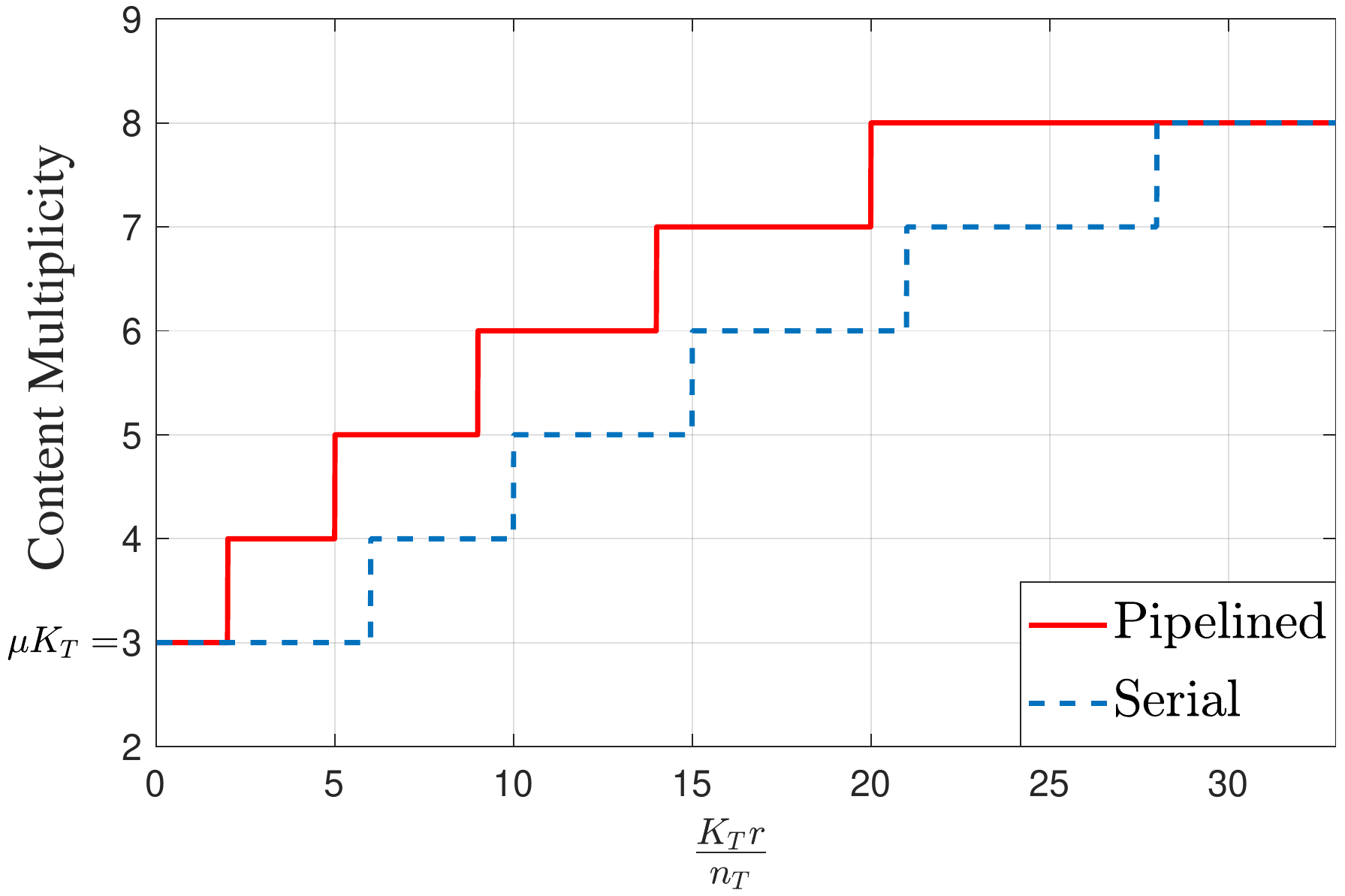}
\caption{Multiplicity $m_p(\mu,r)$ in \eqref{mp} under pipelined transmission and $m(\mu,r)$ in \eqref{m} under serial transmissions for $K_T=8, K_R=32, n_T=4$ and $\mu K_T=3$. }
\label{fig:mps}
\end{figure}

To start, we define a multiplicity $m_p(\mu,r)$ for the pipelined model, which is no less than the serial multiplicity $m(\mu,r)$ in \eqref{m}. This is done by first evaluating the multiplicity that ensures that fronthaul and edge transmission NDTs in \eqref{eq:deltaEF} and \eqref{eq:deltaEE1}, respectively, are equal, obtaining
\begin{align} \label{def:mps}
m_{eq}(\mu,r) = \frac{\mu K_T}{2}+\frac{\sqrt{(\mu K_T n_T)^2+4n_T K_T r}}{2 n_T}.
\end{align}
In order to account for the maximum multiplicity $m_{max}$ \eqref{mmax} and for the requirement that the multiplicity be an integer, we then define the multiplicity adopted by the proposed scheme as 
\begin{align} \label{mp}
m_p(\mu,r)= \left\{
\begin{array}{ll} 
\max\{\lfloor m_{eq}(\mu,r)  \rfloor,1 \}, &  \text{for}~ \mu K_T \leq m_{max}-\frac{K_T r}{m_{max} n_T} \\
       m_{max}, & \text{for}~  \mu K_T \geq m_{max}-\frac{K_T r}{m_{max} n_T}.
\end{array} 
\right.
\end{align}

As an example, the multiplicities $m(\mu,r)$ in \eqref{m} and $m_p(\mu,r)$ in \eqref{mp} under serial and pipelined transmissions, respectively, are plotted in Fig.~\ref{fig:mps} for $K_T=8, K_R=32, n_T=4$ and $\mu K_T=3$. Both multiplicities increase with $r$ from the multiplicity $\mu K_T$ that can be ensured by the edge cache resource only. The figure confirms that the proposed pipelined transmission scheme increases the multiplicity by leveraging simultaneous fronthaul and edge transmissions. The resulting achievable NDT is presented in the following proposition.

\begin{proposition} \label{pro:NDTp}  
For an F-RAN system with $n_T$ antennas, we have the upper bound $\delta_{p}^*(\mu,r)\leq \delta_{p,ach}(\mu,r)$ on the minimum NDT for pipelined fronthaul-edge transmission, where  
\begin{align} \label{achp}
\delta_{p,ach}(\mu,r)= \left\{
\begin{array}{ll} 
\max\big\{\frac{K_R(1-\mu K_T)}{K_T r}, 1 \big\}, &  \text{for}~ \mu K_T \leq \big(1-\frac{K_T r}{ n_T}\big)^+ \\
       \text{l.c.e.}\big\{ \max\big\{ \frac{K_R}{m_p(\mu,r)n_T}, 1 \}\big\}, & \text{for}~  \mu K_T \geq \big(1-\frac{K_T r}{ n_T}\big)^+
\end{array} 
\right.
\end{align} 
\end{proposition}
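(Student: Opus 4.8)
The plan is to construct the pipelined scheme by converting the serial caching-and-delivery strategy of Section~\ref{sec:edgeft} via the block-Markov argument already established in \eqref{def:deltap}, but applied with the enlarged multiplicity $m_p(\mu,r)$ rather than $m(\mu,r)$. Concretely, for a fixed target multiplicity $m$ (to be chosen as $m_p(\mu,r)$), one caches as in Section~\ref{sketchedge} so that each requested file has multiplicity $\lfloor \mu K_T\rfloor$ from the edge caches, and one uses the fronthaul to top up the multiplicity to $m$, exactly as in the serial scheme of Proposition~\ref{pro:NDT}. By \eqref{eq:deltaEF} and \eqref{eq:deltaEE1}, this yields fronthaul NDT $\delta_F(m)=\frac{K_R(m-\lfloor\mu K_T\rfloor)}{K_T r}$ and edge NDT $\delta_E(m)=\frac{K_R}{u(m)}=\frac{K_R}{\min\{mn_T,K_R\}}$. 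Feeding this serial scheme into the block-Markov conversion \eqref{def:deltap} gives the achievable pipelined NDT $\delta_P=\max\{\delta_F(m),\delta_E(m)\}$. The remaining task is to optimize the choice of $m$ so as to minimize this maximum.

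The optimization is a standard max-of-two-monotone-functions argument. Ignoring integer constraints and the saturation $u(m)=K_R$, $\delta_F$ is increasing in $m$ while $\delta_E=\frac{K_R}{mn_T}$ is decreasing; hence the unconstrained minimum of the max is attained where the two are equal, $\frac{K_R(m-\mu K_T)}{K_T r}=\frac{K_R}{m n_T}$, i.e. $m n_T(m-\mu K_T)=K_T r$, whose positive root is precisely $m_{eq}(\mu,r)$ in \eqref{def:mps}. Reinstating the cap $m\le m_{max}$ and the edge saturation: when $\mu K_T\ge (1-\tfrac{K_Tr}{n_T})^+$ — equivalently, when even a single EN per content ($m=1$), after fronthaul top-up, would already push $\delta_F$ above $\delta_E$ at $m=m_{max}$ or the edge becomes the binding term — one picks $m=m_p(\mu,r)$ according to \eqref{mp} and the NDT becomes $\max\{\delta_E(m_p(\mu,r)),1\}=\max\{\frac{K_R}{m_p(\mu,r)n_T},1\}$, since by the definition of $m_{eq}$ the fronthaul term is dominated by the edge term at $m=\lfloor m_{eq}\rfloor$, and the $1$ accounts for the universal lower bound $\delta_P\ge 1$. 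Taking the l.c.e.\ over $\mu$ (equivalently, allowing cache- and time-sharing as in \cite{STS:17}, [7, Lemma~1]) gives the second branch of \eqref{achp}. In the complementary regime $\mu K_T\le(1-\tfrac{K_Tr}{n_T})^+$, the cache is so small that even with $m=m_{max}$ (or indeed $m$ chosen so that $mn_T\ge K_R$, making the edge NDT equal to its floor $1$) the fronthaul term $\delta_F$ still dominates; one then takes $m$ just large enough that $u(m)=K_R$, giving $\delta_E=1$ and $\delta_F=\frac{K_R(1-\mu K_T)}{K_T r}$ after noting that the relevant top-up is $m-\lfloor\mu K_T\rfloor$ and, at the optimum in this regime, effectively a $1-\mu K_T$ fraction of each file must traverse each fronthaul link at rate $r$ toward a cluster serving all $K_R$ users; hence $\delta_P=\max\{\frac{K_R(1-\mu K_T)}{K_T r},1\}$, the first branch of \eqref{achp}.

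The main obstacle is the careful bookkeeping at the boundary $\mu K_T=(1-\tfrac{K_Tr}{n_T})^+$ and, more generally, verifying that rounding $m_{eq}$ down to $\lfloor m_{eq}\rfloor$ (rather than to the nearest integer) still leaves the edge term dominant, so that $\delta_P=\delta_E(m_p)$ and not $\delta_F(m_p)$; this requires checking that decreasing $m$ from $m_{eq}$ to $\lfloor m_{eq}\rfloor$ does not raise $\delta_F$ above $\delta_E(\lfloor m_{eq}\rfloor)$, which follows because $\delta_F$ is increasing in $m$ and $\delta_E$ is decreasing, so $\delta_F(\lfloor m_{eq}\rfloor)\le\delta_F(m_{eq})=\delta_E(m_{eq})\le\delta_E(\lfloor m_{eq}\rfloor)$. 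One also has to confirm that the caching/packetization of Section~\ref{sketchedge} still realizes the top-up multiplicity $m$ when $\lfloor\mu K_T\rfloor<m\le m_{max}$ — this is exactly the construction in the proof of Proposition~\ref{pro:NDT} (Appendix~\ref{sec:proof}), invoked verbatim — and that the block-Markov splitting into $B-1$ subpackets, with $B\to\infty$, is compatible with the (finite) packetization $F=F_CF_D$, which it is since one simply refines each packet further. These are all routine once the convexity/monotonicity structure above is in place.
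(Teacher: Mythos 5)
Your overall route is the paper's route: convert the serial scheme of Proposition~\ref{pro:NDT} into a pipelined one by block-Markov transmission so that $\delta_P=\max\{\delta_F,\delta_E\}$, equalize the two NDTs to get $m_{eq}(\mu,r)$ in \eqref{def:mps}, round down to $m_p(\mu,r)$ in \eqref{mp}, cap at $m_{max}$, and smooth with the l.c.e.; your monotonicity check that $\delta_F(\lfloor m_{eq}\rfloor)\le\delta_F(m_{eq})=\delta_E(m_{eq})\le\delta_E(\lfloor m_{eq}\rfloor)$ is a correct (and welcome) justification of the rounding that the paper leaves implicit.

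However, your treatment of the first branch, $\mu K_T\le\big(1-\frac{K_Tr}{n_T}\big)^+$, contains a genuine error. You propose to take ``$m$ just large enough that $u(m)=K_R$'' (i.e., $m=m_{max}$) and assert $\delta_E=1$, $\delta_F=\frac{K_R(1-\mu K_T)}{K_Tr}$. But for the scheme you are invoking, the per-EN fronthaul load scales with the multiplicity: by \eqref{eq:deltaF}--\eqref{eq:deltaEF}, choosing multiplicity $m$ gives $\delta_F=\frac{K_R(m-\lfloor\mu K_T\rfloor)}{K_Tr}$, so with $m=m_{max}>1$ the fronthaul NDT is $\frac{K_R(m_{max}-\lfloor\mu K_T\rfloor)}{K_Tr}$, strictly larger than the value you claim, and the first branch of \eqref{achp} is not achieved; the statement that only a $1-\mu K_T$ fraction of each file crosses each fronthaul link holds only for $m=1$. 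The correct instantiation (and the one in the paper, consistent with \eqref{mp} since $m_{eq}(\mu,r)\le1$ exactly in this regime) is $m_p(\mu,r)=1$: then $\delta_E(1)=\frac{K_R}{\min\{n_T,K_R\}}$, which is dominated by the fronthaul term because $1-\mu K_T\ge\frac{K_Tr}{n_T}$ there (note $\delta_E(1)=1$ only if $n_T\ge K_R$, not in general), and the factor $(1-\mu K_T)$ in place of $1-\lfloor\mu K_T\rfloor=1$ is obtained by cache/time-sharing (the l.c.e.\ over $\mu$) between the $\mu=0$ point, of NDT $\frac{K_R}{K_Tr}$, and the regime boundary $\mu K_T=1-\frac{K_Tr}{n_T}$, of NDT $\max\{\frac{K_R}{n_T},1\}$; the interpolating line is exactly $\frac{K_R(1-\mu K_T)}{K_Tr}$. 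With this correction your argument matches the paper's proof; the rest of your write-up (second branch, saturation at $m_{max}$, compatibility of block-Markov subpacketization with the finite packetization) is sound.
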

\begin{proof}
As discussed, the proposed scheme leverages block-Markov delivery and uses the multiplicity \eqref{mp}. As a result, the NDT is given by \eqref{def:deltap} with $\delta_F$ in \eqref{eq:deltaEF} and $\delta_E$ in \eqref{eq:deltaEE1}, where the multiplicity is in \eqref{mp}. Note that, with this scheme, for the small cache regime of $\mu K_T \leq (1-K_T r /n_T)^+$, by \eqref{mp}, we have the multiplicity $m_p(\mu,r)=1$ for each block, and the system performance is dominated by the fronthaul NDT $\delta_F$ in \eqref{eq:deltaEF}. Instead, for $\mu K_T \geq (1-K_T r /n_T)^+$, when the multiplicity $m_{p}(\mu,r)$ is an integer, i.e., when $m_{p}(\mu,r)=m_{eq}(\mu,r)$, the fronthaul and edge NDTs in \eqref{eq:deltaEF} and \eqref{eq:deltaEE1} are equal. When $m_{eq}(\mu,r)$ is not an integer, time sharing is performed between the two integer multiplicities $\lfloor m_{eq}(\mu,r)  \rfloor$ and $\lceil m_{eq}(\mu,r)\rceil$. 
\end{proof}

\begin{figure}[t!] 
  \centering
\includegraphics[width=0.5\columnwidth]{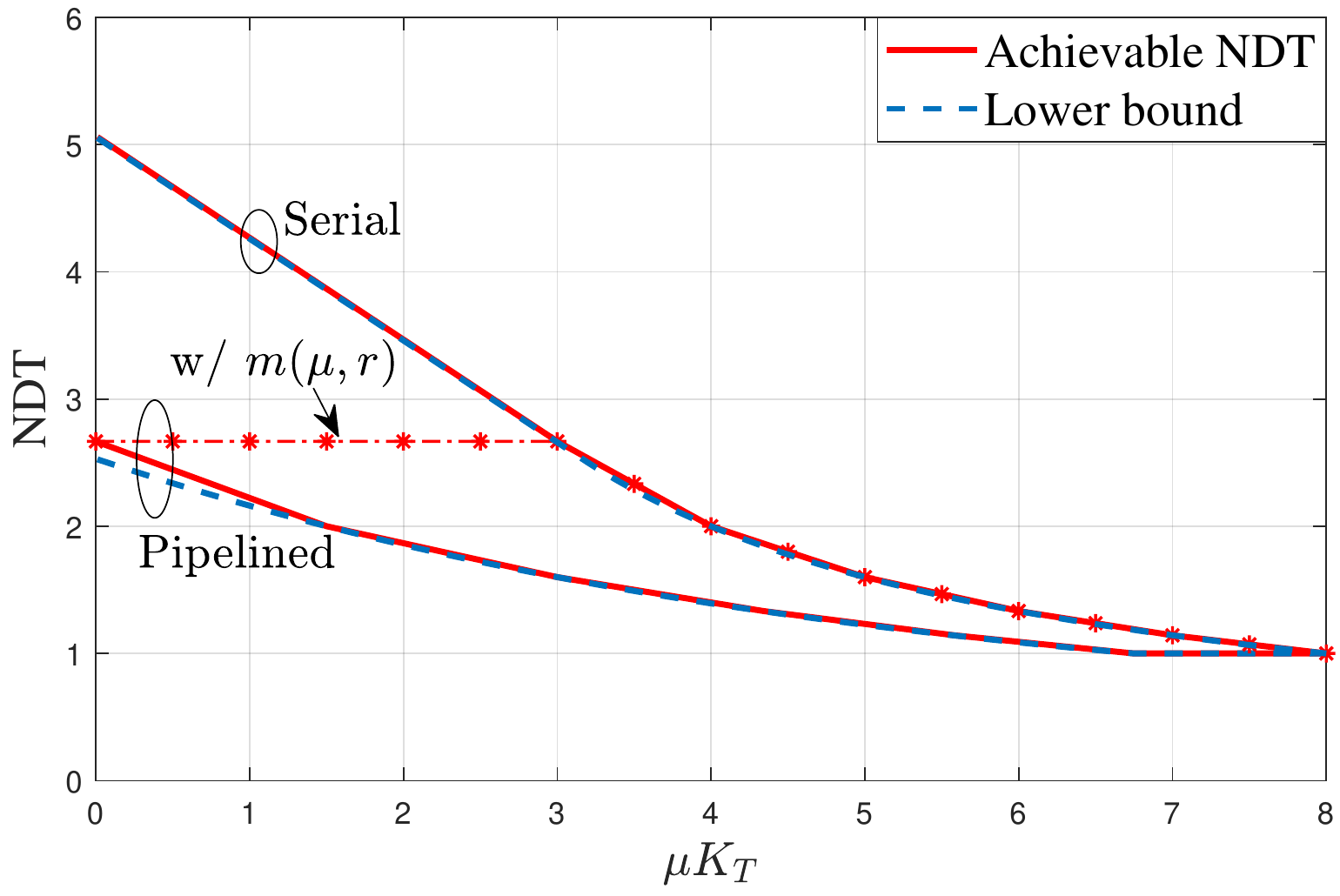}
\caption{Achievable NDTs under serial and pipelined transmissions with different multiplicity choices for $K_T=8, K_R=32, n_T=4$ and $r=5$. }
\label{fig:NDTcps}
\end{figure}

A comparison of the achievable NDTs under serial and pipelined transmissions for the same system parameters as in Fig.~\ref{fig:mps} can be found in Fig.~\ref{fig:NDTcps}. Beside the achievable NDTs in \eqref{ach:smallr} and \eqref{achp}, we also plot for reference the NDT of the pipelined policy obtained by using the same multiplicity $m(\mu,r)$ in \eqref{m} of the serial policy. Pipelining is seen to bring a non-negative reduction in NDT as compared to the serial policy even when the multiplicity is not optimized due to the possibility to use fronthaul and edge transmissions simultaneously. However, as discussed, the NDT performance with multiplicity $m(\mu,r)$ is dominated by edge transmission. As a result, when $\mu K_T \geq m(r)=3$, this policy does not provide NDT gains since, in this regime, only edge resources are used, i.e., $\delta_{F}=0$. In contrast, the proposed policy, with the multiplicity $m_p(\mu,r)$ in Fig.~\ref{fig:mps} can improve NDT for all values of $\mu \in [0,1]$.

Pipelined and serial NDTs \eqref{achp} and \eqref{ach:smallr} are further compared in Fig.~\ref{fig:NDTps} as a function of $\mu$ for two different values of $r$ for $K_T=8, K_R=32$ and $n_T=4$. For small values of $r$, such as $r=10^{-3}$ in the figure, the benefits from pipelining transmission is limited due to the bottleneck posed by the low fronthaul rate. Instead, for larger values of $r$, the reduction in NDT is evident. In particular, with pipelined delivery, when $\mu K_T \geq m_{max}-K_T r/(m_{max}n_T)=6.75$, the maximum multiplicity $m_{max}$ can be supported, yielding the ideal NDT of one. In contrast, for the serial case, the ideal NDT of one can be achieved only with full caching, i.e., $\mu=1$. Finally, as $\mu K_T$ increases, edge transmission becomes more efficient, reducing the contribution from fronthaul transmission and yielding a smaller improvement due to pipelined transmission.

\begin{figure}[t!] 
  \centering
\includegraphics[width=0.5\columnwidth]{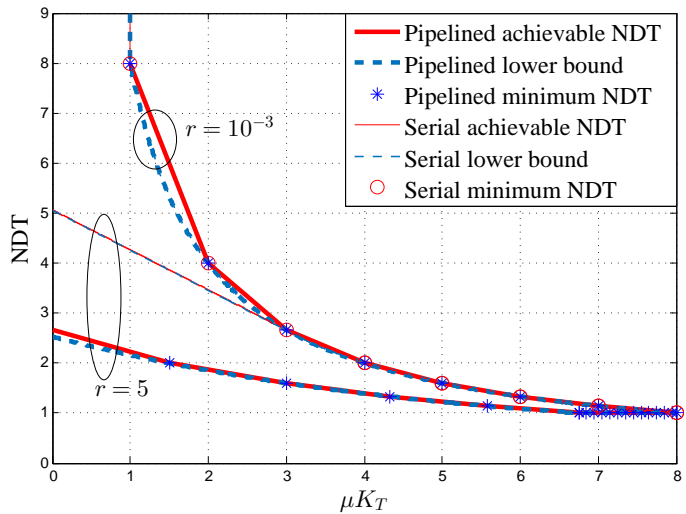}
\caption{Achievable NDT $\delta_{p,ach}(\mu,r)$ in Proposition~\ref{pro:NDTp} and lower bound $\delta_{p,lb}(\mu,r)$ Proposition~\ref{pro:NDTpmin} versus $\mu$ for different values of $r$, with $K_T=8, K_R=32$, and $n_T=4$.}
\label{fig:NDTps}
\end{figure}

\begin{remark} \label{remark2}
The proposed scheme is able to provide the discussed latency gains by leveraging 
\emph{cloud and edge-aided transmission} for any value of $\mu$ and $r$ except for the extreme case $\mu K_T \geq m_{max}$, where the maximum multiplicity $m_{max}$ can be ensured by caching only. We emphasize that this stands in stark contrast to the serial case in which, as observed in Remark \ref{remark}, when $\mu K_T \geq m(r)$, the overhead due to fronthaul transmission becomes excessive and edge transmission alone is preferable. 
\end{remark}

\subsection{Lower Bound on the Minimum NDT}
A lower bound on the minimum NDT $\delta_{p}^*(\mu,r)$ is presented in the following proposition, where we define the function $m_p^*(\mu,r)$ as
\begin{align} \label{mpins}
m_p^*(\mu,r)= \left\{
\begin{array}{ll} 
\max\{m_{eq}(\mu,r),1 \}, &  \text{for}~ \mu K_T \leq m_{max}-\frac{K_T r}{m_{max} n_T} \\
       m_{max}, & \text{for}~  \mu K_T \geq m_{max}-\frac{K_T r}{m_{max} n_T},
\end{array} 
\right.
\end{align}
with $m_{eq}(\mu,r)$ as in \eqref{def:mps}.
\begin{proposition} \label{pro:NDTpmin} 
For an F-RAN system with $n_T$ antennas, the minimum NDT $\delta_{p}^*(\mu,r)$ for pipelined fronthaul-edge transmission is lower bounded as 
\begin{align} \label{deltamin}
\delta_{p}^*(\mu,r)\geq \delta_{p,lb}(\mu,r)= \left\{
\begin{array}{ll} 
\max\{\frac{K_R(1-\mu K_T)}{K_T r}, 1 \}, &  \text{for}~ \mu K_T \leq (1-\frac{K_T r}{ n_T})^+ \\
       \max\big\{ \frac{K_R}{m_p^*(\mu,r)n_T},1\big\}, & \text{for}~  \mu K_T \geq (1-\frac{K_T r}{ n_T})^+.
\end{array} 
\right.
\end{align}
\end{proposition}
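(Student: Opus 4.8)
The plan is to bound $\delta_p^*(\mu,r)$ by decoupling the fronthaul and edge resources block-by-block, and then importing the two single-resource converse bounds that already underlie the serial lower bound of Proposition~\ref{pro:lower}. First, for any achievable pipelined policy each block has duration $T_P(b)=\max\{T_E(b),T_F(b)\}$, so by \eqref{pt} we have both $T_P\geq\sum_{b}T_E(b)$ and $T_P\geq\sum_{b}T_F(b)=\sum_b\max_i(\cdot)\geq\max_{i}\sum_{b}|\mathcal{F}_i(b)|L/(FC_F)=\max_i|\mathcal{F}_i|L/(FC_F)$, where $\mathcal{F}_i=\bigcup_b\mathcal{F}_i(b)$ is the total (disjoint) fronthaul message to EN~$i$. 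Passing to the high-SNR, large-file limit as in \eqref{def:deltap}, this gives $\delta_p\geq\max\{\delta_E,\delta_F\}$ and $\delta_p\geq1$, where $\delta_E=B/F$ and $\delta_F=\max_i|\mathcal{F}_i|/(Fr)$ are exactly the edge and fronthaul NDTs of the \emph{serial, non-causal} policy obtained by delivering all of $\mathcal{F}_i$ to the ENs up front. Since revealing the whole fronthaul message at once can only help, any lower bound on $\max\{\delta_E,\delta_F\}$ for serial policies transfers to $\delta_p^*(\mu,r)$.

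Second, I would extract from the converse argument of Appendix~\ref{converse} the following two bounds, which hold simultaneously for every (serial) policy in terms of the effective content multiplicity $m$ it realizes at the ENs (allowing $m$ to be real-valued so as to absorb time-sharing, and $1\leq m\leq m_{max}$ without loss of optimality by \eqref{mmax}): a fronthaul bound $\delta_F\geq K_R(m-\mu K_T)^+/(K_T r)$, since raising the multiplicity from the level $\mu K_T$ supplied by the caches up to $m$ forces $\Omega((m-\mu K_T)^+K_RL)$ bits of requested content onto the $K_T$ fronthaul links; and an edge bound $\delta_E\geq K_R/(m n_T)$, since under one-shot linear precoding with content multiplicity $m$ at most $m n_T$ users can be served interference-free per block (a rank/dimension count on the active users). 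These are precisely the two ingredients combined additively to produce \eqref{eq:pro21}; here they are combined through a maximum instead.

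Third, putting the pieces together, every achievable pipelined policy satisfies, for its effective multiplicity $m\in[1,m_{max}]$,
\[
\delta_p\ \geq\ \max\Bigl\{\tfrac{K_R}{m n_T},\ \tfrac{K_R(m-\mu K_T)^+}{K_T r},\ 1\Bigr\},
\]
hence $\delta_p^*(\mu,r)\geq\min_{1\leq m\leq m_{max}}\max\bigl\{K_R/(m n_T),\,K_R(m-\mu K_T)^+/(K_T r)\bigr\}$, and also never below $1$. The first term decreases in $m$ while the second is non-decreasing and vanishes for $m\leq\mu K_T$, so the inner maximum is minimized at the crossing point $K_R/(m n_T)=K_R(m-\mu K_T)/(K_T r)$, i.e.\ at the positive root $m=m_{eq}(\mu,r)$ of $m^2-\mu K_T m-K_T r/n_T=0$ as in \eqref{def:mps}, whenever $m_{eq}(\mu,r)\in[1,m_{max}]$; otherwise the minimizer is the nearer endpoint, $m=1$ if $m_{eq}(\mu,r)<1$ and $m=m_{max}$ if $m_{eq}(\mu,r)>m_{max}$. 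Evaluating the quadratic at $m=1$ and $m=m_{max}$ gives the equivalences $m_{eq}(\mu,r)\leq1\iff\mu K_T\leq(1-K_T r/n_T)^+$ and $m_{eq}(\mu,r)\leq m_{max}\iff\mu K_T\leq m_{max}-K_T r/(m_{max}n_T)$, so the minimizing $m$ is exactly $m_p^*(\mu,r)$ of \eqref{mpins}. Substituting: for $\mu K_T\geq(1-K_T r/n_T)^+$ the edge term is active at the minimizer and yields $\delta_p^*(\mu,r)\geq\max\{K_R/(m_p^*(\mu,r)n_T),1\}$, while for $\mu K_T\leq(1-K_T r/n_T)^+$ the minimizer $m=1$ makes the edge term inactive and leaves $\delta_p^*(\mu,r)\geq\max\{K_R(1-\mu K_T)/(K_T r),1\}$ (here $(1-\mu K_T)^+=1-\mu K_T$); these are the two branches of \eqref{deltamin}.

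The main obstacle is Step~2: rigorously defining the effective multiplicity of an arbitrary pipelined scheme — whose per-file placement may be demand- and CSI-dependent, time-shared across several multiplicity levels, and fed partly by causal fronthaul use — and separating the single combined converse of Appendix~\ref{converse} into the two clean, simultaneously valid bounds $\delta_F\geq K_R(m-\mu K_T)^+/(K_T r)$ and $\delta_E\geq K_R/(m n_T)$. The edge bound in particular must be argued carefully from the one-shot linear-precoding constraint via a dimension count of the users served per block, and the fronthaul bound must account for caches and fronthaul jointly determining $m$; the non-causal genie step of Step~1 is exactly what lets these serial-model arguments be invoked verbatim in the pipelined setting, so that no new information-theoretic machinery beyond Appendix~\ref{converse} is needed.
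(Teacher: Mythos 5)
Your proposal is correct and follows essentially the same route as the paper's proof: convert the pipelined policy to a serial one block-by-block to get $\delta_P\geq\max\{\delta_E,\delta_F\}$, invoke the two per-resource serial converse bounds parameterized by the average multiplicity, and minimize their maximum over the feasible multiplicity range, with the crossing point $m_{eq}(\mu,r)$ yielding $m_p^*(\mu,r)$ and the two branches of \eqref{deltamin}. The ``main obstacle'' you flag in Step~2 is already resolved by the existing serial converse: Appendix~\ref{lemma2} derives the edge and fronthaul bounds \eqref{low:de} and \eqref{low:df} separately, both as functions of the same variable $x=\sum_{i=1}^{K_T} i b_i/(NF)$ (the average post-fronthaul multiplicity), so the paper's proof simply cites them and no new separation argument is needed.
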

\begin{proof}
The proof is presented in Appendix~\ref{sec:proofboundp}.
\end{proof}

\subsection{Minimum NDT}
The minimum NDT is characterized in the following Proposition for the regime of low cache and fronthaul capacities, i.e., with $\mu K_T \leq 1-K_T r/n_T$, and for any set-up with $\mu K_T=i-K_T r/(i n_T)$, $i\in [m_{max}]$ or with large caches, i.e., with $\mu K_T\geq m_{max}-K_T r/(m_{max} n_T)$. These conditions are akin to those identified in Proposition~\ref{pro3} for the serial case, with the different definitions being due to the use of fronthaul resources for all system parameters. 

\begin{proposition} \label{pro:optimal}
For an F-RAN system with $N_T$ antennas, the minimum NDT $\delta_{p}^*(\mu,r)$ for pipelined fronthaul-edge transmission is given as  
\begin{align} \label{eq:min}
\delta_{p}^*(\mu,r)= \left\{
\begin{array}{ll} 
\max\big\{\frac{K_R(1-\mu K_T)}{K_T r}, 1\big \}, &  \text{for}~ \mu K_T \leq 1-\frac{K_T r}{ n_T} \\
      \max\big\{ \frac{K_R}{m_p^*(\mu,r)n_T},1\big\}, & \text{for}~ \mu K_T \in \{i-\frac{K_T r}{i n_T}\}_{i=1}^{m_{max}}\cup (m_{max}-\frac{K_T r}{m_{max} n_T}, K_T].
\end{array} 
\right.
\end{align} 
\end{proposition}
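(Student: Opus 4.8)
The plan is to derive \eqref{eq:min} by squeezing $\delta_{p}^*(\mu,r)$ between the achievable upper bound $\delta_{p,ach}(\mu,r)$ of Proposition~\ref{pro:NDTp} and the lower bound $\delta_{p,lb}(\mu,r)$ of Proposition~\ref{pro:NDTpmin}, and checking that the two coincide on the parameter regimes in the statement — the same strategy used for the serial case in Proposition~\ref{pro3}. I will write $\bar\delta_{p}(\mu,r)$ for the pre-envelope achievable NDT appearing in \eqref{achp}, so that $\delta_{p}^*(\mu,r)\le\delta_{p,ach}(\mu,r)=\text{l.c.e.}(\bar\delta_{p}(\mu,r))\le\bar\delta_{p}(\mu,r)$ while $\delta_{p}^*(\mu,r)\ge\delta_{p,lb}(\mu,r)$. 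The key point exploited in every case below is that the lower bound will turn out to equal not merely $\delta_{p,ach}$ but already the \emph{pre-envelope} quantity $\bar\delta_{p}$ at the relevant $\mu$, so that the chain $\delta_{p}^*\le\delta_{p,ach}\le\bar\delta_{p}=\delta_{p,lb}\le\delta_{p}^*$ closes without ever having to evaluate the l.c.e.

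For the low cache--fronthaul regime $\mu K_T\le 1-K_T r/n_T$ — which is nonempty only if $r\le n_T/K_T$, in which case $(1-K_T r/n_T)^+=1-K_T r/n_T$ — I would simply observe that the first branch of \eqref{achp} and the first branch of \eqref{deltamin} are the same expression $\max\{K_R(1-\mu K_T)/(K_T r),1\}$, so $\bar\delta_{p}=\delta_{p,lb}$ on this set, and the sandwich forces $\delta_{p}^*(\mu,r)=\max\{K_R(1-\mu K_T)/(K_T r),1\}$.

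The substantive step concerns the cache values $\mu K_T=i-K_T r/(i n_T)$, $i\in[m_{max}]$. Here I would first establish the algebraic identity that the balancing multiplicity $m_{eq}(\mu,r)$ of \eqref{def:mps} equals exactly $i$: rearranging its definition to $2i-\mu K_T=\sqrt{(\mu K_T n_T)^2+4 n_T K_T r}/n_T$ and squaring cancels the degree-two terms and leaves $i^2-i\,\mu K_T=K_T r/n_T$, which is precisely the relation $\mu K_T=i-K_T r/(i n_T)$; positivity of $i-\mu K_T/2=i/2+K_T r/(2 i n_T)$ confirms the correct root, so $m_{eq}(\mu,r)=i$. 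Monotonicity of $j\mapsto j-K_T r/(j n_T)$ on $[1,m_{max}]$ then places such a $\mu$ in the range $(1-K_T r/n_T)^+\le\mu K_T\le m_{max}-K_T r/(m_{max} n_T)$, so the first sub-cases of \eqref{mp} and \eqref{mpins} apply and give $m_p(\mu,r)=\max\{\lfloor m_{eq}(\mu,r)\rfloor,1\}=i=\max\{m_{eq}(\mu,r),1\}=m_p^*(\mu,r)$; hence $\bar\delta_{p}(\mu,r)=\max\{K_R/(i n_T),1\}=\delta_{p,lb}(\mu,r)$ and the sandwich yields the claimed value. The analogous (and easier) check for the large-cache regime $\mu K_T\in(m_{max}-K_T r/(m_{max} n_T),K_T]$ uses that both \eqref{mp} and \eqref{mpins} force $m_p=m_p^*=m_{max}$, so $\bar\delta_{p}=\delta_{p,lb}=\max\{K_R/(m_{max} n_T),1\}$ there as well. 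Collecting the three cases gives \eqref{eq:min}.

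I expect the only real obstacle to be the identity $m_{eq}(\mu,r)=i$ at the prescribed cache values, together with the bookkeeping needed to confirm that those values fall into the ``balanced'' branches of \eqref{mp}--\eqref{mpins}; once this is in place, the rest is a routine comparison of closed-form expressions, and the potentially delicate l.c.e.\ step is sidestepped entirely because the un-enveloped achievable NDT already matches the lower bound at every point named in the statement.
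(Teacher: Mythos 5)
Your proposal is correct and follows the same route as the paper, whose proof of this proposition is simply the direct comparison of the bounds in Proposition~\ref{pro:NDTp} and Proposition~\ref{pro:NDTpmin}; you have merely made explicit the details the paper leaves implicit, in particular the identity $m_{eq}(\mu,r)=i$ at $\mu K_T=i-K_T r/(i\,n_T)$ and the observation that the un-enveloped achievable expression already meets the lower bound at every point named in the statement, so the l.c.e.\ never needs to be evaluated.
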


\begin{proof}
The result can be obtained by the direct comparison of the bounds in Proposition~\ref{pro:NDTp} and Proposition~\ref{pro:NDTpmin}.
\end{proof}
\vspace{3pt}

The optimality regimes are illustrated in Fig.~\ref{fig:NDTps}. We emphasize that, in a manner similar to the serial case, for integer values of the multiplicity $m_p(\mu,r)=i$, the optimal NDT is achieved. The difference is that this multiplicity here is obtained with a cache size $\mu K_T=i-K_T r/ (i n_T)$, where $K_T r/ (i n_T)$ is the contribution to the multiplicity due to fronthaul transmission. 

The general multiplicative gap between the performance of the proposed scheme and the minimum NDT is stated in the following proposition.

\begin{proposition} \label{pro:gapp}
For a general F-RAN system with $n_T$ antennas at each EN, and any value of $\mu \geq 0$ and $r \geq 0$, we have the inequality 
\begin{align}
\frac{\delta_{p,ach}(\mu,r)}{\delta_{p}^*(\mu,r)} \leq 2. 
\end{align}
\end{proposition}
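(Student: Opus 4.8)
The plan is to bound the ratio $\delta_{p,ach}(\mu,r)/\delta_p^*(\mu,r)$ by comparing the upper bound in Proposition~\ref{pro:NDTp} with the lower bound in Proposition~\ref{pro:NDTpmin}, proceeding regime by regime exactly as in the cache-size split used to state both bounds. In the small-cache regime $\mu K_T \leq (1-K_T r/n_T)^+$, the achievable NDT and the lower bound coincide (both equal $\max\{K_R(1-\mu K_T)/(K_T r),1\}$), so the ratio is $1$ there and nothing is needed. The work is therefore concentrated in the regime $\mu K_T \geq (1-K_T r/n_T)^+$, where $\delta_{p,ach}$ is the lower convex envelope of $\max\{K_R/(m_p(\mu,r)n_T),1\}$ with the integer multiplicity $m_p(\mu,r)$ from \eqref{mp}, while $\delta_{p,lb}$ is $\max\{K_R/(m_p^*(\mu,r)n_T),1\}$ with the real-valued $m_p^*(\mu,r)=\max\{m_{eq}(\mu,r),1\}$ from \eqref{mpins}.

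First I would dispose of the cases where the ``$1$'' inside the max is active: whenever $\delta_{p,lb}=1$ we also have $K_R/(m_p(\mu,r)n_T)\le K_R/(m_p^*(\mu,r)n_T)$ wait --- more carefully, since $m_p(\mu,r)=\lfloor m_{eq}\rfloor$ or $m_{max}$ and $m_p^*(\mu,r)=m_{eq}$ or $m_{max}$, when the lower bound saturates at $1$ the achievable NDT is at most $\max\{K_R/(\lfloor m_{eq}\rfloor n_T),1\}$, and one checks that $\lfloor m_{eq}\rfloor \geq m_{eq}/2$ for $m_{eq}\ge 1$, giving ratio at most $2$; when $m_p(\mu,r)=m_{max}$ both bounds equal $\max\{K_R/(m_{max}n_T),1\}$ and the ratio is $1$. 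The remaining and main case is $m_p(\mu,r)=\max\{\lfloor m_{eq}(\mu,r)\rfloor,1\}$ with $K_R/(m_p(\mu,r)n_T)>1$, so that after time-sharing $\delta_{p,ach}$ is the l.c.e.\ of the points $\{(\,\mu,\ K_R/(in_T)\,)\}$ over integer multiplicities $i$; here I would bound the l.c.e.\ value at a given $\mu$ by the chord between consecutive integers $\lfloor m_{eq}\rfloor$ and $\lceil m_{eq}\rceil$ and compare against $\delta_{p,lb}=K_R/(m_{eq}n_T)$, using the convexity of $x\mapsto 1/x$ and the elementary inequality $\lfloor m_{eq}\rfloor\ge m_{eq}/2$ (valid since $m_{eq}\ge 1$ in this regime, as $m_{eq}(\mu,r)\ge \mu K_T$ or $m_{eq}\ge\sqrt{K_T r/(n_T)}\cdot$something --- this monotonicity needs to be pinned down from \eqref{def:mps}).

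The main obstacle I anticipate is the time-sharing / lower-convex-envelope bookkeeping: I must show that the l.c.e.\ operation in \eqref{achp}, which mixes the two integer multiplicities straddling $m_{eq}(\mu,r)$ at possibly different cache sizes, does not exceed $2$ times the lower bound evaluated with the real multiplicity $m_{eq}$. The key quantitative input is that replacing a real multiplicity $m_{eq}\ge 1$ by $\lfloor m_{eq}\rfloor$ at most doubles $1/m_{eq}$, and that the l.c.e.\ of a convex-combination of feasible $(\mu,\delta)$ points lies below the straight interpolation, so the worst case is a vertex of the piecewise-linear achievable curve, where $m_p(\mu,r)$ is an integer and the comparison reduces to $m_{eq}/\lfloor m_{eq}\rfloor \le 2$. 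I would also double-check the boundary matching at $\mu K_T = (1-K_T r/n_T)^+$ and at $\mu K_T = m_{max}-K_T r/(m_{max}n_T)$ to confirm no regime is left uncovered, and finally note that $\delta_{p,ach}\le 1\cdot$something forces the ratio to $1$ once both bounds equal the ideal value $1$. Assembling these pieces gives $\delta_{p,ach}(\mu,r)/\delta_p^*(\mu,r)\le 2$ for all $\mu\ge 0$, $r\ge 0$. \QEDA
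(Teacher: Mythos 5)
Your plan is correct and follows essentially the same route as the paper's proof: the bounds coincide in the small-cache regime, and for $\mu K_T \geq (1-K_Tr/n_T)^+$ the argument is the chain ``l.c.e.\ $\leq$ un-convexified achievable value,'' a case analysis on which branch of the $\max$ is active and on saturation at $m_{max}$, and the key quantitative step $m_{eq}/\lfloor m_{eq}\rfloor \leq 2$, valid because $m_{eq}\geq 1$ in that regime. The monotonicity point you flag is resolved as you expect: $m_{eq}(\mu,r)\geq 1$ there follows directly from \eqref{def:mps} (square the defining relation to see it is equivalent to $\mu K_T \geq 1-K_Tr/n_T$), which is exactly what the paper uses.
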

\begin{proof}
The proof is presented in Appendix~\ref{sec:proofgapp}.  
\end{proof}

\section{Conclusions} \label{sec:con}
In fog-aided cellular systems, fronthaul resources enable a cloud processor with access to the content library to communicate uncached contents to the edge nodes. This information is not only necessary to enable content delivery when the overall system's capacity is insufficient, but it can also facilitate cooperative interference management. In this paper, we have studied the resulting optimal trade-off between fronthaul latency overhead and overall delivery latency from an information-theoretic viewpoint under the assumption of multi-antenna edge nodes, uncoded caching and fronthaul and one-shot linear precoding on the wireless edge channel. The minimum delivery latency was investigated in the high-SNR regime under both serial and pipelined transmission models. The main results of this F-RAN model are the characterizations, within small multiplicative factors, of the minimum high-SNR latency as a function of system parameters such as fronthaul capacity, edge cache capacity and number of per-edge node antennas. Extensive numerical results have been provided to demonstrate the usefulness of the derived information-theoretic characterizations in understanding the interplay and relative roles of edge and cloud resources on the performance of fog-aided networks. We have also commented on the impact of the practical assumptions made here as compared to the unconstrained delivery strategies studied in \cite{STS:17}.

The information-theoretic characterizations derived in this work leave open a number of research questions. A first line of work that has recently been partially addressed in \cite{MGL:17,KACSP:18,CTS:17} concerns the effect of imperfect or no CSI on the optimal design of caching and delivery techniques. A second, related, issue is the study of optimal edge caching techniques under partial connectivity \cite{CTS:17,BGL:15}. Third, the analysis can be extended to yield insights into the performance of online caching strategies by following \cite{ASST:17}. Lastly, using ideas from \cite{NMA:17}, it would be interesting to generalize the results of this work to a set-up that includes also caching at the users.

\begin{appendices}
\section{Proof of Proposition~\ref{pro:NDT}} \label{sec:proof}

As discussed in Section~\ref{sec:edgeft}, for a desired multiplicity $m\geq \lfloor \mu K_T \rfloor$, we distinguish the cases $m=\lfloor \mu K_T \rfloor$ and $m>\lfloor \mu K_T \rfloor$. In the first case, edge-only transmission is used, and we adopt the same cache-aided delivery strategy described in Section~\ref{sketchedge}, yielding the edge NDT in \eqref{ach:edge}. 


In contrast, for the case $m>\lfloor \mu K_T \rfloor$, cloud and edge-aided transmission is used, and the multiplicity $m$ is obtained using both caching and fronthaul transmission during the delivery phase. Generalizing Example 4, we first divide each content into $F_C$ parts $\{W_{ni}\}_{i=1}^{F_C}$, with $F_C$ in \eqref{def:subedgec}, and then we further divide each part into 
\begin{align} \label{FDP}
F_S=\text{l.c.m.} (F_D,m)
\end{align}
packets $\{W_{nij}\}_{j=1}^{F_S}$, with $F_D$ defined in \eqref{def:subedged}. As a result, the overall number of packets is
\begin{align} \label{FNew}
F=F_C F_S. 
\end{align}
By \eqref{FNew}, we have the inequalities $K_T\leq F\leq K_T K_R m$.

In the caching phase, the $F_S$ packets are arbitrarily divided into two disjoint subsets $W_{ni}^1$ and $W_{ni}^2$, where subset $W_{ni}^1$ contains an integer number $F_S\lfloor \mu K_T \rfloor/m$ of packets and subset $W_{ni}^2$ contains the rest. During the caching phase, all the packets in subsets $\{W_{ni}^1\}_{n=1}^{N}$ are cached at all $m$ EN in cluster $\mathcal{K}_{Ti}$ by following \eqref{def:clusteredge}, while the packets in subsets $\{W_{ni}^2\}_{n\in[N],i\in[F_C]}$ are left uncached.

During the delivery phase, for a demand vector $\dv$, the uncached $K_RF_S(1-\lfloor \mu K_T \rfloor/m)$ packets in subsets $\{W_{d_k i}^2\}_{k\in[K_R]}$, with $i\in[F_C]$, are sent to all $m$ ENs in cluster $\mathcal{K}_{Ti}$ as in \eqref{def:clusteredge}. Hence, each EN receives $K_R F(m-\lfloor \mu K_T \rfloor)/K_T$ packets on the fronthaul, yielding the fronthaul NDT $\delta_F(m)=|\mathcal{F}_i|/Fr=K_R (m-\lfloor \mu K_T \rfloor)/(K_T r)$. As a result of fronthaul transmission, for each file $W_{d_k}, d_k\in\dv$, the ENs in each cluster $\mathcal{K}_{Ti}$ share all $F_S$ packets in subsets $\{W_{d_k i}^1\}$ and $\{W_{d_k i}^2\}$. These ENs can hence transmit cooperatively to the $B_D$ groups $\{\mathcal{K}_{Rj}\}_{j=1}^{B_D}$ of $u(m)=mn_T$ users defined in \eqref{def:groupedge} by using $(F_S/F_D)B_D$ blocks. Hence, the total number of blocks is
\begin{align} \label{def:block}
B=\frac{F_S}{F_D}B_D F_C,
\end{align}
yielding the edge NDT in \eqref{def:DE}, i.e., $\delta_E(m)=B/F=B_D/F_D=K_R/(mn_T)$.


As a result, for a given multiplicity $m$, the overall NDT is given as $\delta(m)=\delta_E(m)+\delta_F(m)$ with integer $m\in[\lfloor \mu K_T \rfloor, m_{max}]$. To minimize the NDT $\delta(m)$, we define the function $\delta(x)$ as
\begin{align}
\delta(x)=\frac{K_R (x-\lfloor \mu K_T \rfloor)}{K_T r}+\frac{K_R}{x n_T},
\end{align}
where $x\in[\lfloor \mu K_T \rfloor,m_{max}]$ is a variable obtained by relaxing the integer constraints over $m$. Function $\delta(x)$ is convex within its domain, and has only one stationary point $x_0=\sqrt{K_Tr/n_T}$. Hence, it reaches the minimum at point $x=x_0$ for $\lfloor \mu K_T \rfloor\leq x_0$, and point $x=\lfloor \mu K_T \rfloor$ for $\lfloor \mu K_T \rfloor\geq x_0$. While in the latter case the solution is an integer, in the former case, the optimal solution is given as $\lfloor x_0 \rfloor$ if $\delta(\lfloor x_0 \rfloor)<\delta(\lceil x_0 \rceil)$ or $\lceil x_0 \rceil$ if $\delta(\lfloor x_0 \rfloor)>\delta(\lceil x_0 \rceil)$. Here, in order to simplify the expressions, we select $[x_0]$ when $\lfloor \mu K_T \rfloor \leq m(r)$.

\section{Proof for Proposition 2} \label{converse}

The proof follows [5, Section 5] with the important caveats that here we need to additionally consider the delivery latency due to fronthaul transmission, as well as the extension to the general case $n_T\geq 1$. To start, we consider an arbitrary split of each file into $2^{K_T}-1$ parts, such that each part $W_{n\tau}$, indexed by a subset $\tau \subseteq [K_T]$, contains an integer number of packets, including possibly no packets. We recall that each packet contains $L/F$ bits. Part $W_{n\tau}$ is available at the ENs in the subset $\tau$, either from the edge caches or from the cloud after fronthaul transmission. Note that this partition comes with no loss of generality, since each packet $W_{nf}$ is available at all EN $i$ such that $W_{nf} \in \mathcal{C}_i \cup \mathcal{F}_i$ (see definitions in Section~\ref{model}).

To distinguish between the contributions of cache and fronthaul resources, we use $c_{n\tau}$ to denote the number of cached packets from file $W_n$ at the ENs in subset $\tau$; while $f_{n\tau}(\dv)$ is the number of packets of file $W_n$ sent on the fronthaul links of all ENs in subset $\tau$ for a given demand vector $\dv$. Hence, part $W_{n\tau}$ has $a_{n\tau}=c_{n\tau}+f_{n\tau}(\dv)$ packets in total. The variables $\{c_{n\tau}\}$ and $\{f_{n\tau}(\dv)\}$, for all $ n \in[N], \tau \subseteq [K_T]$ and vectors $\dv$, fully specify the operation of the cache strategy $\mathcal{C}_i$ and fronthaul policy $\mathcal{F}_i$ defined in Section~\ref{model}.

Minimizing the NDT with respect to the caching strategy $\{c_{n\tau}\}_{n\in[N],\tau \subseteq \mathcal{T}}$ and fronthaul policy $\{f_{n\tau}(\dv)\}_{n\in\dv,\tau \subseteq \mathcal{T}}$ for all vectors $\dv$ yields the following integer problem
\begin{subequations} \label{pro:1}
\begin{align}
&\underset{\{c_{n\tau}\}, \{f_{n\tau}(\dv)\}}{\text{minimize}} \max_{\dv} ~\delta^*_E\big(\{c_{n\tau}\}, \{f_{n\tau}(\dv)\},\dv \big) + \delta_F^*(\dv)  \label{con:0} \\
&\text{s.t.}\sum_{i=1}^{K_T} \sum_{\substack {{\tau \subseteq [K_T]:} \\ {|\tau|=i}}} (c_{n\tau}+f_{n\tau}(\dv))=F, \forall n \in \dv , \forall \dv  \label{con:1}\\
&~~~ \sum_{n=1}^{N} \sum_{\substack {{\tau \subseteq [K_T]:} \\ {i\in \tau}}} c_{n\tau} \leq \mu FN , \forall i\in [K_T] \label{con:2}\\
&~~~\frac{1}{Fr} \sum_{n\in \dv} \sum_{\substack {{\tau \subseteq [K_T]:} \\ {i\in \tau}}}f_{n\tau}(\dv) \leq \delta_F^*(\dv) , \forall i \in [K_T], \forall  \dv \label{con:3} \\
&~~~c_{n\tau}  \geq 0, f_{n\tau}(\dv) \geq 0  \label{con:4} \\
&~~~0\leq \delta_F^*(\dv) \leq \delta_{Fmax}, \label{con:5}
\end{align}
\end{subequations}
where $\delta^*_E\big(\{c_{n\tau}\}, \{f_{n\tau}(\dv)\},\dv \big)$ is the minimum edge NDT \eqref{def:DE} for given cache and fronthaul policies when the request vector is $\dv$. In \eqref{con:1}, the equality constraints enforce that all $F$ packets of each requested file are available collectively at the ENs after the fronthaul transmission; inequalities \eqref{con:2} come from the fact that the size of the cache content $\mathcal{C}_i$ of each EN $i$, which is given as $\sum_{n=1}^{N} \sum_{\tau \subseteq [K_T]:i\in \tau} c_{n\tau}$, is constrained by the cache capacity $\mu FN$ (see \eqref{cap:cache}); inequalities \eqref{con:3} follow from the definition of fronthaul NDT \eqref{def:DF}, since the left-hand side is the number of packets sent to EN $i$ on the fronthaul for request vector $\dv$;
and inequalities \eqref{con:5} impose that the fronthaul NDT is no larger than 
\begin{align}
\delta_{Fmax} = \frac{K_R (m_{max}-\mu K_T)^+}{K_T r}.
\end{align}
This is because, as discussed in Section \ref{sec:sketch}, the multiplicity of the requested files can be upper bounded without loss of generality by $m_{max}$, and the maximum overall number of bits that are needed from the cloud to ensure this multiplicity is given as $K_R(m_{max}-\mu K_T)^+L$ bits.

The optimum value of optimization problem \eqref{pro:1} is lower bounded by substituting the maximum over all the request vector $\dv$ with an average. In particular, since the number of ways to request all the $K_R$ distinct files out of $N$ library files is $\pi(N,K_R) = N!/(N-K_R)!$, the lower-bounding problem can be written as  
\begin{subequations} \label{pro:2}
\begin{align}
&\underset{\{c_{n\tau}\}, \{f_{n\tau}(\dv)\}}{\text{minimize}} \frac{1}{\pi(N,K_R)}\sum_{\dv} \delta^*_E\big(\{c_{n\tau}\}, \{f_{n\tau}(\dv)\},\dv \big) + \delta_F^*(\dv) \label{con:21} \\
&\text{s.t.}~\eqref{con:1}-\eqref{con:5}.
\end{align}
\end{subequations} 

 We now obtain a lower bound on the optimal value of problem \eqref{pro:2} and hence also of problem \eqref{pro:1}. To this end, we first bound the minimum edge NDT $\delta^*_E\big(\{c_{n\tau}\}, \{f_{n\tau}(\dv)\},\dv \big)$ in \eqref{con:21} by studying the number of packets that can be served in each block as a function of the availability of files at the ENs. 

\begin{lemma} \label{lem:number}
Consider a single edge transmission block $b$ in which a set $\{W_{n_lf_l}\}_{l=1}^{L}$ of $L$ packets are sent to $L$ distinct users in set $\mathcal{R}(b)\subseteq[K_R]$. In order for each user in $\mathcal{R}(b)$ to be able to decode the desired packet without interference at the end of the block, the number $L$ of packets must be upper bounded as 
\begin{align}
L \leq \min_{l \in [L]} |\tau_l| n_T, \label{lemma}
\end{align}
where for any packet $W_{n_lf_l}$, $ \tau_l$ denotes the subset of ENs that have access to it, either as part of the pre-stored contents at the EN's cache or of the fronthaul received signals, i.e., $ W_{n_lf_l}\in \{\mathcal{C}_i\cup \mathcal{F}_i\}_{i\in \tau_l}$.
\end{lemma}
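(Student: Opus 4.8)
The plan is to bound the dimension of the signal space available to serve the $L$ users and to show that linear zero-forcing of all cross-interference forces $L \le |\tau_l| n_T$ for every served packet. First I would fix the block $b$ and drop it from the notation. For each $l \in [L]$, the packet $W_{n_l f_l}$ can be transmitted only by the ENs in $\tau_l$, so its precoding vectors $\vv_{i n_l f_l}$ are nonzero only for $i \in \tau_l$; stacking these, the effective transmit vector carrying symbol $s_{n_l f_l}$ lives in the $|\tau_l| n_T$-dimensional coordinate subspace $\C^{\tau_l \times n_T}$ of the full $K_T n_T$-dimensional transmit space. The received signal at any user $k \in \Rset(b)$ is $y_k = \sum_{l=1}^{L} \big(\sum_{i \in \tau_l} \hv_{ki}^T \vv_{i n_l f_l}\big) s_{n_l f_l} + z_k$.

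Next I would impose the interference-free decoding requirement. Since the users are single-antenna, user $k$ — say the one that must decode $W_{n_{l^\star} f_{l^\star}}$ — requires the coefficient multiplying every other symbol $s_{n_l f_l}$, $l \ne l^\star$, to vanish, while the coefficient of its own symbol is nonzero and $P$-independent. For a fixed $l$, collect the constraints coming from the $L-1$ users that are \emph{not} assigned packet $l$: the vector $\bar\vv_l := (\vv_{i n_l f_l})_{i \in \tau_l} \in \C^{|\tau_l| n_T}$ must be orthogonal (under the relevant bilinear form built from the channel vectors $\{\hv_{ki}\}_{i \in \tau_l}$) to the channels of those $L-1$ users, yet must have nonzero inner product with the channel of the one user assigned packet $l$. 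A nonzero vector satisfying these $L-1$ homogeneous linear constraints plus one non-homogeneous constraint exists only if $L - 1 < |\tau_l| n_T$, i.e. $L \le |\tau_l| n_T$; here I would invoke the genericity/linear-independence hypothesis on the channel vectors stated in Section~\ref{model}, which guarantees the relevant $(L-1) \times (|\tau_l| n_T)$ matrix has full row rank $\min\{L-1, |\tau_l| n_T\}$, so that the non-homogeneous constraint is compatible with the null space being nontrivial precisely when $L-1 < |\tau_l| n_T$. Taking the minimum over $l \in [L]$ yields \eqref{lemma}.

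The main obstacle I anticipate is making the "dimension counting" fully rigorous in the presence of the non-homogeneous constraint (the desired symbol must survive with $P$-independent gain, not merely be nonzero in a measure-zero-avoiding sense): I must argue that the channel independence assumption is exactly what rules out the degenerate configurations in which the nulling constraints would also kill the desired signal. A clean way to do this is to treat each packet's beamformer independently — the constraints decouple across $l$ because distinct packets use disjoint symbol streams — and for each $l$ reduce to the standard fact that $d$ transmit dimensions can zero-force at most $d-1$ single-antenna interferers while delivering a nonzero intended stream, which holds for generic channels. I would also note that one-shot linear precoding is essential here: the bound would fail for schemes allowing symbol extensions or interference alignment across blocks, which is consistent with the paper's stated modeling assumptions.
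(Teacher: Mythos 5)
Your argument is correct and is essentially the paper's proof: the paper rewrites the received signal to view the block as an equivalent broadcast channel in which packet $l$ is effectively transmitted from $|\tau_l| n_T$ aggregate antennas and then defers to the zero-forcing dimension-counting steps of \cite{NMA:17}, which are exactly the count you spell out (each beamformer $\bar{\vv}_l \in \C^{|\tau_l| n_T}$ must null the $L-1$ generic effective channels of the non-intended users while keeping a nonzero intended gain, forcing $L \le |\tau_l| n_T$ for every $l$, hence \eqref{lemma}). Your handling of the channel genericity assumption and of the decoupling across packets matches what the cited steps require, so there is no gap.
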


\begin{proof}
The proof follows from [5, Lemma 3] with the following differences. For a block $b$, each EN $i$ sends
\begin{align}
\xv_i(b) = \sum_{l:i \in \tau_l} \vv_{i n_l f_l}(b) s_{n_l f_l}(b),
\end{align}
and the received signal at user $k \in \mathcal{R}(b)$, is given as 
\begin{align}
y_k(b)&= \sum_{i=1}^{K_T} \hv_{ki}^T(b) \xv_i(b)+ z_k(b) \\
       & = \sum_{i=1}^{K_T} \hv_{ki}^T(b)\sum_{l: i \in \tau_l} \vv_{i n_l f_l}(b) s_{n_l f_l}(b) + z_k(b) \\
			& = \sum_{l=1}^{L}\sum_{i \in \tau_l} \hv_{ki}^T(b) \vv_{i n_l f_l}(b) s_{n_l f_l}(b) + z_k(b). \label{conversion}
\end{align}
From \eqref{conversion}, the channel can be considered as a multi-antenna broadcast channel with $L$ transmitters, each having $|\tau_l| n_T$ antennas, that are connected to $L$ single-antenna users. By following the same steps as in [5, Eq. (28)-(36)] the proof is completed.
\end{proof}

Each subset $\tau$ of ENs needs to deliver parts $\{W_{n,\tau}\}$, $n \in \dv$, which consists of a total of $\sum_{j}(c_{d_j\tau}+f_{d_j\tau})$ packets. From Lemma~\ref{lem:number}, the number of necessary blocks is at least $\sum_{j}(c_{d_j\tau}+f_{d_j\tau})/(|\tau| n_T)$. 
By summing over all subsets $\tau$ and applying \eqref{def:DE}, the minimum edge NDT $\delta^*_E (\{c_{n\tau}\}, \{f_{n\tau}(\dv)\}, \dv)$ can be lower bounded as 
\begin{align} \label{eq:deltaE}
\delta^*_E\big(\{c_{n\tau}\}, \{f_{n\tau}(\dv)\}, \dv\big) \geq \frac{1}{F}\sum_{i=1}^{K_T} \sum_{j=1}^{K_R} \sum_{\substack {{\tau \subseteq [K_T]}: \\ {|\tau|=i}}} \frac{c_{d_j,\tau}+f_{d_j,\tau}}{i n_T}.
\end{align}
This bound is instrumental in proving the following lemma, which completes the proof upon combination with the trivial lower bound $K_R/\min\{K_Tn_T, K_R\}$ on the edge NDT.

\begin{lemma} \label{lem:fmin}
The optimal value of the problem \eqref{pro:2} is lower bounded by 
\begin{align} \label{eq:fmin}
f_{min}=\left\{
\begin{array}{ll}
       \frac{K_R(m^*(r)-\mu K_T)}{K_T r}+\frac{K_R}{m^*(r) n_T},& \mu K_T < m^*(r) \\ [1ex]
     \frac{K_R}{\mu K_T n_T}, &  \mu K_T \geq m^*(r), \\ [1ex]
\end{array} 
\right.
\end{align}
where $m^*(r)$ is defined in \eqref{mmins}.
\end{lemma}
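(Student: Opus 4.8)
The plan is to lower-bound the optimal value of the LP-relaxation \eqref{pro:2} by producing a dual-feasible point (equivalently, by a suitable weighted combination of the constraints), and then to optimize the resulting bound over the free parameter that plays the role of the multiplicity. First I would fix a demand vector $\dv$ requesting $K_R$ distinct files. Using the edge-NDT bound \eqref{eq:deltaE} together with the fronthaul constraint \eqref{con:3}, I would form, for each EN $i$, a convex combination $\lambda \,\delta_F^*(\dv) + (\text{edge terms})$ and sum over $i\in[K_T]$; the key algebraic fact is that $\sum_{i}\sum_{\tau:i\in\tau}(\cdot) = \sum_{\tau}|\tau|(\cdot)$, so the ``$|\tau|$'' weights in the edge bound and the ``$1/(|\tau| n_T)$'' in \eqref{eq:deltaE} interact cleanly. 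The goal is to choose the weight so that, after using the total-packet constraint \eqref{con:1} ($\sum_\tau a_{n\tau}=F$ for each requested $n$) and the cache-capacity constraint \eqref{con:2}, the $\{c_{n\tau}\}$ and $\{f_{n\tau}(\dv)\}$ variables drop out and only $\mu$, $r$, $K_T$, $K_R$, $n_T$ remain.

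Concretely, I would introduce a real parameter $t \ge 1$ (morally the multiplicity $m^*(r)$) and bound, for each requested file, the contribution of its parts: a part stored/forwarded to a set $\tau$ with $|\tau| = i$ contributes to the edge NDT proportionally to $1/(i n_T)$, which is at least $1/(t n_T)$ when $i \le t$ and otherwise is handled by noting $i \ge t$ forces a large cache or fronthaul cost. The averaging over $\dv$ in \eqref{pro:2} is what lets me turn the per-demand cache usage into the global budget $\mu F N$: summing \eqref{con:2} over $i$ gives $\sum_{n,\tau}|\tau| c_{n\tau} \le \mu F N K_T$, and by symmetry of the uniform average over all $\pi(N,K_R)$ demand vectors each requested file ``sees'' on average a $c$-mass of at most $\mu F K_T$ spread over the ENs. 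Combining: the averaged edge NDT is at least $\frac{K_R}{t n_T}\bigl(1 - \frac{(\text{fronthaul mass per file})}{F}\bigr)$ up to the cache contribution $\mu K_T / t$, and the averaged fronthaul NDT is at least $\frac{K_R}{K_T r}\cdot(\text{fronthaul mass per file})/F$; adding them and minimizing over the fronthaul mass gives a bound of the shape $\frac{K_R(t-\mu K_T)}{K_T r} + \frac{K_R}{t n_T}$, valid for every feasible $t$ in the allowed range $[\max\{1,\mu K_T\}, m_{max}]$.

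It then remains to optimize this expression over $t$. As a function of $t$ it is exactly the convex function analyzed in the proof of Proposition~\ref{pro:NDT}/\ref{pro:NDTft}: stationary point $t_0 = \sqrt{K_T r/n_T}$, so the best bound is obtained at $t = m^*(r) = \max\{\sqrt{K_T r/n_T},1\}$ when this lies below $m_{max}$ (i.e.\ $r < r_{th}$), and at $t = m_{max}$ otherwise; in the regime $\mu K_T \ge m^*(r)$ the relevant constraint is instead that the cache alone already forces $t \ge \mu K_T$, and plugging $t = \mu K_T$ kills the fronthaul term and leaves $K_R/(\mu K_T n_T)$. This reproduces the two branches of \eqref{eq:fmin}. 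The main obstacle, and the step I would be most careful about, is the symmetrization/averaging argument that converts the per-demand-vector constraints \eqref{con:1}--\eqref{con:3} into a single clean inequality: one must argue that without loss of optimality the fronthaul policy can be taken symmetric across the requested files (or, failing that, that the uniform average over demand vectors already performs this symmetrization), so that the cache budget $\mu F N$ distributed over $N$ files contributes exactly a ``$\mu K_T$'' offset to the effective multiplicity per requested file. This is precisely where the structure borrowed from [5, Section 5] does the work, and where the extension to general $n_T$ (the factor $|\tau| n_T$ in Lemma~\ref{lem:number}) must be tracked through every step.
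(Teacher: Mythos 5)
Your high-level architecture matches the paper's: average the objective over all $\pi(N,K_R)$ demand vectors, use the counting identity that each file appears in $K_R\,\pi(N-1,K_R-1)$ requests to pass from per-demand to per-library sums, invoke the constraints \eqref{con:1}--\eqref{con:3} together with the $|\tau|n_T$ edge bound of Lemma~\ref{lem:number}, and end with the convex scalar function $\frac{K_R(t-\mu K_T)}{K_T r}+\frac{K_R}{t n_T}$ whose constrained minimum is $f_{min}$. However, there is a genuine gap exactly at the technical heart of the argument: the step that collapses the \emph{heterogeneous} multiplicity profile (the paper's $b_i=\sum_{\tau:|\tau|=i}\sum_n(c_{n\tau}+\tilde f_{n\tau})$) into a single scalar. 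The paper does this with the Cauchy--Schwarz (equivalently Jensen) inequality $\sum_i b_i/i \geq (\sum_i b_i)^2/\sum_i i b_i$, which, combined with $\sum_i b_i = NF$ and the cache/fronthaul budgets, yields the bound $f(x)$ with $x=\sum_i i b_i/(NF)$. Your replacement --- ``$1/(i n_T)\ge 1/(t n_T)$ for $i\le t$, and $i>t$ is handled by noting it forces a large cache or fronthaul cost'' --- is precisely the part that needs proof, and if carried out literally it is too weak: bounding $\sum_i b_i/i \ge \frac{1}{t}\sum_{i\le t} b_i$ and $\sum_{i>t} b_i \le \frac{1}{t}\sum_i i b_i$ and then cancelling the $x$-dependence at $t=\sqrt{K_T r/n_T}$ produces only $\frac{K_R(t-\mu K_T)}{K_T r}$, i.e.\ it loses the additive edge term $\frac{K_R}{t n_T}$ that the lemma requires. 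To recover the full bound you need a sharper per-multiplicity inequality (the tangent line $\frac{1}{i}\ge\frac{2}{t}-\frac{i}{t^2}$ of the convex map $i\mapsto 1/i$, or directly the paper's Cauchy--Schwarz step); your sketch never supplies it.

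Two further points need repair. First, the logic of ``a bound valid for every feasible $t$ ... then optimize over $t$'' is inverted: the scalar parameter is the policy-dependent average multiplicity, so one must take the \emph{minimum} of $f$ over its feasible interval (which is why $f_{min}$ appears); if your family-of-bounds claim were literally true for every $t\in[\max\{1,\mu K_T\},m_{max}]$, the correct ``optimization'' would be a maximum over $t$, which gives a false statement (it would contradict the matching achievability in Proposition~\ref{pro3}). A fixed choice of $t$ only yields a valid bound under the tangent-line/dual construction, which you did not set up. Second, the branch $\mu K_T\ge m^*(r)$ is not justified by ``the cache alone already forces $t\ge\mu K_T$'': the cache constraint \eqref{con:2} is an upper bound on cache usage and forces nothing about the realized multiplicity; in the paper this branch comes from the domain restriction on $x$ (via $\delta_F^*(\dv)\ge 0$ and \eqref{con:5}, the latter also being what caps $x$ by $\max\{m_{max},\mu K_T\}$ and produces the clamp at $m_{max}$ for $r\ge r_{th}$), combined with the monotonicity of $f$ left of its stationary point. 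Your proposal does not track these domain endpoints at all, so the case distinction in \eqref{eq:fmin} is asserted rather than derived.
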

\begin{proof}
The proof is presented in Appendix \ref{lemma2}.
\end{proof}

\section{Proof of Lemma \ref{lem:fmin}} \label{lemma2}

We lower bound the two terms in \eqref{con:21} separately by starting with the minimum average edge NDT
\begin{subequations} \label{low:de}
\begin{align} 
&\frac{1}{\pi(N,K_R)}\sum_{\dv} \delta^*_E\big(\{c_{n\tau}\}, \{f_{n\tau}(\dv)\},\dv \big)  \\
& \stackrel{(a)}{\geq}  \frac{1}{F\pi(N,K_R)} \sum_{i=1}^{K_T}\frac{1}{i n_T}\Bigg[\sum_{\dv} \sum_{j=1}^{K_R} \sum_{\substack {{\tau \subseteq [K_T]:} \\ {|\tau|=i}}}\big( c_{d_j\tau}+f_{d_j\tau}(\dv)\big) \Bigg]  \\
&\stackrel{(b)}{=}\frac{1}{F\pi(N,K_R)} \sum_{i=1}^{K_T}\frac{1}{i n_T}\Bigg[ K_R \!\!\! \sum_{\substack {{\tau \subseteq [K_T]:} \\ {|\tau|=i}}} \!\!\! \pi(N-1,K_R-1) \!\! \sum_{n=1}^{N} (c_{n\tau}+\tilde{f}_{n\tau} )\Bigg]  \\
&=\frac{K_R}{NF} \sum_{i=1}^{K_T}\frac{1}{i n_T}\Bigg[ \sum_{\substack {{\tau \subseteq [K_T]:} \\ {|\tau|=i}}} \sum_{n=1}^{N} (c_{n\tau}+\tilde{f}_{n\tau} )\Bigg]   \\
&\stackrel{(c)}{=}\frac{K_R}{N F n_T}  \sum_{i=1}^{K_T}\frac{1}{i} b_i  \\
&\stackrel{(d)}{\geq} \frac{K_R}{N F n_T}  \frac{(\sum_{i=1}^{K_T} b_i)^2}{\sum_{i=1}^{K_T} i b_i}, \label{eq:deltas} 
\end{align}
\end{subequations} 
where inequality (a) follows from inequality \eqref{eq:deltaE}; equality (b) holds because, for any library file $W_n$, the number of different request vectors that include file $W_n$ is $K_R \pi(N-1,K_R-1)$, i.e., $\sum_{\dv} \sum_{j=1}^{K_R}  W_{d_j\tau}=K_R  \pi(N-1,K_R-1) \sum_{n=1}^{N} W_{n\tau}$, and hence we have 
\begin{subequations} \label{eq:cf}
\begin{align} 
\sum_{\dv} \sum_{j=1}^{K_R}  c_{d_j\tau}&=K_R  \pi(N-1,K_R-1) \sum_{n=1}^{N} c_{n\tau} \\
\text{and}~~ \sum_{\dv} \sum_{j=1}^{K_R} f_{d_j\tau}(\dv) &=\sum_{\dv} \sum_{n\in \dv} f_{n\tau}(\dv)=K_R\pi(N-1,K_R-1)\sum_{n=1}^{N}\tilde{f}_{n\tau}, \label{def:tildef}
\end{align} 
\end{subequations} 
where $\tilde{f}_{n\tau}=\sum_{\dv:n\in \dv} f_{n\tau}(\dv)/(K_R\pi(N-1,K_R-1))$ represents the number of packets in part $W_{n\tau}$ for each user in $\tau$ received from the cloud; equality (c) follows the definition 
\begin{align}
b_i =  \sum_{\substack {{\tau \subseteq [K_T]:} \\ {|\tau|=i}}} \sum_{n=1}^{N} (c_{n\tau}+\tilde{f}_{n\tau} ); \label{def:bi}
\end{align}
and inequality (d) applies the Cauchy-Schwarz inequality $(\sum_{i=1}^{n} u_iv_i)^2 \leq (\sum_{i=1}^{n} u_i^2) (\sum_{i=1}^{n} v_i^2$) by setting $u_i=\sqrt{b_i/i}$ and $v_i=\sqrt{i b_i}$.

To compute the term $\sum_{i=1}^{K_T} b_i$ in \eqref{eq:deltas}, we impose the constraint \eqref{con:1}, obtaining  
\begin{subequations}  \label{eq:fileav}
\begin{align}
\pi(N,K_R) K_R F & \stackrel{(a)}{=}\sum_{\dv}\sum_{n\in\dv}\sum_{i=1}^{K_T}  \sum_{\substack {{\tau \subseteq [K_T]:} \\ {|\tau|=i}}} (c_{n\tau}+f_{n\tau}(\dv))  \\
& \stackrel{(b)}{=}K_R \pi(N-1,K_R-1) \sum_{i=1}^{K_T} \sum_{\substack {{\tau \subseteq [K_T]:} \\ {|\tau|=i}}}\sum_{n=1}^{N} (c_{n\tau}+\tilde{f}_{n\tau})  \\
& \stackrel{(c)}{=}K_R \pi(N-1,K_R-1) \sum_{i=1}^{K_T} b_i,
\end{align}
\end{subequations}
where equality (a) holds by summing up the constraints in \eqref{con:1} for all $\pi(N,K_R)$ request vectors and for all $K_R$ files in each vector $\dv$; and equalities (b) and (c) follow from the equalities in \eqref{eq:cf} and the definition of $b_i$ in \eqref{def:bi}, respectively. From \eqref{eq:fileav}, we have the equality
$\sum_{i=1}^{K_T} b_i=NF$.

We move on to lower bound the second term in \eqref{con:21}, i.e., the minimum fronthaul NDT $\delta_F^*(\dv)$. We start by bounding the size of the cached content. From \eqref{con:2}, we have 
\begin{subequations}
\begin{align} \label{cons:cache}
\mu FN K_T & \stackrel{(a)}{\geq}\sum_{i=1}^{K_T} \sum_{n=1}^{N} \sum_{\substack {{\tau \subseteq [K_T]:} \\ {i\in \tau}}} c_{n\tau}=   \sum_{n=1}^{N} \sum_{i=1}^{K_T} \sum_{\substack {{\tau \subseteq [K_T]:} \\ {i\in \tau}}} c_{n\tau} \notag \\
& \stackrel{(b)}{=} \sum_{n=1}^{N}  \sum_{i=1}^{K_T} i \sum_{\substack {{\tau \subseteq [K_T]:} \\ {|\tau|=i}}} c_{n\tau} =  \sum_{i=1}^{K_T}i  \sum_{\substack {{\tau \subseteq [K_T]:} \\ {|\tau|=i}}}\sum_{n=1}^{N}  c_{n\tau},
\end{align}
\end{subequations}
where inequality (a) holds by summing the inequalities in \eqref{con:2} for all the $K_T$ ENs; and equality (b) comes from the fact that the size of the cached content of a file $W_n$ across the ENs is given as $\sum_{i=1}^{K_T} \sum_{\tau: i\in \tau} c_{n\tau} = \sum_{i=1}^{K_T} i \sum_{\tau:|\tau|=i}c_{n\tau}$.

With the above inequality, the minimum fronthaul NDT can be bounded as 
\begin{subequations} \label{low:df}
\begin{align} \label{con:deltaf}
\frac{1}{\pi(N,K_R)}\sum_{\dv} \delta_F^*(\dv)  &\stackrel{(a)}{\geq}\frac{1}{\pi(N,K_R)} \sum_{\dv} \frac{1}{K_T}\sum_{i=1}^{K_T}\frac{1}{Fr} \sum_{n \in \dv} \sum_{\substack {{\tau \subseteq [K_T]:} \\ {i\in \tau}}}f_{n\tau}(\dv)   \\
                          &\stackrel{(b)}{=} \frac{1}{\pi(N,K_R)}\frac{1}{K_TFr}\sum_{\dv} \sum_{n \in \dv} \sum_{i=1}^{K_T} i \sum_{\substack {{\tau \subseteq [K_T]:} \\ {|\tau|=i}}}f_{n\tau}(\dv)   \\
													&\stackrel{(c)}{=} \frac{K_R}{N K_T Fr} \sum_{i=1}^{K_T} i \sum_{\substack {{\tau \subseteq [K_T]:} \\ {|\tau|=i}}} \sum_{n=1}^{N}  \tilde{f}_{n\tau} 	\\
													&\stackrel{(d)}{=} \frac{K_R}{N K_T Fr} \sum_{i=1}^{K_T} i \Bigg(b_i-\sum_{\substack {{\tau \subseteq [K_T]:} \\ {|\tau|=i}}}\sum_{n=1}^{N}  c_{n\tau} \Bigg)  \\
													&\stackrel{(e)}{\geq} \frac{K_R}{ K_T r} \Bigg(\frac{1}{NF}\sum_{i=1}^{K_T} i b_i-\mu K_T\Bigg),
\end{align}
\end{subequations}
where inequality (a) holds by averaging the constraints in \eqref{con:3}; equality (b) follows in a manner similar to equality (b) in \eqref{cons:cache}; equalities (c) and (d) follow the equality in \eqref{def:tildef} and the definition of $b_i$ in \eqref{def:bi}, respectively; and inequality (e) holds by using \eqref{cons:cache}.

Now we can bound the minimum NDT by using \eqref{eq:deltas}, \eqref{eq:fileav} and \eqref{con:deltaf} as 
\begin{subequations}
\begin{align}
&\frac{1}{\pi(N,K_R)} \sum_{\dv} \delta^*_E\big(\{c_{n\tau}\}, \{f_{n\tau}(\dv)\},\dv \big)+ \delta_F^*(\dv)   \\
&\geq \frac{K_R}{N F n_T}  \frac{(NF)^2}{\sum_{i=1}^{K_T} i b_i}+ \frac{K_R}{ K_T r} \Bigg(\frac{1}{NF}\sum_{i=1}^{K_T} i b_i-\mu K_T\Bigg) \\
&= \frac{K_R (x-\mu K_T)}{K_T r } + \frac{K_R}{n_T} \frac{1}{x}, \label{final:x}
\end{align}
\end{subequations}
where in the last step, we have defined the variable $x=\sum_{i=1}^{K_T} i b_i/(NF)$. Since, by \eqref{def:bi}, the expression $\sum_{i=1}^{K_T} i b_i$ is the overall number of packets of all library files that are available upon fronthaul transmission at subsets of ENs of any size $i$, the variable $x$ can be interpreted as the average multiplicity of each file at the ENs after fronthaul transmission. 

From \eqref{final:x}, we define the function 
\begin{align} \label{def:fx}
f(x)=  \frac{K_R (x-\mu K_T)}{K_T r } + \frac{K_R}{n_T} \frac{1}{x}.
\end{align} 
To complete the proof, we now minimize $f(x)$ in \eqref{def:fx} over $x$. To this end, we first focus on defining the domain of $x$. From \eqref{con:5} and \eqref{con:deltaf}, we have the bounds $K_R(x-\mu K_T)/(K_T r) \leq  \delta_F(\dv)\leq\delta_{Fmax}$, yielding the upper bound $x\leq (m_{max}-\mu K)^++\mu K_T=\max\{m_{max},\mu K_T\}$. We also have the inequality $x \geq \mu K_T$ due to the bound $\delta_F^*(\dv)\geq 0$.
Furthermore, from \eqref{eq:fileav}, we have the inequality $\sum_{i=1}^{K_T} b_i/NF \geq 1$, yielding $x=\sum_{i=1}^{K_T} i b_i/NF \geq 1$. In summary, variable $x$ needs to lie in the interval $\max\{1,\mu K_T\}=x_{min}\leq x\leq x_{max}=\max\{m_{max},\mu K_T\}$. We then turn to minimizing the function $f(x)$ in the interval $x\in [x_{min}, x_{max}]$. Function $f(x)$ is convex for $x>0$, and the only stationary point is $x=\sqrt{K_T r/n_T}$, i.e., $f'\big(\sqrt{K_T r/n_T}\big)=0$. Therefore, the desired minimum $f_{min}$ is given as 
\begin{align} 
f_{min}= \left\{
\begin{array}{ll} 
f\big(\sqrt{K_T r/n_T}\big), &  \text{if}~ x_{min}\leq \sqrt{K_T r/n_T} \leq x_{max}\\
       \min\{f(x_{min}), f(x_{max})\}, & \text{otherwise},
\end{array} 
\right.
\end{align}
which is as reported in \eqref{eq:fmin}.

\section{Proof of Proposition~\ref{pro:gap}} \label{gap}

To prove Proposition~\ref{pro:gap}, we first derive a lower bound $\delta'_{lb}(\mu,r)$, which is looser than the lower bound $\delta_{lb}(\mu,r)$ in Proposition 2 but more tractable. The bound leverages Proposition~\ref{pro:lower}, Proposition~\ref{pro3} and the convexity of the minimum NDT $\delta^*(\mu,r)$ as stated in Lemma 1. The lower bounds $\delta_{lb}(\mu,r)$ and $\delta'_{lb}(\mu,r)$ are illustrated in Fig.~\ref{fig:minb}.

\begin{lemma} \label{lem:lbp}
For any $r \in [0,1]$, and $\mu$ with $\mu K_T\leq m_{max}$, we have $\delta'_{lb}(\mu,r)\leq \delta_{lb}(\mu,r)$, where $\delta_{lb}(\mu,r)$ is given in \eqref{eq:pro2} and we have defined  
\begin{align}  \label{eq:delpub}
\delta'_{lb}(\mu,r)= \frac{(i+2-\mu K_T) K_R}{(i+1)n_T}+  \frac{(\mu K_T-i-1)K_R}{(i+2)n_T} 
\end{align}
for $\mu K_T\in [i,i+1)$, with $m(r) \leq i\leq m_{max}-1$; and
\begin{subequations} \label{lem:dp}
  \begin{align}[left ={\delta'_{lb}(\mu, r)= \empheqlbrace}]
\mbox{\normalsize\(	\frac{K_R(m^*(r)-\mu K_T)}{K_T r}+\frac{K_R}{m^*(r) n_T}\)},  	  ~~~& \text{if}~ \mbox{\small \(\frac{K_T r}{n_T}\)}\in[ (m(r)-0.5)^2,m^2(r)] \label{eq:dp1} \\
\mbox{\normalsize\( \frac{K_R(m(r)-\mu K_T)}{K_T r} +\delta'_{lb}(\frac{m(r)}{K_T},r) \)},   ~& \text{if}~\mbox{\small \(\frac{K_T r}{n_T}\)}\in[m^2(r),(m(r)+0.5)^2] \label{eq:dp2}
\end{align}
\end{subequations}
for $\mu K_T \leq m(r)$, where $m^*(r)$ is given in \eqref{mmins}.
\end{lemma}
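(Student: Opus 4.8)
Below is how I would attack the lemma; everything is stated as a plan.

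\medskip

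The plan is to squeeze $\delta'_{lb}(\mu,r)$ under $\delta_{lb}(\mu,r)$ through one auxiliary convex curve. Writing $t=\mu K_T$ and $g(t)=K_R/(t\,n_T)$, I would first record from Proposition~\ref{pro:lower} that, on $0\le t\le m_{max}$,
$\delta_{lb}(\mu,r)=\max\{h(t),1\}$, where $h(t)=h_1(t):=\frac{K_R(m^*(r)-t)}{K_T r}+\frac{K_R}{m^*(r)n_T}$ for $t\le m^*(r)$ (the first branch of \eqref{eq:pro21}) and $h(t)=g(t)$ for $t\ge m^*(r)$ (the branch \eqref{eq:pro22}), with $m^*(r)$ as in \eqref{mmins}. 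The key observation is that when $m^*(r)=\sqrt{K_T r/n_T}$ one has $(m^*(r))^2=K_T r/n_T$, hence $g'(m^*(r))=-K_R/(K_T r)$, so $h_1$ is exactly the tangent line of the convex function $g$ at $t=m^*(r)$; in the boundary situations $m^*(r)=1$ (when $\sqrt{K_T r/n_T}<1$, so $m(r)=m^*(r)=1$) and $m^*(r)=m_{max}$ (when $r\ge r_{th}$, \eqref{rth}, which under the hypothesis $r\le1$ needs $r_{th}\le1$, and forces $m(r)=m^*(r)=m_{max}$) the line $h_1$ still lies on or below $g$ on the range of interest and $\delta'_{lb}$ degenerates to $h_1=h$. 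Since $\delta_{lb}(\mu,r)\ge h(t)$ always, and since whenever $\delta_{lb}(\mu,r)=1$ the bound $\delta'_{lb}(\mu,r)\le h(t)$ already gives $\delta'_{lb}(\mu,r)\le1$, it suffices to prove $\delta'_{lb}(\mu,r)\le h(t)$ for all $t\in[0,m_{max}]$.

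Next I would treat the intervals $t\in[i,i+1)$ with $m(r)\le i\le m_{max}-1$, where $\delta'_{lb}$ is \eqref{eq:delpub}. That expression is precisely the secant line of $g$ through $(i+1,g(i+1))$ and $(i+2,g(i+2))$ — the abscissae at which the exact characterization of Proposition~\ref{pro3} is available — evaluated at $t$. Convexity of $g$ and $t\le i+1$ give $\delta'_{lb}(\mu,r)\le g(t)$, so on the sub-range $t\ge m^*(r)$ we are done since $h(t)=g(t)$. The only remaining sub-range is $t<m^*(r)$; because $t\ge i\ge m(r)$ and always $m(r)\le m^*(r)\le m(r)+\tfrac12$, this can occur only for $i=m(r)$ with $m^*(r)\in(m(r),m(r)+\tfrac12]$. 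There $h(t)=h_1(t)$ is the tangent to $g$ at $m^*(r)$; the secant is $\le g=h_1$ at $t=m^*(r)$ (since $h_1(m^*(r))=g(m^*(r))$ and $m^*(r)<m(r)+1$), and comparing the two affine functions at $t=m(r)$ reduces, after cancelling $K_R/n_T$, to the elementary inequality $\frac{a+3}{(a+1)(a+2)}\le\frac{2s-a}{s^{2}}$ with $a=m(r)\ge1$, $s=m^*(r)\in[a,a+\tfrac12]$, which holds because the right-hand side is non-increasing in $s$ on $[a,a+\tfrac12]$ and still dominates the left-hand side at $s=a+\tfrac12$. Being affine and below $h_1$ at both endpoints, the secant stays below $h_1$ on all of $[m(r),m^*(r))$.

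For $t\le m(r)$ there are two sub-cases. In \eqref{eq:dp1}, where $m^*(r)=\sqrt{K_T r/n_T}\le m(r)$, one has $\delta'_{lb}(\mu,r)=h_1(t)$, which equals $h(t)$ for $t\le m^*(r)$ and, being the tangent line of $g$, satisfies $h_1(t)\le g(t)=h(t)$ on $m^*(r)\le t\le m(r)$. In \eqref{eq:dp2}, where $m^*(r)=\sqrt{K_T r/n_T}\ge m(r)$, the function $\delta'_{lb}(\mu,r)$ is affine in $t$ with slope $-K_R/(K_T r)$, the same slope as $h_1$, so $\delta'_{lb}(\mu,r)-h_1(t)$ is the constant $\delta'_{lb}(m(r)/K_T,r)-h_1(m(r))\le0$ by the same elementary inequality (here $\delta'_{lb}(m(r)/K_T,r)$ is the secant value from \eqref{eq:delpub} at $t=m(r)$); since $t\le m(r)\le m^*(r)$ gives $h(t)=h_1(t)$, we get $\delta'_{lb}(\mu,r)\le h(t)$. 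Assembling the three ranges yields $\delta'_{lb}(\mu,r)\le h(t)\le\delta_{lb}(\mu,r)$ for all admissible $\mu,r$, which is the claim.

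The step I expect to be the real work is the bookkeeping around the kink of $\delta_{lb}$ at $t=m^*(r)$: one must see that the only $\delta'_{lb}$-piece that could a priori poke above the affine branch $h_1$ lives on the short interval $[m(r),m^*(r))$, which is nonempty only when $\sqrt{K_T r/n_T}$ rounds down, and there the comparison comes down to the single elementary inequality above; once that is in hand, everything else reduces to convexity of $t\mapsto K_R/(t\,n_T)$ and the explicit form of $\delta_{lb}$ from Proposition~\ref{pro:lower}. The convexity of $\delta^*(\mu,r)$ in $\mu$ furnished by the l.c.e.\ construction following \eqref{def:delta} can be invoked to shorten some of these comparisons, but is not strictly needed.
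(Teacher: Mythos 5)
Your argument is correct, and I checked the one computational pivot it rests on: with $a=m(r)\ge 1$ and $s=m^*(r)\in[a,a+\tfrac12]$, the comparison of the chord value $\tfrac{K_R}{n_T}\tfrac{a+3}{(a+1)(a+2)}$ against the tangent value $\tfrac{K_R}{n_T}\tfrac{2s-a}{s^2}$ does reduce to $\tfrac{a+3}{(a+1)(a+2)}\le\tfrac{2s-a}{s^2}$, and since $(a+1)^2(a+2)-(a+3)(a+\tfrac12)^2=\tfrac{7}{4}a+\tfrac{5}{4}>0$ the inequality holds at $s=a+\tfrac12$ and hence on the whole interval by monotonicity of the right-hand side. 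However, your route is genuinely different from the paper's. You prove the literal claim $\delta'_{lb}\le\delta_{lb}$ by a direct comparison of the explicit formulas, exploiting that the two branches \eqref{eq:pro21}--\eqref{eq:pro22} of $\delta_{lb}$ are, up to the maximum with one, the tangent line at $t=m^*(r)$ and the graph of the convex function $g(t)=K_R/(tn_T)$, while the pieces \eqref{eq:delpub}, \eqref{eq:dp1}, \eqref{eq:dp2} of $\delta'_{lb}$ are secants or parallels of that same $g$; everything then follows from convexity of $g$ plus the single elementary inequality above. The paper instead anchors the argument on the convexity of the minimum NDT $\delta^*(\mu,r)$ in $\mu$ and on the exact characterization of Proposition~\ref{pro3} at the integer cache sizes $\mu K_T\in\{m(r)+1,\dots,m_{max}\}$, and uses subgradient inequalities to extrapolate $\delta^*$ linearly to the left of those points; only in its Case 1 does it compare against $\delta_{lb}$ directly. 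Strictly speaking, the paper's subgradient steps establish $\delta'_{lb}\le\delta^*$ (which is what Proposition~\ref{pro:gap} actually consumes) rather than the stated $\delta'_{lb}\le\delta_{lb}$, whereas your proof gives the stated inequality verbatim and then $\delta'_{lb}\le\delta^*$ follows from Proposition~\ref{pro:lower}; on the other hand, your proof is purely computational and does not expose the interpretation of \eqref{eq:delpub} as a chord of the optimal NDT between consecutive exactly-characterized points. The only loose ends in your write-up are inherited from the paper's own statement (e.g.\ the self-referential definition of $\delta'_{lb}(m(r)/K_T,r)$ in \eqref{eq:dp2} when $m(r)=m_{max}$, and values of $K_Tr/n_T$ below $(m(r)-0.5)^2$), so they do not constitute a gap in your argument.
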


\begin{proof} 
From Proposition~\ref{pro3}, we have the equality $\delta^*(\mu,r)=\delta_{ach}(\mu,r)$ for $\mu K_T\in \{m(r)+1,\cdots,m_{max}\}$. Furthermore, we know that the minimum NDT $\delta^*(\mu,r)$ is a convex function of $\mu$ for any $r\geq 0$. Define $g_r(\mu)$ as a subgradient of the minimum NDT $\delta^*(\mu,r)$ at $\mu \in [0,1]$ for a fixed value of $r$. Consider any two points $\mu_{1}$ and $\mu_{2}$, where $\mu_{1} K_T\in \{m(r)+1,\cdots,m_{max}\}$ and $\mu_2$ is arbitrary. By a known convex property of convex functions (see~\cite{BD}), we have the inequality 
\begin{align}
\delta^*(\mu_2,r) \geq g_r(\mu_1)(\mu_2-\mu_1)K_T+ \delta^*(\mu_1,r). \label{subg}
\end{align} 
Therefore, choosing $\mu_{2}$ so that $\mu_{2}K_T=\mu_1 K_T+1$ in \eqref{subg} yields
\begin{align}
g_r(\mu_1)\leq \delta^*(\mu_1+1/K_T,r)-\delta^*(\mu_1,r). \label{dom:g}
\end{align} 

\begin{figure}[t!] 
  \centering
\includegraphics[width=0.4\columnwidth]{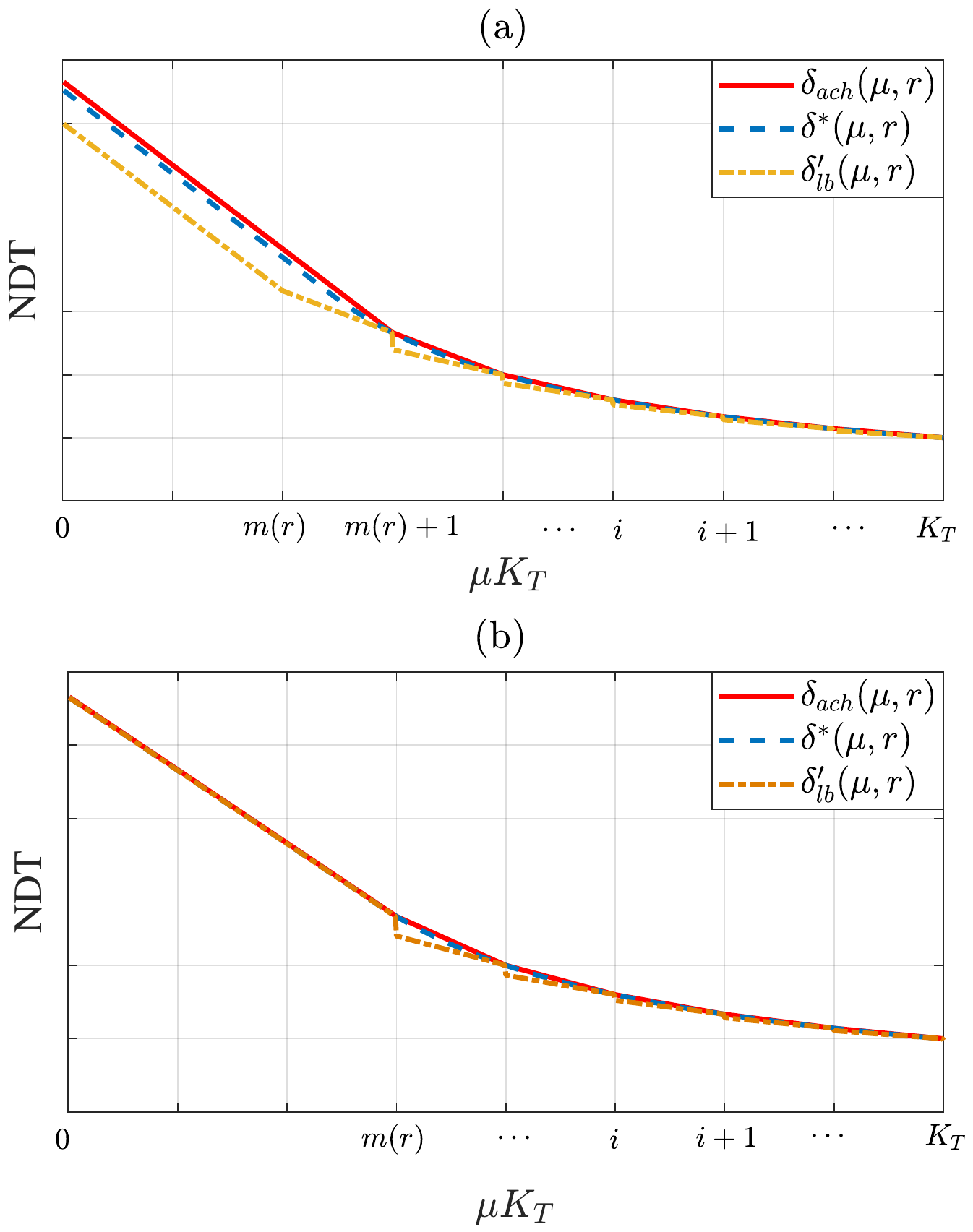}
\caption{Achievable NDT $\delta_{ach}(\mu,r)$ and lower bounds $\delta_{lb}(\mu,r)$ and $\delta'_{lb}(\mu,r)$: plot (a) shows Case 1 \eqref{eq:dp1}, plot (b) shows Case 2 \eqref{eq:dp2}.}
\label{fig:minb}
\end{figure}
For any sub-interval $\mu K_T \in [i, i+1)$, with $m(r) \leq i\leq m_{max}-1$, by setting $\mu_1=(i+1)/K_T$ in \eqref{dom:g}, we have the bound $g_r((i+1)/K_T)\leq g_{max}\defeq \delta^*((i+2)/K_T,r)-\delta^*((i+1)/K_T,r)$. Combining with \eqref{subg} and setting $\mu_2=\mu$, we have the inequality $\delta^*(\mu,r)\geq g_{max} \cdot(\mu K_T-i-1) + \delta^*((i+1)/K_T,r)$, which gives \eqref{eq:delpub} by Lemma 1. 

For the remaining interval $\mu K_T\leq m(r)$, we distinguish the two cases illustrated in Fig.~\ref{fig:minb}(a) and (b).

\emph{Case 1: $K_T r/n_T \in [(m(r)-0.5)^2, m(r)^2 ]$.} By \eqref{mmin} and \eqref{mmins} in this range, we have the inequality $m^*(r)=\sqrt{K_T r/n_T}\leq m(r)$. It can be directly verified that $\delta'_{lb}(\mu,r)$ in \eqref{eq:dp1} is no larger than $\delta_{lb}(\mu,r)$ in \eqref{eq:pro21} for $\mu K_T\leq m^*(r)$. Instead, for $\mu K_T \in[m^*(r), m(r)]$, 
since both $\delta'_{lb}(\mu,r)$ in \eqref{eq:dp1} and $K_R/(\mu K_T n_T)$ are decreasing functions of $\mu$, they are equal for $\mu K_T=m^*(r)$, and the former has a smaller gradient for the whole range of value of $\mu$ at hand, we have $\delta'_{lb}(\mu,r)\leq K_R/(\mu K_T n_T)$, which implies that $\delta'_{lb}(\mu,r)\leq \delta^*(\mu,r)$ in \eqref{eq:pro22}, as illustrated in Fig.~\ref{fig:minb}(a). 

\emph{Case 2: $K_T r/n_T \in [m(r)^2,(m(r)+0.5)^2 ]$.} By \eqref{mmin} and \eqref{mmins} in this range, we have the inequality $m^*(r)=\sqrt{K_T r/n_T}\geq m(r)$. By setting $\mu_1=(m(r)+1)/K_T$ in \eqref{dom:g}, we have $g_r((m(r)+1)/K_T)\leq g'_{max}\defeq \delta^*((m(r)+2)/K_T,r)-\delta^*((m(r)+1)/K_T,r)$. Combining with \eqref{subg} and setting $\mu_2=m(r)/K_T$, we have the inequality $\delta^*(m(r)/K_T,r)\geq -g'_{max}  + \delta^*((m(r)+1)/K_T,r)$, which gives the lower bound $\delta'_{lb}(m(r)/K_T,r)$. It is easy to verify the inequality $\delta'_{lb}(m(r)/K_T,r) \leq \delta_{lb}(m(r)/K_T,r)$. Combining this with the fact that $\delta'_{lb}(\mu,r)$ in \eqref{eq:dp2} and $\delta_{lb}(\mu,r)$ in \eqref{eq:pro21} are linear and parallel for $\mu K_T\leq m(r)$, we have $\delta'_{lb}(m(r)/K_T,r) \leq \delta_{lb}(\mu,K_T)$ in this range (see Fig.~\ref{fig:minb}(b)). This completes the proof.
\end{proof}

Using the lower bound $\delta'_{lb}(\mu,r)$, we can now directly compute the gap between the achievable NDT $\delta_{ach}(\mu,r)$ in Proposition 1 and the minimum NDT $\delta^*(\mu,r)$. Specifically, for $\mu K_T \in [i, i+1)$, with $m(r) \leq i\leq m_{max}-1$, from \eqref{ach:smallr} and \eqref{eq:delpub}, we verify that   
\begin{align}
\frac{\delta_{ach}(\mu,r)}{\delta'_{lb}(\mu,r)} \stackrel{(a)}{\leq} \frac{\delta_{ach}(\mu=i/K_T,r)}{\delta'_{lb}(\mu=i/K_T,r)} = 1+\frac{2}{i+3i}\leq \frac{3}{2},
\end{align}
where inequality (a) holds because $\delta_{ach}(\mu,r)$ and $\delta'_{lb}(\mu,r)$ are both linearly decreasing and they coincide at the endpoint $\mu K_T=i+1$. For $\mu K_T\leq m(r)$ in Case 1, from \eqref{ach:smallr} and  \eqref{eq:dp1}, the gap is given as 
\begin{subequations} 
\begin{align}
\frac{\delta_{ach}(\mu,r)}{\delta'_{lb}(\mu,r)} &\stackrel{(a)}{\leq} \frac{\delta_{ach}(\mu=m(r)/K_T,r)}{\delta'_{lb}(\mu=m(r)/K_T,r)} \label{final00} \\
                                                 &=\frac{1/m(r)n_T}{(m^*(r)-m(r))/(K_T r)+1/(m^*(r)n_T)} \\
																								& \stackrel{(b)}{\leq} \frac{m(r)}{\sqrt{m(r)(m(r)-1)}} \stackrel{(c)}{\leq} \sqrt{2},
\end{align}
\end{subequations}
where inequality (a) holds because $\delta_{ach}(\mu,r)$ and $\delta'_{lb}(\mu,r)$ decrease with the same slope and the maximum ratio is at the endpoint $\mu K_T=m(r)$; inequality (b) holds due to the constraints $m^*(r) \in [\sqrt{ m(r)( m(r)-1)}, m(r)]$; and inequality (c) holds for any $m(r)\geq 2$, while for $m(r)=1$, we have $K_T r/n_T \in[0,1]$ and $\mu\in[0,1]$, it has been proved that $\delta_{ach}(\mu,r)$ is optimal in Proposition \ref{pro3}. Finally, for $\mu K_T\leq m(r)$ in Case 2, from \eqref{ach:smallr} and  \eqref{eq:dp2}, the gap is given as 
\begin{align}
\frac{\delta_{ach}(\mu,r)}{\delta'_{lb}(\mu,r)}\stackrel{(a)}{\leq} \frac{\delta_{ach}(\mu=m(r)/K_T,r)}{\delta'_{lb}(\mu=m(r)/K_T,r)} = 1+\frac{2}{m^2(r)+3m(r)}\leq \frac{3}{2},
\end{align}
where inequality (a) holds as inequality (a) in \eqref{final00}, completing the proof.

\section{Proof of Proposition~\ref{pro:NDTpmin}} \label{sec:proofboundp}

Any achievable pipelined policy $\{\mathcal{C}_i, \{\mathcal{F}_i(b)\}_{b\in[B]}, \{\vv_{inf}(b)\}_{n\in[N], f\in[F],b\in[B]}\}_{i=1}^{K_T}$ can be converted into a serial policy with parameters $\{\mathcal{C}_i, \mathcal{F}_i, \{\vv_{inf}(b)\}_{n\in[N], f\in[F],b\in[B]}\}_{i=1}^{K_T}$, where $\mathcal{F}_i=\{\mathcal{F}_i(b)\}_{b\in[B]}$. In words, in the serial policy, all fronthaul transmission takes place prior to edge communications. Using the definitions in \eqref{def:TFP}-\eqref{def:TEP}, the fronthaul and edge latencies of the serial policy are given by $T_F= \sum_{b=1}^{B}T_F(b)$ and $T_E=\sum_{b=1}^{B}T_E(b)$. Furthermore, by the definition \eqref{pt}, we have the inequalities
\begin{align} \label{bound:TP}
T_P =\sum_{b=1}^{B} \max\{T_E(b), T_F(b)\} \geq \max\Big\{\sum_{b=1}^{B}T_E(b),\sum_{b=1}^{B}T_F(b) \Big\}\geq \max\{T_E, T_F\}. 
\end{align}
Finally, from \eqref{bound:TP}, we have the inequality 
\begin{align}
\delta_P\geq \max\{\delta_E, \delta_F\}, \label{pro:p}
\end{align} 
where $\delta_F$ and $\delta_E$ are the fronthaul and edge NDTs of the discussed serial policy. As a summary, any achievable pipelined NDT is lowered bounded by the maximum of the fronthaul and edge NDTs of the converted serial policy. 

Recall that, in Appendix~\ref{lemma2}, the fronthaul and edge NDTs under any serial policy are found to be lower bounded as \eqref{low:df} and \eqref{low:de}, respectively, which yields the inequalities
\begin{align} \label{low:p}
\delta_{E} \geq \frac{K_R}{n_T x}~\text{and}~\delta_{F} \geq \frac{K_R(x-\mu K_T)}{K_T r},
\end{align}
where $x=\sum_{i=1}^{K_T} i b_i/(NF)$ takes values in the interval $x\in [x_{min}, x_{max}]$ with $x_{min}=\max\{1,\mu K_T\}$ and $x_{max}=\max\{m_{max},\mu K_T\}$. To proceed, we define the two functions 
\begin{align} \label{fs}
f_1(x)=\frac{K_R}{n_T x}~\text{and}~f_2(x)=\frac{K_R(x-\mu K_T)}{K_T r}.
\end{align}
As a result, from \eqref{pro:p}, \eqref{low:p} and \eqref{fs}, the minimum pipelined NDT $\delta_P^*$ can be bounded as 
\begin{align}
\delta^*_{P} \geq \max\{f_1(x),f_2(x)\}. \label{low:p1}
\end{align}

To complete the proof, we now minimize the function $f(x)=\max\{f_1(x),f_2(x)\}$ in the interval $x\in [x_{min}, x_{max}]$. Function $f(x)$ is convex for $x > 0$, and the only point whose subdifferential $\partial f(x)$ includes 0 is $x=m_p^*$, i.e., $0 \in \partial f(m_p^*)$. Hence, from~\cite{BD}, the minimum value $f_{p,min} $ of $f(x)$ is given as
\begin{align} \label{fminp}
f_{p,min}= \left\{
\begin{array}{ll} 
f(m_p^*), &  \text{if}~ x_{min}\leq m_p^* \leq x_{max}\\
       \min\{f(x_{min}), f(x_{max})\}, & \text{otherwise},
\end{array} 
\right.
\end{align}
which equals to $\delta_{p,lb}(\mu,r)$ in \eqref{deltamin}.

\section{Proof of Proposition~\ref{pro:gapp}} \label{sec:proofgapp}
We now bound the gap between the achievable pipelined NDT $\delta_{p,ach}(\mu,r)$ in \eqref{achp} and the minimum NDT $\delta_{p}^*(\mu,r)$ by using the lower bound $\delta_{p,lb}(\mu,r)$ in \eqref{deltamin}. There are two cases in terms of $\mu K_T$. When $\mu K_T \leq 1-K_T r/n_T$, we have the equality $\delta_{p,ach}(\mu,r)=\delta_{p,lb}(\mu,r)$ by comparison, indicating that the achievable NDT is optimal and the gap is 1. 

We move to the case $\mu K_T \geq 1-K_T r/n_T$, which corresponds to $m_p^*\geq 1$ in \eqref{def:mps}. Directly from \eqref{mp} and \eqref{mpins}, we can obtain the multiplicities $m_p(\mu,r)=\min\{\lfloor m_p^*\rfloor, m_{max}\}$ and $m_p^*(\mu,r)=\min\{m_p^*,m_{max}\}$, respectively. By comparison, we have the inequality $m_p(\mu,r)\leq m_p^*(\mu,r)$. As a result, the gap can be bounded as
\begin{align}
\frac{\delta_{p,ach}(\mu,r)}{\delta_{p,lb}(\mu,r)} \stackrel{(a)}{\leq}\frac{\max\big\{ \frac{K_R}{m_p(\mu,r)n_T}, 1 \}}{\max\big\{ \frac{K_R}{m_p^*(\mu,r)n_T}, 1 \}} \stackrel{(b)}{\leq} \frac{m_p^*(\mu,r)}{m_p(\mu,r)} \stackrel{(c)}{\leq} \frac{m_p^*}{\lfloor m_p^* \rfloor} \stackrel{(d)}{\leq} 2,
\end{align}
where inequality (a) holds because $\delta_{p,ach}(\mu,r)$ is a lower convex envelope of $\max\big\{ K_R/(m_p(\mu,r)n_T), 1 \}$; inequality $(b)$ holds by considering the three different cases: $K_R/m_p(\mu,r)n_T \leq 1$, $K_R/m_p^*(\mu,r)n_T \geq 1$, and $K_R/m_p^*(\mu,r)n_T \leq 1 \leq K_R/m_p(\mu,r)n_T$, respectively, along with the fact that $K_R/(m_p(\mu,r)n_T)\geq  K_R/(m_p^*(\mu,r)n_T)$; inequality (c) holds by considering the three cases: $m_p^* \leq m_{max}$, $\lfloor m_p^*\rfloor\geq m_{max}$, and $\lfloor m_p^*\rfloor \leq m_{max} \leq m_p^*$; and inequality (d) holds because $m_p^* \geq 1$. This completes the proof. 
\end{appendices}

\section*{Acknowledgements}
Jingjing Zhang and Osvaldo Simeone have received funding from the European Research Council (ERC) under the European Union's Horizon 2020 Research and Innovation Programme (Grant Agreement No. 725731). The authors would like to thank Roy Karasik for useful comments. 

\bibliographystyle{IEEEtran}
\bibliography{IEEEabrv,final_refs}

\end{document}